\newtheorem{assumption}{Assumption} 
\newtheorem{theorem}{Theorem}
\newtheorem{lemma}[theorem]{Lemma} 
\newtheorem{proposition}[theorem]{Proposition}
\newtheorem{remark}[theorem]{Remark}
\newcommand{\E}{\mathbb{E}}
\newcommand{\Prob}{\mathbb{P}}
\newcommand{\Var}{\mathrm{Var}}
\author[1]{Cyrill Scheidegger}
\author[2]{Zijian Guo}
\author[1]{Peter B\"uhlmann}
\affil[1]{Seminar for Statistics, ETH Zurich}
\affil[2]{Center for Data Science, Zhejiang University}
\begin{document}

\title{Inference for Heterogeneous Treatment Effects with Efficient Instruments and Machine Learning}
\maketitle

\begin{abstract}
We introduce a new instrumental variable (IV) estimator for heterogeneous treatment effects in the presence of endogeneity. Our estimator is based on double/debiased machine learning (DML) and uses efficient machine learning instruments (MLIV) and kernel smoothing. We prove consistency and asymptotic normality of our estimator and also construct confidence sets that are more robust towards weak IV. Along the way, we also provide an accessible discussion of the corresponding estimator for the homogeneous treatment effect with efficient machine learning instruments. The methods are evaluated on synthetic and real datasets and an implementation is made available in the \texttt{R} package \texttt{IVDML}.
\end{abstract}

\section{Introduction}
A large part of research in causal inference and econometrics deals with estimation and inference for treatment effects based on observational data. As an example, suppose that we want to understand the effect of years of education on wages, as for example investigated by \cite{CardCollege}. In this case, a randomized controlled trial is infeasible, since one cannot randomly assign years of education to students and one needs to rely on observational data alone. With purely observational data, a major issue is the potential endogeneity of treatment and response. For example, if there are unmeasured confounding factors that influence both the years of education as well as the wage (for example what can be summarized by ``ability''), there will be endogeneity. A second issue that often appears is heterogeneity of the treatment effect. For example, the effect of education on wage could be different between men and women, in different geographical locations or it could depend on the age or the experience of the worker.

\subsection{Our Contribution}
We provide a contribution that deals with both of these issues and propose a novel procedure for estimation and inference of a heterogeneous treatment effect in the presence of endogeneity using an instrumental variable. Concretely, we consider the model
\begin{equation}\label{eq_Model}
    Y_i = \beta(V_i)D_i + g(X_i) + \epsilon_i, \, i =1,\ldots, N
\end{equation}
for a response variable $Y_i$, a univariate continuous covariate $V_i$, the treatment variable $D_i$, covariates $X_i$ and an error term $\epsilon_i$. We focus here on univariate $V_i$, but the extension to the multivariate case is conceptually straightforward, see also the discussion in Section \ref{sec_Discussion}.
The goal is to perform inference for $\beta(v)$ for some specific value $v$. The error $\epsilon_i$ satisfies $\E[\epsilon_i|X_i]= 0$ but we have endogeneity, meaning that $\E[\epsilon_i|D_i, X_i]\neq 0$. However, we assume that we have access to an instrumental variable $Z_i$ that satisfies $\E[\epsilon_i|Z_i,X_i]= 0$.
Our method is based on double/debiased machine learning \citep{ChernozhukovDML} and uses a kernel smoothing approach to model the heterogeneity in the continuous variable $V_i$. We give conditions under which our method yields consistency and asymptotically valid confidence sets.
{Our main focus is neither to provide generic meta-algorithms for heterogeneous treatment effect estimation in the spirit of \cite{NieQuasiOracle, FosterOrthogonalStatisticalLearning, SyrgkanisMLEstimationOfHTE} nor a detailed investigation of optimality properties \cite{KennedyTowardsOptimalDoublyRobustEstimation, KennedyMinimaxRates}. We rather want to introduce a simple, interpretable and easy-to-use method that works with user-chosen machine learning methods for nuisance function estimation. Nevertheless, }
many extensions of our method are possible by replacing kernel smoothing with other nonparametric regression or machine learning methods. Along the way, we also provide a discussion of some aspects of double machine learning with instrumental variables for homogeneous treatment effects (i.e., $\beta(\cdot) \equiv \beta$ is a constant) which, to our knowledge, have not yet been comprehensively discussed.
Concretely, we discuss the usage of machine learning instruments and weak-instrument-robust confidence intervals. Finally, we provide an implementation of our method in the R package \texttt{IVDML}, {which is available on CRAN \citep{IVDMLCRAN}.} {}

\subsection{Partially Linear IV Models}\label{sec_PLIVM}
If the treatment effect $\beta(\cdot)$ in \eqref{eq_Model} is constant (i.e., does not depend on $V_i$), the model reduces to a standard partially linear model with endogeneity (i.e. $\E[\epsilon_i|D_i, X_i]\neq 0$)
\begin{equation}\label{eq_ModelHom}
    Y_i = \beta D_i + g(X_i) + \epsilon_i, \, i = 1,\ldots, N,
\end{equation}
where we assume the availability of an instrument $Z_i$. In our context, an instrument $Z_i$ must satisfy $\E[D_i|Z_i, X_i]\neq\E[D_i|X_i]$ and $\E[\epsilon_i|Z_i, X_i] = 0$. The first condition is the relevance condition (the instrument is associated with the treatment conditionally on the covariates), and the second condition is the validity condition (the instrument is not associated with unmeasured confounders and affects the outcome only through the treatment conditionally on the covariates). Classical two-stage least squares instrumental variables theory, as for example discussed in many textbooks on econometrics, e.g. \cite{WooldridgeIntroductoryEconometrics, StockIntroductionToEconometrics}, assumes that $g(X_i)$ is a linear function in $X_i$ and utilizes only linear versions of relevance and validity. The validity of an instrument can in general not be tested but must be argued using domain knowledge. As noted for example by \cite{DieterleASimpleDiagnostic, ChenMostlyHarmlessML}, when researchers argue for the validity of instruments they often implicitly argue for mean independence and not for uncorrelatedness, so our stronger mean independence assumption is not too restrictive. On the other hand, the relevance condition $\E[D_i|Z_i, X_i]\neq\E[D_i|X_i]$ is weaker than the classical assumption of correlation between $D_i$ and $Z_i$.

There is a large body of work that considers the partially linear model \eqref{eq_ModelHom} in the endogenous setting, see for example \cite{OkuiDoublyRobustIVRegression, EmmeneggerRegularizingDML} for a review. 
Double/debiased machine learning (DML) \citep{ChernozhukovDML} provides a general framework for performing inference on low-dimensional parameters that are identified using moment restrictions in the presence of potentially high-dimensional nuisance functions. If one uses a score function that is Neyman-orthogonal and sample-splitting with cross-fitting, one can use estimators for the nonparametric part of the partially linear model that converge comparably slow, which opens the possibility to use many modern machine learning methods and potentially high-dimensional covariates. As one concrete application, \cite{ChernozhukovDML} apply their methodology to the partially linear IV model \eqref{eq_ModelHom}.
DML in the context of model \eqref{eq_ModelHom} has been studied more comprehensively by \cite{EmmeneggerRegularizingDML}, allowing for multivariate $D_i$ and $Z_i$, potential overidentification and using a regularization scheme to obtain better finite sample performance and smaller confidence intervals.

An omnipresent issue with IV estimators is that they have large variance if the instrument is not very strong. Under the mean independence assumption $\E[\epsilon_i|X_i, Z_i] = 0$, every transformation $\zeta(Z_i, X_i)$ of the instrument, and the covariates is also a valid instrument and one can search for the transformation that leads to the smallest variance of the estimator. This fits into the larger context of semiparametric efficiency, where one seeks the estimator with the smallest asymptotic variance \citep{VanDerVaartAsymptotic, TsiatisSemiparametric}.

The semiparametrically efficient estimator in the context of model \eqref{eq_ModelHom} is for example presented in \cite{VansteelandtImprovingRobustness}. While semiparametrically efficient estimators are appealing from a theoretical perspective, they are often not recommended in practice, since they are too complicated to be estimated well with finite samples, see for example \cite{YoungROSE} for a related discussion. 

For this reason, one often resorts to estimators that only achieve the semiparametric efficiency under the homoscedasticity assumption $\E[\epsilon_i^2| Z_i, X_i] = \mathrm{const}$. In this case, the transformation leading to the smallest asymptotic variance of the estimator is $\zeta(Z_i, X_i) = \E[D_i|Z_i, X_i]$, which means that one can learn the optimal instruments using machine learning \citep{BelloniSparseModels, ChenMostlyHarmlessML, LiuDeepIV, GuoTSCI}. 
Somewhat surprisingly, the solutions for model \eqref{eq_ModelHom} using DML, do not use these ideas but use the instruments linearly \citep{ChernozhukovDML, EmmeneggerRegularizingDML}. To our knowledge, the only place in the literature where double machine learning is combined with these efficient instruments is in \cite{AhrensDMLStata}, which presents a STATA implementation of DML that has one particular function that uses these efficient instruments (see Section 3.2 there), but the asymptotic results are not discussed in detail.

We will use the same idea of learning the efficient instruments from data and using DML for our treatment of the heterogeneous treatment effect in model \eqref{eq_Model}. Along the way, we will also provide some additional details for the homogeneous treatment effect in model \eqref{eq_ModelHom} that have not yet been comprehensively discussed in the literature.

{
\subsection{Varying Coefficient Models}
Another type of models related to \eqref{eq_Model} are \textit{varying coefficient models}, see, e.g., \cite{ClevelandLocalRegressionMOdels, HastieVaryingCoefficientModels, FanStatisticalEstimation, FanStatisticalMethods}.
These are linear models, where the coefficient vector can vary depending on an additional variable, i.e.,
\begin{equation}\label{eq_VarCoef}
   Y_i = \gamma(U_i)^TW_i + \epsilon_i. 
\end{equation}
Moreover, varying coefficient models have also been studied in the endogenous setting with instrumental variables \cite{CaiFunctionalCoefficientIVModels, CaiPartiallyVaryingCoefficientIVModels, SuLocalLinearGMMEstimation}.
Structurally, our model \eqref{eq_VarCoef} is related to model \eqref{eq_Model} by setting $U_i = (V_i, X_i)$, $W_i = (D_i, 1)$ and $\gamma(\cdot) = (\beta(\cdot), g(\cdot))$. However, there are important differences: we are only interested in conducting inference for the single component function $\beta(\cdot)$, whereas $g(\cdot)$ is a (potentially high-dimensional) nuisance function that allows to include additional control variables in a flexible fashion with arbitrary machine learning. We are not interested in estimating or conducting inference for the function $g(\cdot)$. In contrast, with varying coefficient models, typically, all the component functions are of equal interest and are modeled using low-dimensional techniques. Hence, we will look at model \eqref{eq_Model} not through the lens of varying coefficient models but heterogeneous treatment effects instead.}

\subsection{Heterogeneous Treatment Effects}
{When the heterogeneous treatment effect $\beta(\cdot)$ is of interest,}
there is a large body of works, that consider (variants of) model \eqref{eq_Model} but assume no endogeneity, that is $\E[\epsilon_i|D_i, X_i] = 0$ and there is no need for an instrument. An overview can be found in Chapters 14 and 15 of \cite{ChernozhukovCausalML}. Most works consider a binary treatment $D_i$ taking values in $\{0,1\}$. In that case, $\beta(V_i)$ in \eqref{eq_Model}, is the Conditional Average Treatment Effect (CATE) $\beta(V_i) = \E[Y_i(1)-Y_i(0)|V_i]$, where $Y_i(1)$ and $Y_i(0)$ are the potential outcomes of $Y_i$ that would be observed when the treatment is assigned to be $D_i=0$ or $D_i=1$, respectively \citep{ChernozhukovCausalML, RubinPotentialOutcomes}. We restrict ourselves here to the literature which is most related to our contribution.

A wide-spread approach for estimation and inference for the CATE is to express it as a conditional expectation $\beta(V_i) = \E[W_i(\eta_0)|V_i]$, where $W_i(\eta_0)$ is a random variable depending on the observed quantities and the unknown potentially high-dimensional nuisance parameter $\eta_0$ which has to be estimated. If $W_i(\eta_0)$ satisfies the Neyman orthogonality, one can use a DML approach with cross-fitting to obtain an estimate $W_i(\hat \eta)$, which is then regressed on $V_i$ \citep{ChernozhukovCausalML}. If $V_i$ is low-dimensional (as it is in our work), one can use standard methods and results for low-dimensional nonlinear regression to perform inference and estimation for the CATE. Such approaches are followed in \cite{SemenovaDMLCATE} using series expansion, \cite{FanEstimationOfCATE} using local linear regression, and \cite{ZimmertNonparametricEstimation} using locally constant regression. In \cite{WagerEstimationInferenceHTE}, a modified version of the random forest algorithm is proposed, allowing to perform inference on heterogeneous treatment effects based on a similar idea. Another approach is to identify the heterogeneous treatment effect $\beta(V_i)$ by a conditional moment restriction. This approach is particularly popular with random forest based methods, see \cite{AtheyGRF, OprescuORF}.
The above works mostly consider binary treatment $D_i$. Moreover, they do not cover the endogenous case, with the exception of \cite[Sec.7]{AtheyGRF}, who apply their method to instrumental variables regression.
Perhaps most related to our method is \cite{SyrgkanisMLEstimationOfHTE} which considers the endogenous setting and uses instrumental variables. The method also starts from a conditional moment equation similar to ours and then constructs a least squares objective which can be minimized by a user-chosen machine learning algorithm.
Whereas the problem setup is the same, our method complements the work by \cite{SyrgkanisMLEstimationOfHTE} in several ways. To summarize, our approach is tailored to the special case of a treatment effect that depends smoothly on a univariate quantity, whereas their approach is more generic. With our method, we are able to perform inference for the treatment effect $\beta(v)$ at a concrete value $v$ and we extensively discuss the efficiency and both standard and weak-instrument robust inference. In contrast, \cite{SyrgkanisMLEstimationOfHTE} only briefly touch on inference for projections of the heterogeneous treatment effect on a simple space (e.g., the heterogeneity can be approximated by a linear function of a low number of covariates) and they do not explicitly discuss efficiency.

\subsection{Outline}
The rest of the paper is structured as follows. In Section \ref{sec_HomTE}, we focus on the homogeneous treatment effect with machine learning instruments. This section mainly serves as a preparation and motivation for the following discussion of the heterogeneous treatment effect, but we also emphasize some aspects that have not been discussed in detail until now, most notably the construction of weak-instrument robust confidence sets. In Section \ref{sec_HetTE}, we consider the heterogeneous treatment effect model and introduce our estimator. Moreover, we give conditions under which we can perform inference for the heterogeneous treatment effect. In Section \ref{sec_Applications}, we perform a simulation study and apply our method to two well-known instrumental variable datasets. Proofs and some additional simulation results can be found in the appendix.

\subsection{Notation}
For two sequences $(a_n)_{n\in \mathbb N}$ and $(b_n)_{n\in \mathbb N}$ of positive real numbers, we write $a_n\lesssim b_n$ if there exists a constant $C> 0$ such that for all $n\in \mathbb N$, $a_n\leq C b_n$. We write $a_n \asymp b_n$ if $a_n\lesssim b_n$ and $b_n \lesssim a_n$. We write $a_n\ll b_n$ if $a_n/b_n\to 0$ as $n\to\infty$. Moreover, we use stochastic order symbols: for a sequence of random variables $(X_n)_{n\in \mathbb N}$ and a sequence of positive real numbers $(a_n)_{n\in \mathbb N}$, we write $X_n = o_P(a_n)$ if for all $\epsilon >0$, $\Prob(|X_n|/a_n \geq \epsilon)\to 0$, as $n\to \infty$. We write $X_n = O_P(a_n)$ if $\limsup_{n\to\infty}\Prob(|X_n|/a_n\geq M)\to 0$, as $M\to \infty$. Equalities involving conditional expectations should be understood to hold almost surely.

\section{Homogeneous Treatment Effect}\label{sec_HomTE}
In this section, we review and add to the discussion of DML in the partially linear IV model. Concretely, we observe $N$ i.i.d. realizations of $(Y_i, D_i, X_i, Z_i)$ from the model
\begin{equation}\label{eq_ModelHom1}
    Y_i = \beta D_i + g(X_i) + \epsilon_i
\end{equation}
for random variables $Y_i\in \mathbb R$, $D_i\in \mathbb R$, $X_i\in \mathbb R^p$ and the (unobserved) random error $\epsilon_i \in \mathbb R$ that is correlated with $D_i$ making the treatment $D_i$ endogenous (i.e. $\E[\epsilon_i|D_i, X_i]\neq 0$). To deal with the endogeneity, we assume the availability of an instrumental variable $Z_i\in \mathbb R^d$ satisfying Assumption \ref{ass_Hom} below. We allow the data generating process to vary with $N$, even if it is not reflected in the notation. This is needed to consider settings where the strength of the instrument $Z_i$ decreases with $N$. Furthermore, it allows for $X_i$ to be high-dimensional, i.e. $p\to\infty$ as $n\to\infty$. We assume that the instrument $Z_i$ satisfies the following conditions.
\begin{assumption}\label{ass_Hom}
\mbox\newline
    \begin{enumerate}
        \item $\E[D_i|Z_i, X_i] \neq \E[D_i|X_i]$,
        \item $\E[\epsilon_i| Z_i, X_i] = 0$.
    \end{enumerate}
\end{assumption}
The first item states that the instrument is associated with the treatment conditionally on the covariates. For the second item, note that in model \eqref{eq_ModelHom1}, we can assume without loss of generality that $\E[\epsilon_i|X_i] = 0$, since we can otherwise change $g(X_i) \leftarrow g(X_i) + \E[\epsilon_i|X_i]$. Then, assuming $\E[\epsilon_i|Z_i, X_i]$ means that $Z_i$ is a valid instrument (see also the discussion in Section \ref{sec_PLIVM}).

If the dimension of $Z_i$ is $d=1$, \cite{ChernozhukovDML} and \cite{EmmeneggerRegularizingDML} identify $\beta$ as
\begin{equation}\label{eq_IdentificationLinear}
\beta = \frac{\E[(Y_i - \E[Y_i|X_i])(Z_i - \E[Z_i|X_i])]}{\E[(D_i - \E[D_i|X_i])(Z_i - \E[Z_i|X_i])]},
\end{equation}
provided that $\E[(D_i - \E[D_i|X_i])(Z_i- \E[Z_i|X_i]]\neq 0$. To construct an estimator, one replaces the inner conditional expectations by machine learning estimates and the outer expectations by sample means using a cross-fitting scheme (see details below). With this, one obtains an asymptotically normal estimator with variance that can be consistently estimated (see below). However, since $\E[\epsilon_i|Z_i, X_i] = 0$, one can in principle replace $Z_i$ by $\zeta(Z_i, X_i)$ for an arbitrary function $\zeta$ of $Z_i$ and $X_i$, i.e.
\begin{equation}\label{eq_IdentificationZeta}
    \beta = \frac{\E[(Y_i - \E[Y_i|X_i])(\zeta(Z_i, X_i) - \E[\zeta(Z_i, X_i)|X_i])]}{\E[(D_i - \E[D_i|X_i])(\zeta(Z_i, X_i) - \E[\zeta(Z_i, X_i)|X_i])]}.
\end{equation}
Note that here, we do not have the restriction $d=1$ of univariate $Z_i$ anymore but the function $\zeta$ should be real-valued.
A natural goal is to find a transformation $\zeta$ that is easy to obtain and such that the resulting estimator has a small variance. We use $\zeta(Z_i, X_i) = f(Z_i, X_i)\coloneqq \E[D_i|Z_i, X_i]$, which can be easily obtained by nonlinear regression of $D_i$ versus $(Z_i, X_i)$ and coincides with the semiparametrically efficient instrument (see e.g. equation (9) in \cite{VansteelandtImprovingRobustness}) for the homoscedastic case $\E[\epsilon_i^2|Z_i, X_i] = \mathrm{const.}$ This means that under homoscedasticity, the estimator based on the identification \eqref{eq_IdentificationZeta} achieves minimum asymptotic variance with $\zeta = f$ (see also Section \ref{sec_EfficiencyEstimator} below).

In the literature, this approach of learning so-called optimal instruments is popular \citep{BelloniSparseModels, ChenMostlyHarmlessML, LiuDeepIV, GuoTSCI}, see also section 9.2 in \cite{WagerCausalInference} for an accessible discussion. However, to our knowledge, the only place in the literature where it is applied to double machine learning is \cite{AhrensDMLStata}, where the software package \texttt{ddml} for DML in Stata is presented which includes a function for the \textit{flexible partially linear IV model} which coincides with the approach outlined above. In the following, we present the corresponding theory in our notation with the aim of motivating and facilitating the discussion of the heterogeneous treatment effect model in Section \ref{sec_HetTE}.

\subsection{Construction}
We now present the construction of the estimator using the cross-fitting scheme. We assume that we observe $N$ i.i.d. observations $(Y_i, D_i, X_i, Z_i)_{i=1}^N$ of model \eqref{eq_ModelHom1}. Let us define the quantities
\begin{align*}
    f(Z_i, X_i) &\coloneqq \E[D_i|Z_i, X_i],\\
    \phi(X_i) & \coloneqq \E[D_i|X_i] = \E[f(Z_i, X_i)| X_i],\\
    l(X_i) & \coloneqq \E[Y_i|X_i].
\end{align*}
Fix $K\in \mathbb N$ (independent of $N$). Let $\left\{I_k\right\}_{k=1}^K$ be a $K$-fold random partition of the observation indices $[N]=\{1,\ldots, N\}$ such that each fold has size $|I_k|=n=N/K$ (for simplicity, we assume that $N/K$ is an integer). {For $k = 1,\ldots, K$, let $\hat f^k$ and $\hat l^k$ be estimates of $f$ and $l$ and let $\hat\phi_1^k$ and $\hat\phi_2^k$ be estimates of $\phi$, where $\hat f^k$, $\hat l^k$, $\hat \phi_1^k$ and $\hat\phi_2^k$ are obtained by machine learning only using the data $I_k^c$} (i.e. on all data apart from the $k$th fold). {The reason for having two different estimators $\hat \phi_1^k$ and $\hat \phi_2^k$ for $\phi$ will be made clear below in Remark \ref{rmk_TwoStage}.}

Define the residuals
\begin{align}
    R_{Y, i}^k &\coloneqq Y_i - \hat l^k(X_i), \, i\in I_k,\, k = 1,\ldots, K \label{eq_DefRY}\\
    R_{D, i}^k &\coloneqq D_i - \hat \phi_1^k(X_i), \, i\in I_k,\, k = 1,\ldots, K \label{eq_DefRD} \\
    R_{f, i}^k &\coloneqq \hat f^k(Z_i, X_i) - \hat \phi_2^k(X_i), \, i\in I_k,\, k = 1,\ldots, K. \label{eq_DefRf}
\end{align}
and define the estimator
\begin{equation}\label{eq_DefEstimator}
    \hat \beta = \frac{\frac{1}{K}\sum_{k=1}^K\frac{1}{n}\sum_{i\in I_k}R_{Y, i}^k R_{f, i}^k}{\frac{1}{K}\sum_{k=1}^K\frac{1}{n}\sum_{i\in I_k}R_{D, i}^k R_{f, i}^k}.
\end{equation}
Essentially, this is a two-stage least squares estimator of $\left(R_{Y, i}^k\right)_{i,k}$ on $\left(R_{D, i}^k\right)_{i,k}$ with instrument $\left(R_{f, i}^k\right)_{i,k}$ combined with the cross-fitting aggregation. This estimator is the same as equation (14) in \cite{AhrensDMLStata}.

\begin{remark}
This estimator is the cross-fitting sample version of \eqref{eq_IdentificationZeta} for $\zeta(Z_i, X_i) = f(Z_i, X_i) = \E[D_i|Z_i, X_i]$. The estimator is the analogue of the ``DML2''-type estimator from \cite{ChernozhukovDML}. The construction and analysis of the ``DML1''-type estimator would be completely analogous, however, we omit this here, as the ``DML2''-type estimator is typically preferred as it is more stable.
\end{remark}

\begin{remark}\label{rmk_TwoStage}
{The estimators $\hat f^k$, $\hat l^k$ and $\hat \phi_1^k$ can be obtained by directly fitting a machine learning model of $D_i$ vs. $(Z_i, X_i)$, $Y_i$ vs. $X_i$ and $D_i$ vs. $X_i$, respectively, using only $i\in I_k^c$. In principle, one could use the same direct estimator $\hat \phi_2^k = \hat\phi_1^k$ in the definition of the residuals $R_{f, i}^k$. However, similarly to \cite{AhrensDMLStata}, we recommend to use a two-stage approach for $\hat\phi_2^k$: after obtaining $\hat f^k$ on $I_k^c$, regress $(\hat f^k(Z_i, X_i))_{i\in I_k^c}$ on $(X_i)_{i\in I_k^c}$ to obtain $\hat\phi^k_2$, which is a different estimate of $\phi$ than $\hat\phi_1^k$.
The intuition behind using $\hat\phi_k^2$ instead of $\hat\phi_k^1$ in the definition \eqref{eq_DefRf} of $R_{f,i}^k$} is that due to misspecification it might happen that $\hat f^k$ does not converge to the true $f$ but to some function $\zeta$. Then, we are in the setting of the identification \eqref{eq_IdentificationZeta} and should subtract an estimate of $\E[\zeta(Z_i, X_i)|X_i]$ rather than an estimate of $\phi(X_i) = \E[D_i|X_i]$. {Hence, we expect the estimator $\hat\beta$ with $R_{f, i}^k$ based on $\hat\phi_2^k$ to be more robust against poor estimation of $f$.}
\end{remark}

\subsection{Variance Estimator}
In Section \ref{sec_SummaryHTE} below, we will see conditions under which the estimator $\hat\beta$ is asymptotically normal, $\sqrt{N}(\hat\beta-\beta)/\sigma\to\mathcal N(0,1)$ in distribution and that the asymptotic variance $\sigma^2$ can be consistently estimated by 
\begin{equation}\label{eq_VarEst}
\hat\sigma^2 = \frac{\frac{1}{K}\sum_{k=1}^K\frac{1}{n}\sum_{i\in I_k}\left(R_{Y, i}^k-\hat \beta R_{D, i}^k\right)^2\left(R_{f, i}^k\right)^2}{\left(\frac{1}{K}\sum_{k=1}^K\frac{1}{n}\sum_{i \in I_k}R_{D, i}^kR_{f, i}^k\right)^2}.
\end{equation}
Consequently, an asymptotically valid $(1-\alpha)$-confidence interval is given by
\begin{equation}\label{eq_ClassCI}
    \left[\hat\beta - z_{1-\alpha/2}\hat\sigma/\sqrt{N},\, \hat\beta - z_{1-\alpha/2}\hat\sigma/\sqrt{N}\right],
\end{equation}
where $z_{1-\alpha/2}$ is the $(1-\alpha/2)$-quantile of a standard Gaussian distribution.

\subsection{Robust Confidence Intervals}\label{sec_RobCI}
The confidence interval constructed in \eqref{eq_ClassCI} is not very robust to weak instrumental variables. In our case, weak IV means that $f(Z_i, X_i)-\phi(X_i)$ is close to zero, i.e. there is no strong effect of $Z_i$ on $D_i$ conditional on $X_i$. In such cases, it can be helpful to use an alternative construction of confidence intervals. This construction is similar to confidence intervals obtained by inverting the Anderson-Rubin test \citep{AndersonRubinTest}. We apply the method that is described in Chapter 13.3 in \cite{ChernozhukovCausalML} based on identification \eqref{eq_IdentificationLinear} to our estimator.

For $\beta\in \mathbb R$, define the functions
\begin{equation}\label{eq_RobNumHom}
    \hat Q(\beta)=\frac{1}{K}\sum_{k=1}^K\frac{1}{n}\sum_{i\in I_k}\left(R_{Y, i}^k-\beta R_{D, i}^k\right)R_{f, i}^k
\end{equation}
and 
\begin{equation}\label{eq_RobSEHom}
    \widehat{SE}_Q(\beta)=\sqrt{\frac{1}{K}\sum_{k=1}^K\frac{1}{n}\sum_{i\in I_k}\left(R_{Y, i}^k-\beta R_{D, i}^k\right)^2\left(R_{f,i}^k\right)^2-\hat Q(\beta)^2}.
\end{equation}
Under suitable conditions (see Section \ref{sec_SummaryHTE} below), under the null hypothesis $H_0:\beta = \beta_0$ (i.e. if the data is generated according to model \eqref{eq_ModelHom1} with $\beta = \beta_0$), we have that 
$$\frac{\sqrt N \hat Q(\beta_0)}{\widehat{SE}_Q(\beta_0)}\to \mathcal N(0,1)\text{ in distribution (under $H_0:\beta = \beta_0$)}$$
and consequently,
$$\mathbb P\left(\beta_0\in \left\{\beta: |\hat Q(\beta)|\leq z_{1-\alpha/2}\widehat{SE}_Q(\beta)/\sqrt{N}\right\}\right)\to 1-\alpha,$$
where $z_{1-\alpha/2}$ is the $(1-\alpha/2)$-quantile of a standard Gaussian distribution. Hence, the set $\mathcal C_N^\alpha = \left\{\beta: |\hat Q(\beta)|\leq z_{1-\alpha/2}\widehat{SE}_Q(\beta)/\sqrt{N}\right\}$ is an asymptotic $(1-\alpha)$-confidence set for $\beta$. The set $\mathcal C_N^\alpha$ is defined as the sublevel set of a parabola in $\beta$. Hence it is possible to derive an explicit expression, see Appendix \ref{app_ExplicitRobCI} for the expression in the heterogeneous case (the homogeneous case is analogue). Moreover, one can show that under some assumptions on the strength of the IV, the confidence interval $\mathcal C_N^\alpha$ is asymptotically equivalent to the classical confidence interval given in \eqref{eq_ClassCI}, see Appendix \ref{app_AdaRobCI} for the derivation in the heterogeneous case (the homogeneous case is analogue). Hence, one does not lose asymptotic efficiency from using the more robust version and we would recommend to use the robust confidence sets in practice. However, from a purely theoretical point of view, the advantage is not large: we still need that the nuisance function estimates converge fast enough, and essentially, the conditions on the convergence of the nuisance function estimates get scaled by the IV strength, which means that with weak IV, we get stronger requirements on how fast the nuisance function estimates need to converge, see also Remark \ref{rmk_HomRobCIComparison}. {Nevertheless, in simulations one can observe that the robust confidence intervals do not suffer from undercoverage in settings with weak IV where the coverage of the standard confidence intervals breaks down (see Appendix \ref{sec_VaryIVStrength}).}
\subsection{Summary of Results for the Homogeneous Treatment Effect}\label{sec_SummaryHTE}
In this section, we give a summary of the theoretical results that one can obtain for the homogeneous treatment effect. These results essentially follow from the general theory in \cite{ChernozhukovDML}, since the score
$$\psi(Y, D, X, Z; \beta, l, f,\phi) = \left(Y-l(X)-\beta(D-\phi(X))\right)\left(f(Z, X)-\phi(X)\right)$$
satisfies the Neyman orthogonality property \citep{AhrensDMLStata}, but they can also be proved by directly analyzing the expressions \eqref{eq_DefEstimator} and \eqref{eq_VarEst} by essentially following the proofs that we will give for the heterogeneous treatment effects with some slight modifications. Additionally to some conditions on the moments of the random variables, the main requirements are that the nuisance functions $l$, $\phi$ and $f$ can be estimated at a sufficiently fast rate. Concretely, define
\begin{equation}\label{eq_DefSigmaQ}
    \sigma_Q^2 = \E[\epsilon_i^2(f(Z_i, X_i)-\phi(X_i))^2],
\end{equation}
which can be seen as a measure of IV strength on the population level. Note that $\sigma_Q^2$ is unnormalized, i.e. we do not divide by the noise level $\E[\epsilon_i^2]$. In contrast to the standard theory from \cite{ChernozhukovDML} (Assumption 3.4. (d) there), we do not assume that $\sigma_Q^2$ is bounded away from $0$, but we have to rescale the conditions on the convergence rates. We write $\|f-\hat f^k\|_{L_2}$ for the $L_2$-norm of $f-\hat f^k$. Note that $f$ is fixed and $\hat f^k$ depends on the data in $I_k^c$. Hence, $\|f-\hat f^k\|_{L_2}$ is random. More precisely, it is given by
$$\|f-\hat f^k\|_{L_2}^2 = \E\left[(f(Z, X)-\hat f^k(Z, X))^2| I_k^c\right],$$
where $Z$ and $X$ are drawn from model \eqref{eq_ModelHom1} independently from $\{X_i, Y_i, Z_i, D_i\}_{i\in I_k^c}$ and by conditioning on $I_k^c$ we mean conditioning on $\{X_i, Y_i, Z_i, D_i\}_{i\in I_k^c}$. The analogous reasoning applies to {$\|\phi-\hat \phi_1^k\|_{L_2}$, $\|\phi-\hat\phi_2^k\|_{L_2}$} and $\|l-\hat l^k\|_{L_2}$. The main requirements for the nuisance function estimators are that for $k = 1,\ldots, K$,  it holds that
{\begin{equation}\label{eq_HomNuisanceCond}
    \begin{aligned}
        \frac{1}{\sigma_Q}\|l-\hat l^k\|_{L_2}\|\phi-\hat \phi_2 ^k\|_{L_2} &= o_P\left(N^{-1/2}\right),\\
        \frac{1}{\sigma_Q} \|l-\hat l^k\|_{L_2}\|f-\hat f ^k\|_{L_2}&= o_P\left(N^{-1/2}\right),\\
        \frac{1}{\sigma_Q} \|\phi-\hat \phi_1 ^k\|_{L_2}|\phi-\hat \phi_2 ^k\|_{L_2}&= o_P\left(N^{-1/2}\right),\\
        \frac{1}{\sigma_Q} \|\phi-\hat \phi_1 ^k\|_{L_2}\|f-\hat f ^k\|_{L_2}&= o_P\left(N^{-1/2}\right).
        \end{aligned}
\end{equation}}
If $\sigma_Q^2$ is bounded from below, equation \eqref{eq_HomNuisanceCond} is satisfied if all the nuisance function estimators converge at a rate of $o_P(N^{-1/4})$ which is comparably slow and satisfied by many modern machine learning methods, when assuming appropriate sparsity and/or smoothness conditions. If on the other hand, $\sigma_Q^2$ is decreasing with $N$, which corresponds to the setting of weak IV, the machine learners need to converge at faster rates, {see also Remark \ref{rmk_HomRobCIComparison} below.}

One can show the following results (we will provide a detailed treatment for the heterogeneous treatment effect below and the results here can be obtained analogously).
\begin{description}
    \item[Asymptotic normality of $\hat\beta$:] Under suitable moment conditions on the involved random variables and the functions $f$, $\phi$ and $l$ and assumptions on the nuisance function estimates including \eqref{eq_HomNuisanceCond}, we have that
    \begin{equation}\label{eq_HomAsNorm}
        \frac{\sqrt{N} (\hat\beta - \beta)}{\sigma}\to \mathcal N(0,1)
    \end{equation}
    in distribution with
    \begin{equation}\label{eq_HomDefSigma}
        \sigma^2 = \frac{\E\left[\epsilon_i^2 (\E[D_i|Z_i, X_i] - \E[D_i|X_i])^2\right]}{\E\left[(D_i-\E[D_i|X_i])(\E[D_i|Z_i, X_i]-\E[D_i|X_i])\right]^2}.
    \end{equation}
    \item[Consistency of $\hat\sigma^2$:] Under some additional conditions, we have that $|\hat\sigma^2/\sigma^2 - 1| =  o_P(1)$.
    \item[Validity of robust confidence sets:] Under similar conditions, including \eqref{eq_HomNuisanceCond}, we have that under under $H_0: \beta = \beta_0$, 
    $$\frac{\sqrt{N} \hat Q(\beta_0)}{\widehat{SE}_Q(\beta_0)}\to \mathcal N(0,1)$$
    in distribution, i.e. the robust procedure produces asymptotically valid confidence sets.
\end{description}
\begin{remark}\label{rmk_HomRobCIComparison}
    If the nuisance functions were known, the robust confidence set construction from Section \ref{sec_RobCI}, would yield asymptotically valid inference regardless of the strength of the instruments. However, this nice property does not carry over if the nuisance functions are estimated, since in \eqref{eq_HomNuisanceCond}, the convergence rates are also scaled by the strength of the IV. The fastest convergence rate for the nuisance function estimates that we can hope for is an $L_2$ error of $O_P(N^{-1/2})$ (linear regression). Hence, in the best possible case, it still follows from \eqref{eq_HomNuisanceCond} that $\sigma_Q^2\gg 1/N$. If we additionally assume that $\E[\epsilon_i^2|Z_i, X_i]$ is bounded away from $0$ and $\infty$ independently of $N$, it follows that $\E[(f(Z_i, X_i)-\phi(X_i))^2]\asymp \sigma_Q^2 \gg 1/N$. This rules out the classical ``weak IV asymptotics'' regime \citep{StaigerIVRegressionWeakInstruments} which would correspond to $\E[(f(Z_i, X_i)-\phi(X_i))^2]\asymp 1/N$. {The actual amount of weakness (in terms of the order of $\sigma_Q^2$) that one can in theory allow for, depends on what rates the machine learning algorithm can achieve. For example, for sparse high-dimensional linear models, one can obtain an $L_2$ error of $O_P(\sqrt{\log p / N})$ 
    \citep{BuehlmannHDStats}, so that the requirement on the IV strength is $\sigma_Q^2\gg (\log p)^2/N$, which is only slightly weaker than classical weak IV asymptotics. In contrast, with more flexible models such as (nonlinear) sparse high-dimensional additive models \citep{YuanMinimaxOptimalRates}, neural networks \citep{ChenImprovedRates} or (variants of) random forest \citep{WagerAdaptiveConcentration}, one can afford less weakness.}
    In conclusion, from the analysis based on theoretical bounds
    one does not gain very much from using the robust confidence set construction compared to the standard confidence interval, which is in contrast to the classical treatment of weak IV. The reason for this is that compared to the classical theory,
    we need to estimate nuisance functions. {
    However, for finite sample sizes and in settings with weak instruments, the robust confidence sets exhibit superior empirical coverage compared to the standard confidence intervals in simulations (see Appendix \ref{sec_VaryIVStrength}).}
\end{remark}

\subsection{Efficiency of the estimator $\hat\beta$}\label{sec_EfficiencyEstimator}
As detailed out in the introduction, the estimator \eqref{eq_DefEstimator} coincides with the semiparametrically efficient estimator in the homoscedastic case (see for example equation (9) in \cite{VansteelandtImprovingRobustness}). For illustration, we directly compare the asymptotic variance of $\hat\beta$ with the asymptotic variance of an estimator based on the identification \eqref{eq_IdentificationZeta}. Such an estimator $\beta_\zeta$ is obtained by replacing $R_{f, i}^k$ in \eqref{eq_DefEstimator} by $R_{\zeta, i}^k$, where $R_{\zeta, i}^k = \hat\zeta^k(Z_i, X_i) - \hat\mu^k(X_i)$, with $\hat\mu^k(X_i)$ being an estimate of $\mu(X_i) = \E[\zeta(Z_i, X_i)|X_i]$ and $\hat \zeta^k$ is either estimated or known a priori, as for example in the standard DML with linear instruments \citep{ChernozhukovDML, EmmeneggerRegularizingDML} which is recovered by setting $\zeta(Z_i, X_i) = Z_i$. Analogously to $\hat\beta$, one can show that $\hat\beta_\zeta$ is asymptotically normal with asymptotic variance given by
\begin{equation}\label{eq_AsVarZeta}
    \sigma^2_\zeta = \frac{\E\left[\epsilon_i^2 (\zeta(Z_i, X_i) - \E[\zeta(Z_i, X_i)|X_i])^2\right]}{\E\left[(D_i-\E[D_i|X_i])(\zeta(Z_i, X_i)-\E[\zeta(Z_i, X_i)|X_i])\right]^2}.
\end{equation}
In the homoscedastic setting $\E[\epsilon_i^2|Z_i, X_i] = \mathrm{const.}$, the asymptotic variance of our estimator $\hat\beta$ is always smaller or equal the asymptotic variance of $\hat\beta_\zeta$ with equality if and only if the function $f(Z_i, X_i)$ is equal to the function $\zeta(Z_i,X_i)$ plus a potential shift $\psi(X_i)$ depending only on $X_i$, as the following proposition shows. 
\begin{proposition}\label{pro_CompVar}
   Assume that $\E[\epsilon_i^2|Z_i, X_i]=\E[\epsilon_i^2]$ a.s. Then, $\sigma^2\leq \sigma_\zeta^2$ with equality if and only if there exist $\alpha \neq 0$ and a function $\psi$ such that
    \begin{equation}\label{eq_CondVarEqual}
        f(Z_i, X_i)=\alpha\zeta(Z_i, X_i) + \psi(X_i)\text{ a.s.}
    \end{equation}
    Here, $\sigma^2$ is as in \eqref{eq_HomDefSigma}, namely the asymptotic variance of the proposed estimator $\hat{\beta}$ in \eqref{eq_DefEstimator}.
\end{proposition}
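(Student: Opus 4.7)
The plan is to reduce the inequality $\sigma^2 \le \sigma_\zeta^2$ to the Cauchy--Schwarz inequality applied to the residualized quantities $\tilde f_i \coloneqq f(Z_i, X_i) - \phi(X_i)$ and $\tilde\zeta_i \coloneqq \zeta(Z_i, X_i) - \mu(X_i)$, after using homoscedasticity and the tower property to simplify both variance expressions.

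First, I would set $\sigma_\epsilon^2 \coloneqq \E[\epsilon_i^2]$. Under homoscedasticity, $\E[\epsilon_i^2(\cdot)^2]$ factorizes as $\sigma_\epsilon^2\E[(\cdot)^2]$ in both numerators. For the denominators, the key simplification uses the identity $\E[D_i - \phi(X_i) \mid Z_i, X_i] = f(Z_i,X_i) - \phi(X_i) = \tilde f_i$ together with the tower property: for any $\xi(Z_i,X_i)$,
\begin{equation*}
    \E\bigl[(D_i - \phi(X_i))\,\xi(Z_i,X_i)\bigr] = \E\bigl[\tilde f_i\,\xi(Z_i,X_i)\bigr].
\end{equation*}
Applying this to $\xi = \tilde f_i$ gives denominator $\E[\tilde f_i^2]^2$ for $\sigma^2$, and applying it to $\xi = \tilde\zeta_i$ gives denominator $\E[\tilde f_i \tilde\zeta_i]^2$ for $\sigma_\zeta^2$. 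Hence
\begin{equation*}
    \sigma^2 = \frac{\sigma_\epsilon^2}{\E[\tilde f_i^2]}, \qquad \sigma_\zeta^2 = \frac{\sigma_\epsilon^2\, \E[\tilde\zeta_i^2]}{\E[\tilde f_i \tilde\zeta_i]^2}.
\end{equation*}

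The inequality $\sigma^2 \le \sigma_\zeta^2$ is now equivalent to $\E[\tilde f_i \tilde\zeta_i]^2 \le \E[\tilde f_i^2]\E[\tilde\zeta_i^2]$, which is exactly Cauchy--Schwarz in $L^2$. Equality in Cauchy--Schwarz is characterized by $\tilde f_i$ and $\tilde\zeta_i$ being linearly dependent almost surely; since $\sigma^2$ being finite requires $\E[\tilde f_i^2] > 0$, the dependency takes the form $\tilde f_i = \alpha\,\tilde\zeta_i$ a.s. for some $\alpha \neq 0$. Unwinding definitions gives $f(Z_i,X_i) = \alpha\,\zeta(Z_i,X_i) + \psi(X_i)$ a.s.\ with $\psi(X_i) \coloneqq \phi(X_i) - \alpha\,\mu(X_i)$. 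Conversely, if such a representation holds, then conditioning on $X_i$ yields $\phi(X_i) = \alpha\,\mu(X_i) + \psi(X_i)$, so $\tilde f_i = \alpha\tilde\zeta_i$ and Cauchy--Schwarz is tight.

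There is no real obstacle here; the only thing to be careful about is the tower-property step that replaces $D_i - \phi(X_i)$ by $\tilde f_i$ inside both denominators, since without that reduction the inequality would not be a direct application of Cauchy--Schwarz. Everything else is bookkeeping of the equality case to get exactly the form stated in the proposition.
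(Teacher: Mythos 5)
Your proposal is correct and follows essentially the same route as the paper's proof: the tower property reduces both denominators to expressions in $\tilde f_i$ and $\tilde\zeta_i$, homoscedasticity factors the numerators, and Cauchy--Schwarz with its equality case yields the characterization \eqref{eq_CondVarEqual}. Your bookkeeping of the equality case (including the observation that $\E[\tilde f_i^2]>0$ forces $\alpha\neq 0$, and the converse via conditioning on $X_i$) matches the paper's argument.
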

A proof of Proposition \ref{pro_CompVar} can be found in Appendix \ref{app_ProofEfficiency}.
The proposition states that in the homoscedastic setting, we are estimating the optimal instrument. In particular, the estimator $\hat\beta$ is asymptotically more efficient than the standard estimator for double machine learning with instrumental variables \citep{ChernozhukovDML, EmmeneggerRegularizingDML} with equal asymptotic variance if and only if there exist $\alpha\neq 0$ and a function $\psi$ such that $f(Z_i, X_i) = \alpha Z_i + \psi(X_i)$, that is, the instrument has a linear effect on the treatment.

\section{Heterogeneous Treatment Effect}\label{sec_HetTE}
Let us now consider the heterogeneous treatment effect model \eqref{eq_Model}, which we restate for convenience: we assume that we observe $N$ i.i.d. realizations from
\begin{equation}\label{eq_ModelHet}
Y_i=\beta(V_i) D_i+g(X_i)+\epsilon_i.
\end{equation}
The only change compared to the model \eqref{eq_ModelHom1} is that the treatment effect is assumed to be heterogeneous with respect to the univariate continuous random variable $V_i$ and we again assume the availability of an instrumental variable $Z_i\in \mathbb R^d$. As for the homogeneous treatment effect, we allow the data generating process to vary with $N$, even if it is not reflected in the notation.

The goal is to perform inference for $\beta(v)$, that is, the value of the treatment effect for a particular value of $V_i = v\in \mathbb R$. We need the following extension of Assumption \ref{ass_Hom} to make $\beta(v)$ identifiable.
\begin{assumption}\label{ass_Het}
\mbox\newline
    \begin{enumerate}
        \item $\E\left[\left(\E[D_i|Z_i, X_i]- \E[D_i|X_i]\right)^2| V_i = v\right]\neq 0$,
        \item $\E[\epsilon_i|Z_i, X_i] = 0$ ,
        \item $V_i$ is measurable with respect to the sigma-algebra generated by $X_i$,
        \item The distribution of $V_i$ is absolutely continuous with respect to the Lebesgue measure and has density $p_V(\cdot)$.
    \end{enumerate}
\end{assumption}
Assertion 1 encodes that $Z_i$ is a relevant IV given $X_i$ conditional on $V_i = v$. Assertion 2 is the same as for Assumption \ref{ass_Hom} and states that $Z_i$ is a valid instrumental variable. Assertion 3 states that $V_i$ is measurable with respect to $X_i$. Usually, $V_i$ is just a coordinate of $X_i$ and in fact, assertion 3 of Assumption \ref{ass_Het} can be made without loss of generality since we can otherwise just replace $X_i \leftarrow (X_i, V_i)$. {The absolute continuity of the distribution of $V_i$ with respect to the Lebesgue measure in assertion 4 is assumed for the sake of conciseness of the exposition. An analogous approach would work for discrete random variables $V_i$, which would reduce to simple binning.} Under Assumption \ref{ass_Het}, we can identify $\beta(v)$ by simply making the identification from the homogeneous case conditional on $V_i = v$, i.e.

\begin{equation}\label{eq_IdentificationHet}
    \beta(v) = \frac{\E\left[(Y_i - \E[Y_i|X_i])(\E[D_i|Z_i, X_i]-\E[D_i|X_i])|V_i = v\right]}{\E\left[(D_i - \E[D_i|X_i])(\E[D_i|Z_i, X_i]-\E[D_i|X_i])|V_i = v\right]}.
\end{equation}

\subsection{Estimators}
To construct an estimator from this identification, we use ideas from univariate kernel smoothing. For this, we need a kernel $K:\mathbb R\to [0,\infty)$ that satisfies the following assumption (see e.g. \cite{WassermannAllOfNonparametricStatistics}).
\begin{assumption}\label{ass_Kernel}
\mbox\newline
    \begin{enumerate}
        \item $\int_{-\infty}^\infty K(x) \mathrm{d}x = 1$,
        \item K is symmetric, i.e. for all $x\in \mathbb R$, $K(x) = K(-x)$,
        \item $\|K\|_\infty<\infty$, i.e. $K$ is bounded,
        \item $\int_{-\infty}^\infty x^2 K(x)\mathrm{d}x \in (0,\infty)$.
    \end{enumerate}
\end{assumption}
Consider a bandwidth $h>0$ and recall the cross-fitted residuals \eqref{eq_DefRY}, \eqref{eq_DefRD} and \eqref{eq_DefRf}. We can then modify the estimator \eqref{eq_DefEstimator} by simply weighting the products of the residuals according to the kernel, i.e.
\begin{equation}\label{eq_DefEstimatorHet}
\hat\beta(v) = \frac{\frac{1}{Nh}\sum_{k=1}^K\sum_{i\in I_k} R_{Y, i}^k R_{f,i}^k K\left(\frac{V_i - v}{h}\right)}{\frac{1}{Nh}\sum_{k=1}^K\sum_{i\in I_k} R_{D, i}^k R_{f,i}^k K\left(\frac{V_i - v}{h}\right)}.
\end{equation}
The main intuition for \eqref{eq_DefEstimatorHet} is that we can view the treatment effect as being homogeneous when we restrict the attention to $V_i$ being near the target level $v$.
In a similar fashion to the homogeneous treatment effect, we will show below in Section \ref{subsec_theory} that $\sqrt{Nh} (\hat\beta(v)-\hat\beta(v))/\sigma(v)\to \mathcal N(0,1)$ with asymptotic variance $\sigma^2(v)$ that can be consistently estimated by
\begin{equation}\label{eq_DefVarianceHet}
    \hat\sigma^2(v) = \frac{\frac{1}{Nh}\sum_{k=1}^K\sum_{i\in I_k}\left(R_{Y, i}^k-\hat \beta(v)R_{D, i}^k\right)^2\left(R_{f, i}^k\right)^2K\left(\frac{V_i - v}{h}\right)^2}{\left(\frac{1}{Nh}\sum_{k = 1}^K\sum_{i\in I_k} R_{D, i}^kR_{f, i}^kK\left(\frac{V_i - v}{h}\right)\right)^2}.
\end{equation}

\begin{remark}\label{rmk_ZetaHet}
    As for the homogeneous treatment effect, in principle one can replace $R_{f,i}^k$ by $R_{\zeta, i}^k = \hat\zeta^k(Z_i, X_i)-\hat\mu^k(X_i)$ for an arbitrary function $\zeta(Z_i, X_i)$ and $\hat\mu^k(X_i)$ being an estimate of $\mu(X_i)=\E[\zeta(Z_i, X_i)|X_i]$. Most prominently, one could take the ``linear IV'' estimator $\zeta(Z_i, X_i) = Z_i$ and $\mu(X_i) = \E[Z_i|X_i]$. However, when using the same kernel and bandwidth, the choice $\zeta(Z_i, X_i) = f(Z_i, X_i)$ leads to the most efficient estimator of this form under homoscedasticity, as we will see in Remark \ref{rmk_EfficiencyHet}.
\end{remark}

\subsection{Robust Confidence Sets}
Also for the heterogeneous treatment effect, we can apply the robust confidence set construction. For $\gamma \in \mathbb R$, let
\begin{equation*}
    \hat Q(\gamma, v) = \frac{1}{Nh}\sum_{k = 1}^K\sum_{i\in I_k} \left(R_{Y, i}^k- \gamma R_{D, i}^k\right)R_{f, i}^kK\left(\frac{V_i - v}{h}\right),
\end{equation*}
and
\begin{equation*}
    \widehat{SE}^2_Q(\gamma, v) = \frac{1}{Nh}\sum_{k = 1}^K\sum_{i\in I_k} \left(R_{Y, i}^k- \gamma R_{D, i}^k\right)^2\left(R_{f, i}^k\right)^2K\left(\frac{V_i - v}{h}\right)^2- h\hat Q(\gamma, v)^2.
\end{equation*}
Below in Section \ref{subsubsec_robustCI}, we will give conditions such that under $H_0: \beta(v) = \beta_0(v)$,
$$\frac{\sqrt{Nh} \hat Q(\beta_0(v), v)}{\widehat{SE}_Q(\beta_0(v), v)}\to \mathcal N(0,1),$$
and consequently, $\mathcal C_N^\alpha(v) \coloneqq \left\{\gamma: |\hat Q(\gamma, v)|\leq z_{1-\alpha/2}\widehat {SE}_Q(\gamma, v)/\sqrt{Nh} \right\}$ is an asymptotic $(1-\alpha)$-confidence set for $\beta(v)$, similarly to the homogeneous case.

\begin{remark}
In Appendix \ref{app_ExplicitRobCI}, we derive an explicit expression for the robust confidence sets and in Appendix \ref{app_AdaRobCI}, we show that under certain conditions on the model, the robust confidence sets are asymptotically equivalent to standard confidence intervals based on $\hat\sigma^2(v)$. This means that one does not lose asymptotic efficiency by using the robust confidence sets.
\end{remark}

\subsection{Theory}\label{subsec_theory}
In classical univariate kernel smoothing, the optimal bandwidth satisfies $h\asymp N^{-1/5}$. This yields the smallest mean squared error of the nonparametric estimator, since with this choice, the bias and the variance are of the same order. However, to obtain valid inference for $\beta(v)$, we need that the variance dominates the bias and hence, one needs to choose a smaller bandwidth (``undersmoothing'') \citep{WassermannAllOfNonparametricStatistics}. This is also true for our setting and we need the following assumption.
\begin{assumption}\label{ass_Bandwidth}
    $h\ll N^{-1/5}$.
\end{assumption}
The remaining assumptions can be divided into conditions on the convergence rates of the nuisance function estimates and some more technical (mainly smoothness) assumptions needed to apply the kernel smoothing successfully. We start with the assumptions on the nuisance function estimates. In contrast to the usual conditions for double/debiased machine learning that essentially require that all the nuisance functions converge faster than $N^{-1/4}$ in $L_2$-norm (see condition \eqref{eq_HomNuisanceCond} above), we need slightly faster convergence requirements on the nuisance functions, which we outline below. The reason for this is that we only want to make assumptions on the overall convergence rate of the nuisance functions, e.g., $\| \hat f-f\|_{L_2}$ and not make any assumptions on convergence rates that are somehow localized/conditional on $V_i = v$.

Let us introduce the quantity
    \begin{equation}\label{eq_DefSigmaQHet}
        \sigma_Q^2(v) = \E[\epsilon_i^2 (f(Z_i, X_i)-\phi(X_i))^2|V_i = v] p_V(v)\int_{-\infty}^\infty K(x)^2 \mathrm{d}x,
    \end{equation}
where we recall that $p_V(\cdot)$ is the density of $V_i$ with respect to the Lebesgue measure. Similarly to the homogeneous treatment effect, $\sigma_Q^2(v)$ can be seen as an unnormalized measure of IV strength conditional on $V_i = v$ on the population level.

\begin{assumption}\label{ass_NuisanceRatesHet}
\mbox\newline
\begin{enumerate}
    \item For $k = 1,\ldots, K$, it holds that $\|f- \hat f^k\|_{L_2}/\sigma_Q(v) = o_P\left(N^{-1/2} h^{-1}\right)$ and it holds that {$\|\phi- \hat \phi_2^k\|_{L_2}/\sigma_Q(v) = o_P\left(N^{-1/2} h^{-1}\right)$}.\label{ass_Rate_F_Phi_fast}
    \item For $k = 1,\ldots, K$, it holds that $\|f- \hat f^k\|_{L_2}/\sigma_Q(v) = o_P\left(h^{1/2}\right)$ and it holds that {$\|\phi- \hat \phi_2^k\|_{L_2}/\sigma_Q(v) = o_P\left(h^{1/2}\right)$}. \label{ass_Rate_F_Phi_slow}
    \item For $k = 1,\ldots, K$, it holds that {
    \begin{align*}
        \frac{1}{\sigma_Q(v)}\|f- \hat f^k\|_{L_2} \|l-\hat l^k\|_{L_2} &= o_P\left(N^{-1/2} h^{1/2}\right),\\
        \frac{1}{\sigma_Q(v)}\|\phi- \hat \phi_2^k\|_{L_2} \|l-\hat l^k\|_{L_2}&= o_P\left(N^{-1/2} h^{1/2}\right),\\
        \frac{1}{\sigma_Q(v)}\|f- \hat f^k\|_{L_2} \|\phi-\hat \phi_1^k\|_{L_2}&= o_P\left(N^{-1/2} h^{1/2}\right),\\
        \frac{1}{\sigma_Q(v)}\|\phi- \hat \phi^k_1\|_{L_2} \|\phi- \hat \phi^k_2\|_{L_2} &= o_P\left(N^{-1/2} h^{1/2}\right).
    \end{align*}
    }
   \label{ass_RateProducts}
    \item There exists $C<\infty$ independent of $N$ such that $\E[\epsilon_i^2|Z_i, X_i]\leq C$ a.s., that is, $\|\E[\epsilon_i^2|Z_i, X_i]\|_{\infty}\leq C < \infty$.\label{ass_CondVariance}
    \item There exist $p_1, q_1 \in [1, \infty]$ with $1/p_1+1/q_1 = 1$ and $C<\infty$ independent of $N$ such that $\|\E[(f(Z_i, X_i)-\phi(X_i))^2|X_i]\|_{L_{p_1}}/\sigma_Q^2(v)\leq C$ and for $k = 1,\ldots, K$ it holds that  $\|l-\hat l^k\|_{L_{2q_1}} = o_P\left(h^{1/2}\right)$. \label{ass_Rate_l_pq1}
    \item There exist $p_2, q_2 \in [1, \infty]$ with $1/p_2+1/q_2 = 1$ and $C<\infty$ independent of $N$ such that $\|\E[(f(Z_i, X_i)-\phi(X_i))^2|X_i]\|_{L_{p_2}}/\sigma_Q^2(v)\leq C$ and for $k = 1,\ldots, K$, it holds that {$\|\phi-\hat \phi_1^k\|_{L_{2q_2}} = o_P\left(h^{1/2}\right)$}.
\end{enumerate}
\end{assumption}
\begin{remark}
    Assume that $\sigma_Q^2(v) \asymp 1$, which is for example the case if the data generating process \eqref{eq_ModelHet} does not change with $N$. From Assumption \ref{ass_Bandwidth}, we know that $h\ll N^{-1/5}$. Hence, assertion 1 of Assumption \ref{ass_NuisanceRatesHet} is satisfied if $\|f-\hat f^k\|_{L_2} = O_P(N^{-3/10})$ and $\|\phi-\hat \phi_2^k\|_{L_2} = O_P(N^{-3/10})$, which is a slightly stronger requirement than $o_P\left(N^{-1/4}\right)$. The second assertion is comparably weak. The third assertion are the conditions on the products of the $L_2$ errors of the nuisance function estimators. Since $h\ll N^{-1/5}$, these assumptions are stronger than assuming that the respective products are $o_P\left(N^{-3/5}\right)$. Hence, the assumptions on the cross terms are also slightly stronger than for the homogeneous treatment effect.
\end{remark}
When $\sigma_Q^2(v) \to 0$, this corresponds to models with weak IV. As already discussed in Remark \ref{rmk_HomRobCIComparison} we cannot allow $\sigma_Q^2(v)$ to go to zero too fast since otherwise, the requirements on the convergence rates for the nuisance functions cannot be satisfied.

The remaining assumptions are more technical and we defer them to Assumption \ref{ass_RegularityHet} in Appendix \ref{app_TechAssHTE}. Assumption \ref{ass_RegularityHet} can be summarized as follows.
\begin{enumerate}
    \item $p_V(v)$ is bounded away from $0$.
    \item If $\sigma_Q^2(v)\to 0$ as $N\to\infty$, this is attributed to weak IV, i.e. $\E[(f(Z_i, X_i)-\phi(X_i))^2|V_i = v]\to 0$ and not to other changes in the data generating process (for example not because $\E[\epsilon_i^2|Z_i, X_i]\to 0$).
    \item Various functions $x\mapsto \rho(x)$ are twice differentiable for $x$ in the support of $V_i$ with bounded second derivative (when properly rescaled by $\sigma_Q(v)$). The functions $\rho(\cdot)$ are products of $\beta(\cdot)$, $p_V(\cdot)$ and various conditional moments as for example $\E[(f(Z_i, X_i)-\phi(X_i))^2|V_i = \cdot]$.
\end{enumerate}
With these assumptions, we obtain the asymptotic normality of the estimator $\hat \beta(v)$.
\begin{theorem}\label{thm_AsNormHet}
    Under Assumptions \ref{ass_Het}, \ref{ass_Kernel}, \ref{ass_Bandwidth}, \ref{ass_NuisanceRatesHet} and \ref{ass_RegularityHet}, additionally assume that  $\sigma_Q^2(v) \gg (Nh)^{-1}$, $\frac{\|f-\hat f^k\|_{L_2}}{\sigma_Q^2(v)} = o_P\left(h^{1/2}\right)$ and {$\frac{\|\phi-\hat \phi_2^k\|_{L_2}}{\sigma_Q^2(v)} = o_P\left(h^{1/2}\right)$}. Then, we have
    $$\frac{\sqrt{Nh}(\hat\beta(v)-\beta(v))}{\sigma(v)}\to\mathcal N(0, 1)\text{ in distribution,}$$
    where
    $$\sigma^2(v) = \frac{\E[\epsilon_i^2(f(Z_i, X_i)-\phi(X_i))^2|V_i = v]\int_{-\infty}^\infty K(x)^2\mathrm{d}x}{\E[(D_i-\phi(X_i))(f(Z_i, X_i)-\phi(X_i))|V_i = v]^2 p_V(v)}.$$
\end{theorem}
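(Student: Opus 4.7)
The plan is to linearise the estimator as
\[
\sqrt{Nh}\bigl(\hat\beta(v)-\beta(v)\bigr)=\frac{\sqrt{Nh}\,\hat Q(\beta(v),v)}{\hat D(v)},
\]
where $\hat Q(\beta(v),v)=\frac{1}{Nh}\sum_{k,i\in I_k}(R_{Y,i}^k-\beta(v)R_{D,i}^k)R_{f,i}^k K((V_i-v)/h)$ and $\hat D(v)=\frac{1}{Nh}\sum_{k,i\in I_k}R_{D,i}^k R_{f,i}^k K((V_i-v)/h)$. I will show (i) $\sqrt{Nh}\,\hat Q(\beta(v),v)\to\mathcal N(0,\sigma_Q^2(v))$ and (ii) $\hat D(v)\to D^{\ast}(v):=\E[(f(Z_i,X_i)-\phi(X_i))^2\mid V_i=v]p_V(v)$ in probability; Slutsky and the identity $\E[(D_i-\phi(X_i))(f-\phi)\mid X_i]=\E[(f-\phi)^2\mid X_i]$ (valid because $\phi(X_i)=\E[f(Z_i,X_i)\mid X_i]$) then give the stated limiting variance $\sigma^2(v)=\sigma_Q^2(v)/D^{\ast}(v)^2$.

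The key algebraic step is to use measurability of $V_i$ with respect to $X_i$ (Assumption \ref{ass_Het}.3), which gives $l(X_i)=\beta(V_i)\phi(X_i)+g(X_i)$, hence
\[
Y_i-l(X_i)-\beta(v)(D_i-\phi(X_i))=\epsilon_i+(\beta(V_i)-\beta(v))(D_i-\phi(X_i)).
\]
Substituting $\hat l^k,\hat\phi_1^k,\hat f^k,\hat\phi_2^k$ for $l,\phi,f,\phi$ in the definitions of $R_{Y,i}^k,R_{D,i}^k,R_{f,i}^k$ and expanding, I will decompose $\hat Q(\beta(v),v)$ into four groups: (a) the \emph{oracle} term $\frac{1}{Nh}\sum_i \epsilon_i(f-\phi)(Z_i,X_i)K((V_i-v)/h)$; (b) a \emph{kernel bias} term driven by $(\beta(V_i)-\beta(v))(D_i-\phi(X_i))$; (c) linear-in-nuisance-error terms (e.g., $(l-\hat l^k)(X_i)$ times $R_{f,i}^k$); (d) products of two nuisance errors.

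For (a), conditioning inside and applying the standard change-of-variables for kernel smoothing yields $(Nh)\Var\bigl(\frac{1}{Nh}\sum \cdot\bigr)\to \sigma_Q^2(v)$ under Assumption \ref{ass_RegularityHet}, and a Lyapunov/Lindeberg CLT for triangular arrays (using Assumption \ref{ass_NuisanceRatesHet}.\ref{ass_CondVariance} and a uniform moment bound on $(f-\phi)^2$) delivers asymptotic normality. Note that $\sigma_Q^2(v)\gg(Nh)^{-1}$ is precisely the condition that makes the sum non-degenerate. For (b), a second-order Taylor expansion of $u\mapsto\E[(\beta(u)-\beta(v))(D_i-\phi(X_i))(f-\phi)\mid V_i=u]p_V(u)$ about $v$ together with the smoothness built into Assumption \ref{ass_RegularityHet} gives a bias of order $h^2$, so the contribution to $\sqrt{Nh}\,\hat Q$ is $O_P(\sqrt{Nh}\cdot h^2)=O_P(\sqrt{Nh^5})\to 0$ by Assumption \ref{ass_Bandwidth}. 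For (c), cross-fitting makes the nuisance error independent of $I_k$ conditional on $I_k^c$; because $\E[\epsilon_i\mid Z_i,X_i]=0$ and $\E[f(Z_i,X_i)-\phi(X_i)\mid X_i]=0$, each linear term has conditional mean $0$, and a variance bound via the kernel change-of-variables combined with Assumption \ref{ass_NuisanceRatesHet}.\ref{ass_Rate_F_Phi_slow} (and \ref{ass_Rate_l_pq1}--6 via Hölder) yields $o_P((Nh)^{-1/2}\sigma_Q(v))$. For (d), I use Cauchy--Schwarz conditionally on $I_k^c$: the kernel-weighted second moment of a product $(l-\hat l^k)(X_i)(f-\hat f^k)(Z_i,X_i)$ is bounded by $h^{-1}\|l-\hat l^k\|_{L_2}^2\|f-\hat f^k\|_{L_2}^2\cdot(\text{bounded factor})$, and the product-rate conditions in Assumption \ref{ass_NuisanceRatesHet}.\ref{ass_RateProducts} make each such term $o_P((Nh)^{-1/2}\sigma_Q(v))$.

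For (ii), the same decomposition applied to $\hat D(v)$ is easier because only convergence in probability is required: the oracle term converges to $D^{\ast}(v)$ by a standard Nadaraya--Watson-type calculation (using Assumption \ref{ass_Kernel} and the smoothness of $u\mapsto \E[(f-\phi)^2\mid V_i=u]p_V(u)$), the bias is $O_P(h^2)$, and the nuisance terms are negligible under the \emph{faster} rate Assumption \ref{ass_NuisanceRatesHet}.\ref{ass_Rate_F_Phi_fast} together with the two extra assumptions $\|f-\hat f^k\|_{L_2}/\sigma_Q^2(v)=o_P(h^{1/2})$ and $\|\phi-\hat\phi_2^k\|_{L_2}/\sigma_Q^2(v)=o_P(h^{1/2})$ imposed in the theorem, which are exactly what is needed to make $\hat D(v)/D^{\ast}(v)-1=o_P(1)$ after dividing by the possibly small factor $\sigma_Q^2(v)/D^{\ast}(v)=\Theta(1)$. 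The main obstacle is bookkeeping in step (d): the kernel weighting, the cross-fitting, and the weak-IV scaling by $\sigma_Q(v)$ must be tracked simultaneously, and the various $h$-exponents in Assumption \ref{ass_NuisanceRatesHet} have to match up so that every product term is $o_P((Nh)^{-1/2}\sigma_Q(v))$; the Hölder-type conditions in items \ref{ass_Rate_l_pq1}--6 are what allow the bound to be pushed through without introducing localised convergence assumptions on the nuisance estimators.
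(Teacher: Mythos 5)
Your proposal is correct and follows essentially the same route as the paper: the same ratio linearisation with numerator $\sqrt{Nh}\,\hat Q(\beta(v),v)$ handled by a Lindeberg--Feller CLT on the oracle terms plus an oracle-vs-estimated difference bound (splitting into bias, linear-in-nuisance, and product-of-nuisance terms controlled by conditioning on $I_k^c$, Cauchy--Schwarz, and H\"older), the same convergence-in-probability argument for the denominator using the extra rate conditions and $\sigma_Q^2(v)\gg (Nh)^{-1}$, and the same Slutsky/tower-property conclusion. The only cosmetic difference is that the paper absorbs your bias group (b) into the oracle CLT variables via $\E[W_i]=O(h^{5/2})$ rather than separating it out.
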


The following theorem ensures that the estimator $\hat\sigma^2(v)$ defined in \eqref{eq_DefVarianceHet} is a consistent estimator of $\sigma^2(v)$ hence one can construct asymptotic confidence intervals for $\beta(v)$.
\begin{theorem}\label{thm_VarHet}
    Additionally to the conditions of Theorem \ref{thm_AsNormHet}, assume that $\frac{\|f-\hat f^k\|_{L_4}}{\sigma_Q(v)}=O_p\left(h^{1/4}\right)$ and {$\frac{\|\phi-\hat \phi_2^k\|_{L_4}}{\sigma_Q(v)}=O_p\left(h^{1/4}\right)$} Then, we have
    $$\frac{\hat\sigma^2(v)}{\sigma^2(v)}\to 1\text{ in probability.}$$
\end{theorem}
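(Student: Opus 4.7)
The plan is to write $\hat\sigma^2(v)$ as a ratio, show that the denominator converges in probability to $\E[(D_i-\phi(X_i))(f(Z_i,X_i)-\phi(X_i))\mid V_i=v]^2 p_V(v)^2$ and that the numerator converges in probability to $\E[\epsilon_i^2(f(Z_i,X_i)-\phi(X_i))^2\mid V_i=v]\,p_V(v)\int K(x)^2\,\mathrm{d}x$, and then conclude by the continuous mapping theorem. The denominator is exactly (the square of) the quantity that appears as the denominator of $\hat\beta(v)$ in \eqref{eq_DefEstimatorHet}; its convergence has already been established in the course of proving Theorem~\ref{thm_AsNormHet}, and I would simply reuse that argument. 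In particular, its probability limit is nonzero by Assumption~\ref{ass_Het} combined with Assumption~\ref{ass_RegularityHet}, so the ratio is well defined asymptotically.

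For the numerator I would first replace $\hat\beta(v)$ by $\beta(v)$: expanding the square gives a main term with $\beta(v)$ plus cross terms of order $|\hat\beta(v)-\beta(v)|$ and $|\hat\beta(v)-\beta(v)|^2$, and since Theorem~\ref{thm_AsNormHet} yields $\hat\beta(v)-\beta(v)=O_P(\sigma(v)(Nh)^{-1/2})$, these vanish in probability once one checks that the kernel-weighted sums of $R_{D,i}^k R_{f,i}^k K^2$ and $(R_{D,i}^k)^2 (R_{f,i}^k)^2 K^2$ are $O_P(1)$ (via expectation bounds and Assumption~\ref{ass_RegularityHet}). Next I decompose
\[
R_{Y,i}^k-\beta(v)R_{D,i}^k = \epsilon_i + (\beta(V_i)-\beta(v))(D_i-\phi(X_i)) + (l(X_i)-\hat l^k(X_i)) + \beta(v)(\hat\phi_1^k(X_i)-\phi(X_i)),
\]
and $R_{f,i}^k = (f(Z_i,X_i)-\phi(X_i)) + (\hat f^k(Z_i,X_i)-f(Z_i,X_i)) - (\hat\phi_2^k(X_i)-\phi(X_i))$. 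Squaring and multiplying out isolates the oracle term $\epsilon_i^2(f(Z_i,X_i)-\phi(X_i))^2 K((V_i-v)/h)^2/h$, whose sample average converges in probability to $\E[\epsilon_i^2(f(Z_i,X_i)-\phi(X_i))^2\mid V_i=v]p_V(v)\int K(x)^2\mathrm{d}x$ by a standard kernel smoothing law of large numbers: the expectation converges by the change of variables $u=v+hw$ together with continuity/bounded-second-derivative assumptions in Assumption~\ref{ass_RegularityHet}, and the variance is $O((Nh)^{-1})=o(1)$ using the bounded conditional variance of $\epsilon_i^2$ from Assumption~\ref{ass_NuisanceRatesHet}.\ref{ass_CondVariance} and the $L_{p_1},L_{p_2}$ bounds on $\E[(f-\phi)^2\mid X_i]$.

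The main obstacle is the bookkeeping for the many remainder cross terms: each involves at least one factor $\hat f^k-f$, $\hat\phi_2^k-\phi$, $\hat l^k-l$, $\hat\phi_1^k-\phi$, or $\beta(V_i)-\beta(v)$, multiplied by $\epsilon_i$ or $f(Z_i,X_i)-\phi(X_i)$, and by $K((V_i-v)/h)^2/h$. Since the kernel factor contributes an extra $h^{-1}$ relative to the $h$ scaling in Theorem~\ref{thm_AsNormHet}, I would bound each cross term in expectation by Cauchy--Schwarz, using cross-fitting to treat the nuisance estimates as deterministic conditional on $I_k^c$, and carefully exploit $\|K^2\|_\infty\leq \|K\|_\infty \cdot K$ together with the Lebesgue differentiation-type conditional-expectation bounds implied by Assumption~\ref{ass_RegularityHet}. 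The key point is that terms quadratic in nuisance errors such as $(\hat f^k-f)^2(\hat\phi_1^k-\phi)^2$ require $L_4$ rather than $L_2$ control once one tries to trade $K^2$ for pointwise bounds; this is precisely where the additional $L_4$ rates $\|f-\hat f^k\|_{L_4}/\sigma_Q(v)=O_P(h^{1/4})$ and $\|\phi-\hat\phi_2^k\|_{L_4}/\sigma_Q(v)=O_P(h^{1/4})$ enter, converting what would otherwise be uncontrollable fourth-moment cross terms into $o_P(\sigma^2(v))$ contributions after dividing by the $\sigma^2_Q(v)$ that appears upon normalizing by the squared denominator.
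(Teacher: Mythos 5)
Your overall route matches the paper's: reuse the denominator analysis from the proof of Theorem \ref{thm_AsNormHet}, replace $\hat\beta(v)$ by $\beta(v)$ in the numerator at the cost of terms controlled by powers of $|\hat\beta(v)-\beta(v)|$, and compare the resulting expression with its oracle version via a kernel law of large numbers plus Cauchy--Schwarz on the cross terms. However, your bookkeeping misplaces the role of the $L_4$ assumptions, and as stated the plan needs rates that are not assumed. The terms that are quartic in nuisance errors, e.g. $(\Delta l_i^k)^2(\Delta\phi_{2,i}^k)^2K(\frac{V_i-v}{h})^2$ or, in your notation, $(\hat f^k-f)^2(\hat\phi_1^k-\phi)^2$, cannot be handled by conditional Cauchy--Schwarz into $L_4$ norms, because no $L_4$ rates are assumed for $\hat l^k$ or $\hat\phi_1^k$; the paper instead uses the elementary bound $\sum_i a_i^2b_i^2K_i^2\le\bigl(\sum_i|a_ib_i|K_i\bigr)^2$ together with the product-of-$L_2$-rates conditions in assertion \ref{ass_RateProducts} of Assumption \ref{ass_NuisanceRatesHet} (see \eqref{eq_RobVarHet7}--\eqref{eq_RobVarHet10}). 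The $L_4$ conditions of the theorem are needed elsewhere: in showing that the sum multiplying $(\hat\beta(v)-\beta(v))^2$, namely \eqref{eq_VarHetBounded}, is $O_P(1)$. There the cross terms $(D_i-\phi(X_i))^2(\Delta f_i^k)^2K(\frac{V_i-v}{h})^2$ and $(D_i-\phi(X_i))^2(\Delta\phi_{2,i}^k)^2K(\frac{V_i-v}{h})^2$ contain the unbounded factor $(D_i-\phi(X_i))^2$, which cannot be conditioned away, and Cauchy--Schwarz splits them into $(D_i-\phi(X_i))^4K^2$ (controlled by assertion \ref{ass_RD4} of Assumption \ref{ass_RegularityHet}) and $(\Delta f_i^k)^4K^2$ -- this is exactly where $\|f-\hat f^k\|_{L_4}/\sigma_Q(v)=O_P(h^{1/4})$ enters. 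You attribute this $O_P(1)$ bound to ``expectation bounds and Assumption \ref{ass_RegularityHet}'' alone, which is not sufficient.

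A second, smaller issue: for the oracle law of large numbers you bound the variance using the bounded conditional variance $\E[\epsilon_i^2|Z_i,X_i]\le C$. That controls the mean of $\epsilon_i^2(f(Z_i,X_i)-\phi(X_i))^2K^2$ but not its variance, which involves $\E[\epsilon_i^4(f(Z_i,X_i)-\phi(X_i))^4K^4]$; the paper invokes the explicit fourth-moment condition, assertion \ref{ass_ChebyVar} of Assumption \ref{ass_RegularityHet}, for the Chebyshev step in Lemma \ref{lem_RobVarHetOracle}. Both gaps are repairable with assumptions already present in the paper, but the proof as proposed does not close them.
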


\begin{remark}\label{rmk_EfficiencyHet}
Regarding the efficiency of the estimator $\hat\beta(v)$, we can exactly follow the argument from the proof of Proposition \ref{pro_CompVar}. Concretely, given a fixed kernel $K$, fixed bandwidth $h$, there is no transformation $\zeta(Z_i, X_i)$ such that the resulting estimator according to Remark \ref{rmk_ZetaHet} is more efficient than our estimator $\hat\beta(v)$ under homoscedasticity $\E[\epsilon_i^2|Z_i, X_i] = \mathrm{const.}$
\end{remark}

\subsubsection{Robust Confidence Sets}\label{subsubsec_robustCI}
The following theorem ensures that under the null hypothesis, $\hat Q(\beta(v), v)$ is asymptotically normal.
\begin{theorem}\label{thm_RobNumHet}
    Under Assumptions \ref{ass_Het}, \ref{ass_Kernel}, \ref{ass_Bandwidth}, \ref{ass_NuisanceRatesHet} and \ref{ass_RegularityHet}, we have that under $H_0: \beta(v) = \beta_0(v)$
    $$\frac{\sqrt{Nh}\hat Q(\beta_0(v), v)}{\sigma_Q(v)}\to \mathcal N(0, 1) \text{ in distribution}.$$

\end{theorem}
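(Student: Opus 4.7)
The plan is to decompose $\hat Q(\beta_0(v), v)$ under $H_0$ into an oracle term plus several remainders, apply a triangular array CLT to the oracle, and bound every remainder at rate $o_P(\sigma_Q(v)/\sqrt{Nh})$. Using that $V_i$ is $X_i$-measurable (Assumption~\ref{ass_Het}.3) gives $l(X_i) = \beta(V_i)\phi(X_i) + g(X_i)$, so under $H_0 : \beta(v) = \beta_0(v)$ one has
\begin{align*}
R_{Y,i}^k - \beta_0(v) R_{D,i}^k &= \epsilon_i + (\beta(V_i)-\beta(v))(D_i-\phi(X_i)) - (\hat l^k - l)(X_i) + \beta(v)(\hat\phi_1^k - \phi)(X_i),\\
R_{f,i}^k &= (f-\phi)(Z_i,X_i) + (\hat f^k - f)(Z_i,X_i) - (\hat\phi_2^k - \phi)(X_i).
\end{align*}
Multiplying these out produces the oracle $T_0 := (Nh)^{-1}\sum_{k,i\in I_k}\epsilon_i (f-\phi)(Z_i,X_i) K((V_i-v)/h)$ together with a smoothing bias term, an $\epsilon$-against-$(f,\phi_2)$-error term, an $(l,\phi_1)$-error-against-$(f-\phi)$ term, and two higher-order cross terms.

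For $T_0$, $\E[\epsilon_i|Z_i,X_i]=0$ (Assumption~\ref{ass_Het}.2) forces $\E[T_0]=0$. Changing variables $u = v+hy$ and using the smoothness built into Assumption~\ref{ass_RegularityHet} together with $\E[\epsilon_i^2|Z,X]\leq C$ (Assumption~\ref{ass_NuisanceRatesHet}.4) yields $Nh\,\Var(T_0) \to \sigma_Q^2(v)$, matching \eqref{eq_DefSigmaQHet}. The Lyapunov ratio decays like $(Nh)^{-1/2}$ since $h\ll N^{-1/5}$ implies $Nh \to \infty$, so the triangular-array CLT delivers $\sqrt{Nh}\,T_0/\sigma_Q(v) \to \mathcal N(0,1)$ in distribution.

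For the remainders, the smoothing bias has mean $\int (\beta(v+hy)-\beta(v))\,\E[(f-\phi)^2|V=v+hy]\,p_V(v+hy)\,K(y)\,dy$, where the identity $\E[(D-\phi)(f-\phi)|X] = \E[(f-\phi)^2|X]$—which follows from $\E[Df|X] = \E[f^2|X]$ and the tower property—collapses the conditional expectation. Taylor expansion in $h$ combined with $\int yK(y)\,dy = 0$ (Assumption~\ref{ass_Kernel}.2) kills the first-order term and leaves $O(h^2\sigma_Q^2(v))$, which is $o(\sigma_Q(v)/\sqrt{Nh})$ since $h\ll N^{-1/5}$ and $\sigma_Q(v)$ is bounded; the variance of this term is controlled similarly. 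The $\epsilon$-against-$(f,\phi_2)$ term has conditional mean zero given $I_k^c$ by cross-fitting and $\E[\epsilon|Z,X]=0$, and its conditional variance is dominated via $K^2 \leq \|K\|_\infty K$ by $\|\hat f^k - f\|_{L_2}^2 + \|\hat\phi_2^k - \phi\|_{L_2}^2$, matching Assumption~\ref{ass_NuisanceRatesHet}.\ref{ass_Rate_F_Phi_slow}. The $(l,\phi_1)$-error-against-$(f-\phi)$ term similarly has conditional mean zero (using $\E[f-\phi|X]=0$) and is split via Hölder with the conjugate pairs $(p_j,q_j)$ calibrated precisely by Assumption~\ref{ass_NuisanceRatesHet}.\ref{ass_Rate_l_pq1} and its analogue. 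The two cross terms are dispatched by taking expectations and Cauchy--Schwarz, matching Assumption~\ref{ass_NuisanceRatesHet}.\ref{ass_RateProducts}.

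The main obstacle will be the $(l,\phi_1)$-error term. A naïve $\|K\|_\infty^2$ bound returns an unscaled nuisance $L_2$-error, which is insufficient in a weak-IV regime where $\sigma_Q(v) \to 0$. The fix is the pointwise inequality $K^2 \leq \|K\|_\infty K$, which converts one kernel factor into a localization weight; integration against the kernel density then surfaces a conditional expectation at $V_i=v$ that naturally produces the normalizing $\sigma_Q^2(v)$, provided one applies Hölder with the $L_{p_j}$ control on $\E[(f-\phi)^2|X]/\sigma_Q^2(v)$ that is precisely the content of Assumption~\ref{ass_NuisanceRatesHet}.\ref{ass_Rate_l_pq1} and its $\phi_1$-analogue. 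Carrying the $\sigma_Q(v)$ normalization consistently through every estimate, so the bounds remain sharp under a weakening-instrument regime, is the delicate bookkeeping aspect of the proof.
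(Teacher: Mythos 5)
Your overall architecture --- an oracle term plus remainders, a triangular-array CLT for the oracle, and conditional-moment bounds driving each remainder to $o_P(1)$ after normalization by $\sigma_Q(v)/\sqrt{Nh}$ --- is the same as the paper's (Lemmas \ref{lem_RobNumHetOracle} and \ref{lem_RobNumHetDiff}). The only structural difference is cosmetic: the paper keeps the smoothing-bias factor $(\beta(V_i)-\beta(v))(D_i-\phi(X_i))(f(Z_i,X_i)-\phi(X_i))$ inside the oracle summand $W_i$ and absorbs it into the Lindeberg--Feller computation via $\sqrt{N}\,\E[W_i]=o(1)$ and $\E[W_i^2]=1+O(h^2)$, whereas you peel it off as a separate remainder; both work. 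Your identification of the oracle variance, the $\epsilon$-against-$(\Delta f,\Delta\phi_2)$ term, the $(\Delta l,\Delta\phi_1)$-against-$(f-\phi)$ term via H\"older with assertion \ref{ass_Rate_l_pq1} of Assumption \ref{ass_NuisanceRatesHet}, and the product-of-two-nuisance-errors term via assertion \ref{ass_RateProducts} all match the paper.

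There is, however, a genuine gap in one of the six products of your expansion, namely
$$\frac{1}{Nh}\sum_{k=1}^K\sum_{i\in I_k}(\beta(V_i)-\beta(v))(D_i-\phi(X_i))\bigl((\hat f^k-f)(Z_i,X_i)-(\hat\phi_2^k-\phi)(X_i)\bigr)K\left(\tfrac{V_i-v}{h}\right),$$
the smoothing-bias factor multiplied by a \emph{single} nuisance error. You fold this into the ``two higher-order cross terms \ldots{} matching Assumption \ref{ass_NuisanceRatesHet}.\ref{ass_RateProducts}'', but that assertion only controls products of \emph{two} nuisance-function $L_2$ errors and does not apply here; nor does this term have conditional mean zero, so the variance arguments used for your other single-error terms are unavailable. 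The correct treatment (the paper's displays \eqref{eq_RobNumHet3}--\eqref{eq_RobNumHet4}) is Cauchy--Schwarz conditional on $I_k^c$, giving a bound of order $\frac{\sqrt n}{\sqrt h\,\sigma_Q(v)}\E[(\beta(V_i)-\beta(v))^2(D_i-\phi(X_i))^2K(\tfrac{V_i-v}{h})^2]^{1/2}\|f-\hat f^k\|_{L_2}$; the expectation is $O(h^{3})$ because the associated function vanishes at $x=v$ (assertion \ref{ass_Reg5} of Assumption \ref{ass_RegularityHet} with Lemma \ref{lem_ExpFunKernel}), leaving the requirement $\sqrt{n}\,h\,\|f-\hat f^k\|_{L_2}/\sigma_Q(v)=o_P(1)$ --- which is precisely assertion \ref{ass_Rate_F_Phi_fast} of Assumption \ref{ass_NuisanceRatesHet}, the one assumption your proposal never invokes. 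Two smaller inaccuracies: under assertion \ref{ass_Reg1} of Assumption \ref{ass_RegularityHet} the bias mean is $O(h^2\sigma_Q(v))$, not $O(h^2\sigma_Q^2(v))$ (the conclusion $o(\sigma_Q(v)/\sqrt{Nh})$ still follows from $h\ll N^{-1/5}$ without assuming $\sigma_Q(v)$ bounded); and $h\ll N^{-1/5}$ does \emph{not} imply $Nh\to\infty$ --- the latter is a separate requirement that the paper also uses implicitly.
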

Moreover, the asymptotic variance of $\hat Q(\beta(v), v)$ can be consistently estimated, and hence, the approach for robust confidence sets is valid.
\begin{theorem}\label{thm_RobVarHet}
Under the assumptions of Theorem \ref{thm_RobNumHet} we have that under $H_0:\beta(v) = \beta_0(v)$
$$\frac{\widehat{SE}_Q^2(\beta_0(v), v)}{\sigma_Q^2(v)}\to 1\text{ in probability.}$$
\end{theorem}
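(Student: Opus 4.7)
The first step is to write $\widehat{SE}_Q^2(\beta_0(v),v) = T - h\hat Q(\beta_0(v),v)^2$, where $T$ denotes the first sum in the definition of $\widehat{SE}_Q^2$. By Theorem \ref{thm_RobNumHet} applied under $H_0$, $\hat Q(\beta_0(v),v) = O_P(\sigma_Q(v)/\sqrt{Nh})$, so $h\hat Q(\beta_0(v),v)^2 = O_P(\sigma_Q^2(v)/N) = o_P(\sigma_Q^2(v))$. It therefore suffices to show $T/\sigma_Q^2(v) \to 1$ in probability. To do this, introduce oracle versions of the residuals: since $V_i$ is $X_i$-measurable, $l(X_i) = \beta(V_i)\phi(X_i) + g(X_i)$, so $Y_i - l(X_i) = \beta(V_i)(D_i-\phi(X_i)) + \epsilon_i$. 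Under $H_0$, set
\[
W_i \coloneqq \epsilon_i + (\beta(V_i)-\beta_0(v))(D_i-\phi(X_i)), \qquad \tilde W_i \coloneqq f(Z_i, X_i)-\phi(X_i),
\]
and $T^* \coloneqq \frac{1}{Nh}\sum_{k,i} W_i^2 \tilde W_i^2 K^2((V_i-v)/h)$. Then $R_{Y,i}^k-\beta_0(v)R_{D,i}^k = W_i + \alpha_i^k$ and $R_{f,i}^k = \tilde W_i + \gamma_i^k$, where $\alpha_i^k = (l-\hat l^k)(X_i) - \beta_0(v)(\phi-\hat\phi_1^k)(X_i)$ and $\gamma_i^k = (\hat f^k - f)(Z_i,X_i) - (\hat\phi_2^k - \phi)(X_i)$.

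\textbf{Oracle step.} I would show $T^*/\sigma_Q^2(v)\to 1$ in probability by a standard mean/variance argument. Conditioning on $V_i$ and changing variables $u = v+ht$ gives
\[
\mathbb E[T^*] = \int \mathbb E[W_i^2 \tilde W_i^2 \mid V_i = v+ht]\, p_V(v+ht)\, K^2(t)\,dt,
\]
which, by the smoothness conditions on the relevant conditional-moment functions in Assumption \ref{ass_RegularityHet} and dominated convergence (the boundedness of $\E[\epsilon_i^2\mid Z_i,X_i]$ in Assumption \ref{ass_NuisanceRatesHet}(4) and the moment conditions in Assumption \ref{ass_RegularityHet} supplying an integrable envelope), tends to $\mathbb E[\epsilon_i^2(f-\phi)^2\mid V_i=v]\,p_V(v)\int K^2 = \sigma_Q^2(v)$, using that $W_i|_{V_i=v} = \epsilon_i$ under $H_0$. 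An analogous change of variables shows $\Var(T^*) = O(\sigma_Q^4(v)/(Nh))$, which is $o(\sigma_Q^4(v))$ because $Nh\to\infty$ (since $h\ll N^{-1/5}$ by Assumption \ref{ass_Bandwidth}). Chebyshev then yields $T^*/\sigma_Q^2(v)\to 1$ in probability.

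\textbf{Error bounds.} It remains to show $T - T^* = o_P(\sigma_Q^2(v))$. Expanding $(W_i+\alpha_i^k)^2(\tilde W_i+\gamma_i^k)^2 - W_i^2 \tilde W_i^2$ produces seven cross terms, each of which I would bound separately after conditioning on $I_k^c$. The ``pure error'' contribution $\frac{1}{Nh}\sum(\alpha_i^k)^2 \tilde W_i^2 K^2$ is controlled by iterating $\mathbb E[\tilde W_i^2\mid X_i]$ inward and applying Hölder with exponents $(p_1,q_1)$ and $(p_2,q_2)$ from Assumption \ref{ass_NuisanceRatesHet}(5)--(6), using $\|\mathbb E[\tilde W^2\mid X]\|_{L_{p_j}}\le C\sigma_Q^2(v)$ together with the rates $\|l-\hat l^k\|_{L_{2q_1}}, \|\phi-\hat\phi_1^k\|_{L_{2q_2}} = o_P(h^{1/2})$; a factor $1/h$ is absorbed by the rates. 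Contributions of the form $\frac{1}{Nh}\sum W_i^2(\gamma_i^k)^2 K^2$ admit the cruder bound $\|K\|_\infty^2\|\gamma^k\|_{L_2}^2/h\cdot \|\mathbb E[W_i^2\mid Z_i,X_i]\|_\infty$, which is $o_P(\sigma_Q^2(v))$ by Assumption \ref{ass_NuisanceRatesHet}(2) and the boundedness in Assumption \ref{ass_NuisanceRatesHet}(4). Finally, all first-order cross terms like $\frac{1}{Nh}\sum W_i\alpha_i^k \tilde W_i^2 K^2$ are handled by Cauchy--Schwarz against $T^*$ (which is $O_P(\sigma_Q^2(v))$) and one of the second-order terms already shown to be $o_P(\sigma_Q^2(v))$, yielding $\sigma_Q(v)\cdot o_P(\sigma_Q(v))=o_P(\sigma_Q^2(v))$.

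\textbf{Main obstacle.} The hard part is the bookkeeping of the seven cross terms combined with the correct use of the $(p_1,q_1)$ and $(p_2,q_2)$ Hölder slices, since each cross term consumes a different combination of Assumption \ref{ass_NuisanceRatesHet}'s conditions. A subtle point is that the terms involving $\gamma_i^k$ do not have $L_4$-type rates available (those are only assumed in Theorem \ref{thm_VarHet}), so they must be paired with a factor whose conditional second moment is uniformly bounded, which is exactly what Assumption \ref{ass_NuisanceRatesHet}(4) supplies for $W_i^2$ under $H_0$; this is the reason the robust variance theorem needs strictly weaker nuisance conditions than Theorem \ref{thm_VarHet}. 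A secondary technical point is that the oracle variance computation formally needs a bounded conditional fourth moment of $\epsilon_i$, which I expect to be part of the regularity conditions in Assumption \ref{ass_RegularityHet}.
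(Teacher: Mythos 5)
Your overall architecture matches the paper's proof: dispose of the $-h\hat Q^2$ term via Theorem \ref{thm_RobNumHet}, prove an oracle law of large numbers for $\frac{1}{Nh\sigma_Q^2(v)}\sum_i W_i^2\tilde W_i^2K^2$ by a mean/Chebyshev argument (the fourth-moment control you anticipate is exactly assertion \ref{ass_ChebyVar} of Assumption \ref{ass_RegularityHet}), and then kill the estimation error term by term, using Cauchy--Schwarz to reduce the first-order cross terms to the squared ones and H\"older with $(p_1,q_1)$, $(p_2,q_2)$ for the $(f-\phi)^2(\Delta l)^2$- and $(f-\phi)^2(\Delta\phi_1)^2$-type contributions. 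All of that is sound and is essentially what the paper does (its Lemmas \ref{lem_RobVarHetOracle} and \ref{lem_RobVarHetDiff}).

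There is, however, one concrete step that fails as written: your bound for $\frac{1}{Nh}\sum_i W_i^2(\gamma_i^k)^2K^2$ by $\|K\|_\infty^2\|\gamma^k\|_{L_2}^2 h^{-1}\,\|\E[W_i^2\mid Z_i,X_i]\|_\infty$. Assumption \ref{ass_NuisanceRatesHet}(4) bounds only $\E[\epsilon_i^2\mid Z_i,X_i]$; it does not bound $\E[(\beta(V_i)-\beta_0(v))^2(D_i-\phi(X_i))^2\mid Z_i,X_i]$, and no assumption in the paper gives a uniform conditional second moment for $D_i-\phi(X_i)$ (only localized quantities such as $\E[(D_i-\phi(X_i))^2(f(Z_i,X_i)-\phi(X_i))^2\mid V_i=v]\lesssim\sigma_Q^2(v)$ and smoothness of $x\mapsto(\beta(x)-\beta(v))^2\E[(D_i-\phi(X_i))^2\mid V_i=x]p_V(x)$ are available). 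Consequently you must split $W_i^2\le 2\epsilon_i^2+2(\beta(V_i)-\beta_0(v))^2(D_i-\phi(X_i))^2$ before conditioning: the $\epsilon_i^2(\Delta f_i^k)^2$ and $\epsilon_i^2(\Delta\phi_{2,i}^k)^2$ pieces go through exactly as you say, but the $(\beta(V_i)-\beta_0(v))^2(D_i-\phi(X_i))^2(\Delta f_i^k)^2K^2$ piece is handled in the paper by squaring the $L_1$ bound of display \eqref{eq_RobNumHet3}, which exploits the kernel localization $\E[(\beta(V_i)-\beta(v))^2(D_i-\phi(X_i))^2K^2]=O(h^3)$ (Lemma \ref{lem_ExpFunKernel} with assertion \ref{ass_Reg5}) together with the \emph{faster} rate $\|f-\hat f^k\|_{L_2}/\sigma_Q(v)=o_P(N^{-1/2}h^{-1})$ from Assumption \ref{ass_NuisanceRatesHet}(1) --- a condition your sketch never invokes. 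The same remark applies to the $\Delta\phi_{2}$ analogue. So the structural point in your ``main obstacle'' paragraph is right in spirit (the $\gamma$-terms must be paired with something whose conditional moment is controlled), but the uniform boundedness you need holds only for the $\epsilon_i$ part of $W_i$, and the remaining part requires the localization-plus-fast-rate argument.
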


\begin{remark}\label{rmk_WeakerConditions}
For the validity of the robust confidence sets, one still needs Assumption \ref{ass_NuisanceRatesHet} which does not allow for IV that are too weak (see also Remark \ref{rmk_HomRobCIComparison}). However, we do not need the additional conditions $\|f-\hat f^k\|_{L_2}/\sigma_Q^2(v),\, \|\phi-\hat \phi_2^k\|_{L_2}/\sigma_Q^2(v) = o_P(h^{1/2})$ that -- given a rate of convergence for $\hat f^k$ and $\hat \phi_2^k$ -- pose even stronger restrictions on the IV-strength $\sigma_Q^2(v)$. In that sense, the robust confidence set construction works under slightly less restrictive conditions than the asymptotic normality of the point estimator.
\end{remark}

\section{Applications}\label{sec_Applications}
We evaluate our method on simulated and real-world datasets.
The method is implemented in the \texttt{R} package \texttt{IVDML} {available on CRAN \citep{IVDMLCRAN}.}
The code to reproduce the simulation and real data results is available on GitHub.\footnote{\url{https://github.com/cyrillsch/IVDML_Application}} Before considering simulations and real data applications, we point out an aspect that is important when applying the methods in practice.

\subsection{Repeated Cross-Fitting}\label{sec_RepeatedCrossFitting}
The partition of the dataset into the $K$ folds is random. The finite sample estimators $\hat\beta$ and $\hat\sigma^2$ for the homogeneous treatment effect and $\hat\beta(v)$ and $\hat\sigma^2(v)$ for the heterogeneous treatment effect depend on this random partition. Let us focus on the homogeneous treatment effect in the following (the aggregation for the heterogeneous treatment effect is analogous). Following \cite{ChernozhukovDML}, the procedure can be repeated $S$ times with independent random partitions, yielding estimators $(\beta_s)_{s=1,\ldots, S}$ and $(\hat\sigma^2_{s})_{s=1,\ldots, S}$. The point estimator is then aggregated using the median, $\hat\beta^* = \mathrm{median}\{\hat\beta_s;\, s = 1,\ldots, S\}$ and the asymptotic variance additionally is updated by a term incorporating the sample variability, $\hat\sigma^{2, *} = \mathrm{median}\{\hat\sigma_s^2 + (\hat \theta_s - \hat\theta^*)^2;\, s = 1,\ldots, S\}$. 
For the robust confidence set construction, we use the same idea. Let $\left(\hat Q_s(\cdot)\right)_{s = 1,\ldots, S}$ and $\left(\widehat{SE}_{Q,s}^2(\cdot)\right)_{b = 1,\ldots, B}$ be obtained from \eqref{eq_RobNumHom} and \eqref{eq_RobSEHom}. Define $\hat Q^*(\cdot) = \mathrm{median}\left\{\hat Q_s(\cdot);\, s = 1,\ldots, S\right\}$ and $\widehat{SE}_{Q}^{2, s}(\cdot) = \mathrm{median}\left\{\widehat{SE}_{Q,s}^2(\cdot) + \left(\hat Q_s(\cdot)-\hat Q^*(\cdot)\right)^2;\, s = 1,\ldots, S\right\}$, where the medians are taken argument-wise.

\subsection{Simulations}\label{sec_Simulations}
We simulate from model \eqref{eq_ModelHet} using the following specification:
\begin{equation}\label{eq_SimSetup}
\begin{aligned}
    X_i, H_i, E_{Z,i}, E_{\delta, i}, E_{\epsilon, i} &\sim_{i.i.d.} \mathcal N(0, 1),\\
    Z_i&\gets 0.5 X_i + E_{Z,i}\\
    \delta_i &\gets 0.7 H_i + 0.7 E_{\delta, i},\\
    \epsilon_i &\gets \mathrm{sign}(H_i) - 0.5 + 0.5 E_{\epsilon, i},\\
    D_i&\gets f(Z_i, X_i) + \delta_i,\\
    V_i&\gets X_i,\\
    Y_i&\gets \beta(V_i) D_i + g(X_i) + \epsilon_i.
\end{aligned}
\end{equation}
In \eqref{eq_SimSetup}, $H_i$ acts as unobserved confounding, which is the reason we need the instrumental variable $Z_i$.
We fix $g(x)= \tanh(x)$ and consider combinations of the following settings for $f$ and $\beta$:
\begin{description}
    \item[(hom.)] $\beta(v) = 1$ (homogeneous treatment effect),
    \item[(het.)] $\beta(v) = 2\exp(-v^2/2)$ (heterogeneous treatment effect),
    \item[(Z lin.)] $f(z, x) = -\sin(x) + z$ (instrument acts linearly on treatment),
    \item[(Z nonlin.)] $f(z, x) = -\sin(x) + \cos(z) + 0.2z$ (instrument acts nonlinearly on treatment).
\end{description}
Moreover, we fit the following models, where we use 5-fold cross-fitting repeated $S = 10$ times and aggregate the results using the method described in Section \ref{sec_RepeatedCrossFitting}.
\begin{description}
    \item[(hom. T.E. mlIV)] The DML estimator, see \eqref{eq_DefEstimator} and Remark \ref{rmk_TwoStage} for the homogeneous treatment effect with machine learning instruments.
    \item[(hom. T.E. linearIV)] The original DML estimator for the homogeneous treatment effect based on \eqref{eq_IdentificationLinear} using the instrument $Z_i$ linearly \citep{ChernozhukovDML, EmmeneggerRegularizingDML}.
    \item[(het. T.E. mlIV)] The estimator \eqref{eq_DefEstimatorHet} for the heterogeneous treatment effect with machine learning instruments. Depends on the choice of a bandwidth and a kernel.
    \item[(het. T.E. linearIV)] Variant of the estimator \eqref{eq_DefEstimatorHet} for the heterogeneous treatment that uses the instruments only linearly. Depends on the choice of a bandwidth and a kernel.
\end{description}
In all the following simulations, we estimate the nuisance functions using generalized additive models with the function \texttt{gam} from the R package \texttt{mgcv} \citep{WoodGLM1, WoodGLM2} using the default settings. As kernel, we use the Epanechnikov kernel \citep{WassermannAllOfNonparametricStatistics}, $K(x) = \frac{3}{4\sqrt 5}(1-x^2/5)1\{|x|\leq \sqrt 5\}$, which is scaled such that $\int x^2K(x)\mathrm{d}x = 1$.

First, we simulate two datasets with $N = 1000$ i.i.d. samples, one from setting (het.)/(Z lin.) and one from setting (het.)/(Z nonlin.) and apply the methods (het. T.E. mlIV) and (het. T.E. linearIV). We use two different bandwidths. The first one is the normal reference rule / Silverman's rule of thumb \citep{SilvermanDensityEstimation, WassermannAllOfNonparametricStatistics} and is given by $h_N = 1.06\min(s, Q/1.34)N^{-1/5}$, where $s$ is the sample standard deviation and $Q$ the interquartile range of $\{V_i\}_{i=1}^N$. For valid inference, we need undersmoothing $h_N\ll N^{-1/5}$ (Assumption \ref{ass_Bandwidth}) and hence for the smaller bandwidth, we replace the $N^{-1/5}$ by $N^{-2/7}$ in the normal reference rule, which is also, what is done in \cite{FanEstimationOfCATE}. The estimated heterogeneous treatment effect functions together with (pointwise) standard confidence intervals (dashed lines) and (pointwise) robust confidence sets (shaded region) can be seen in Figure \ref{fig_Visualization}. From the plots, we see that in the setting (Z lin.), the methods (het. T.E. mlIV) and (het. T.E. linearIV) yield almost identical results, both in terms of point estimate and confidence sets. In contrast, in the setting (Z nonlin.), the method (het. T.E. mlIV) seems to yield more stable results with smaller confidence bounds, whereas the results using (het. T.E. linearIV) are very unstable even in regions where there are a lot of samples $V_i$. This is consistent with our theory (see Remarks \ref{rmk_ZetaHet} and \ref{rmk_EfficiencyHet}). Moreover, we also observe that the standard confidence intervals and the robust confidence sets are almost identical in regions with a lot of samples $V_i$ and large instrument strength. In contrast, if either there are not a lot of samples $V_i$ near a point $v$ (e.g. at the boundary) or the strength of the instrument is not very strong (e.g. method (het. T.E. linearIV) in the setting (Z nonlin.)), the robust confidence sets can be considerably larger than the standard confidence sets.

\begin{figure}[t]
\centering
\includegraphics[width=0.8\textwidth]{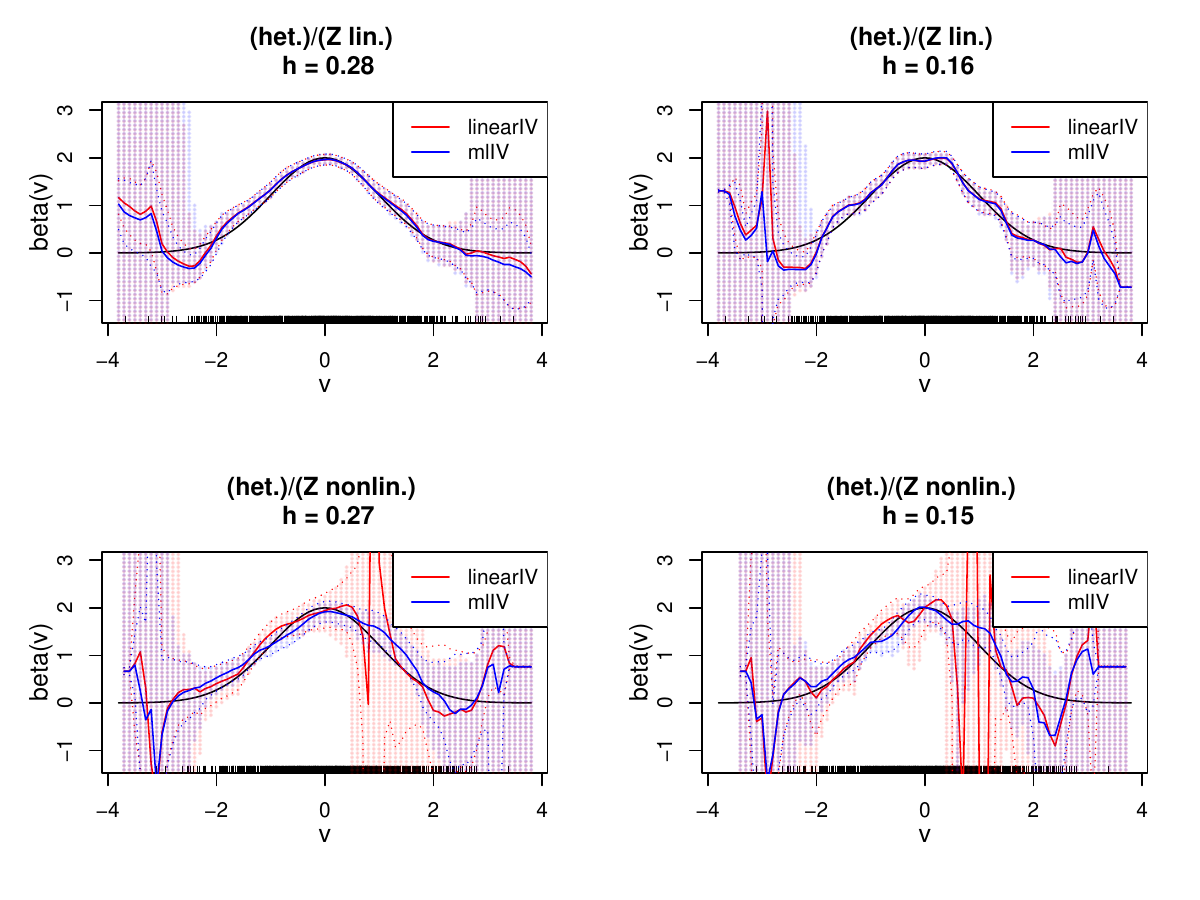}
\caption{Visualization for setting (het.)/(Z lin.) (top) and (het.)/(Z nonlin.) (bottom). On the left, the bandwidth $h$ is chosen according to the normal reference rule / Silverman's rule of thumb. On the right, the bandwidth is chosen according to the normal reference rule multiplied by $N^{1/5}/N^{2/7}$ to ensure undersmoothing. The black solid line is the true heterogeneous treatment effect. Solid lines are the point estimates for the methods (het. T.E. linearIV) (red) and (het. T.E. mlIV) (blue). Dashed lines are the (pointwise) standard 95\%-confidence interval. Shaded regions are the (pointwise) robust 95\%-confidence sets. Red-shaded regions correspond to (het. T.E. linearIV) and blue-shaded regions correspond to (het. T.E. mlIV). Purple-shaded regions are the intersection of the two.}
\label{fig_Visualization}
\end{figure}

We now want to investigate the efficiency of the different estimators and the coverage of the different confidence sets more extensively. For each combination of setting (hom.)/(het.) with (Z lin.)/(Z nonlin.) we consider sample size $N\in \{250, 500, 1000, 1500, 2000\}$ and simulate {$1000$} datasets from \eqref{eq_SimSetup}. We still use generalized additive models (\texttt{gam}) for the nuisance functions and the Epanechnikov kernel for methods (het. T.E. mlIV) and (het. T.E. linearIV). Moreover, we only consider undersmoothing for the bandwidth, i.e. we use the normal reference rule times $N^{1/5}/N^{2/7}$. Let us first consider the settings (hom.)/(Z lin.) and (hom.)/(Z nonlin.). Here, we use the methods (hom. T.E. mlIV) and (hom. T.E. linearIV) as well as the methods (het. T.E. mlIV) and (het. T.E. linearIV) for $\hat\beta(0)$ and $\hat\beta(1.5)$. The results are shown in Figure \ref{fig_SimHom1DStrong}. As it is to be expected, the mean squared errors (MSEs) of the estimators decrease with $N$ and the MSEs of the (hom. T.E.) estimators are smaller than the MSEs of the comparable (het. T.E.) estimators. The MSEs of the (het. T.E.) estimators for $v = 0$ are smaller than the MSEs for the (het. T.E.) estimators for $v = 1.5$, which is due to the fact that the density of $V_i$ is larger at $v = 0$ than at $v = 1.5$. Moreover, we see that both the standard and robust confidence sets exhibit good empirical coverage close to the desired 0.95. Regarding the difference between settings (Z lin.) and (Z nonlin.), we see that when the instrument $Z$ acts linearly on the treatment (Z lin.), the performance of linearIV and mlIV is very similar for both the homogeneous treatment effect and the heterogeneous treatment effect. On the other hand, we observe that in the setting (Z nonlin.), the methods based on mlIV have significantly lower MSE than their corresponding linearIV counterpart. This reflects the theoretical results that using the optimal machine learning instruments can lead to a significant efficiency gain.
\begin{figure}[t]
\centering
\includegraphics[width=0.8\textwidth]{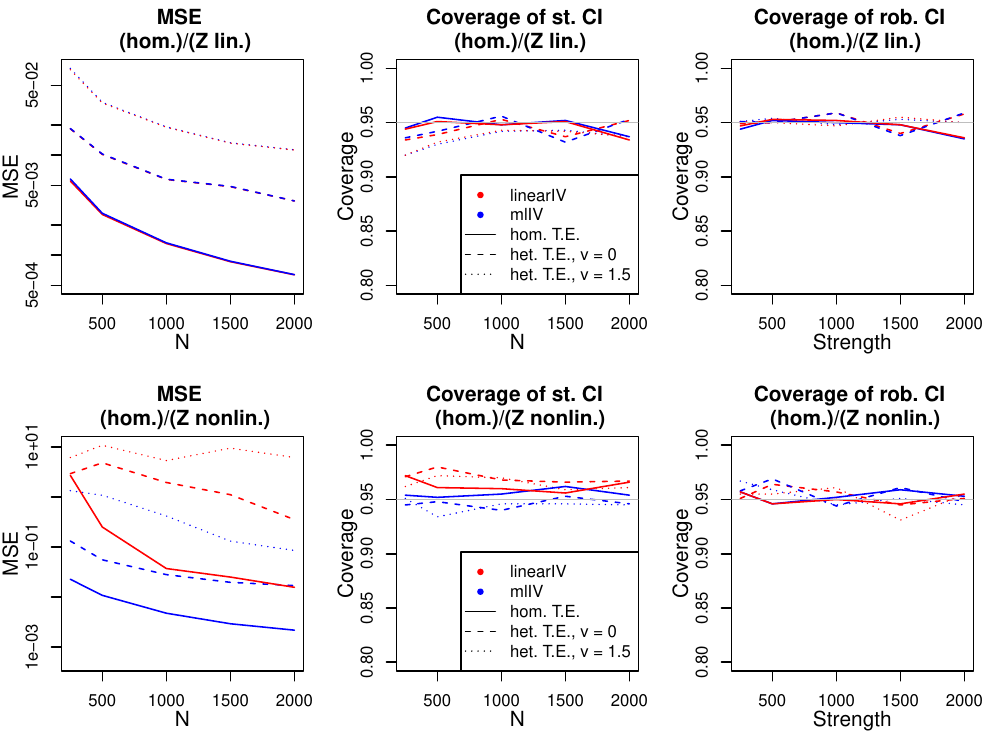}
\caption{Simulation results for setting (hom.)/(Z lin.) (top) and (hom.)/(Z nonlin.) (bottom). Left panel: mean squared error (MSE) of $\hat \beta$ (hom. T.E.), $\hat\beta(0)$ (het. T.E. for $v = 0$) and $\hat\beta(1.5)$ (het. T.E. for $v = 1.5$). Middle panel: coverage of standard confidence intervals. Right panel: coverage of robust confidence sets. Methods based on (linearIV) are in red and methods based on (mlIV) are in blue. Solid lines correspond to the methods (hom. T.E. linearIV) and (hom. T.E. mlIV). Dashed lines correspond to the methods (het. T.E. linearIV) and (het. T.E. mlIV) for $v = 0$. Dotted lines correspond to the methods (het. T.E. linearIV) and (het. T.E. mlIV) for $v = 1.5$.} 
\label{fig_SimHom1DStrong}
\end{figure}

Let us now consider the settings (het.)/(Z lin.) and (het.)/(Z nonlin.). Here, we only fit the methods that actually estimate a heterogeneous treatment effect. The results are shown in Figure \ref{fig_SimHet1DStrong}. The overall picture is similar to before. However, the coverage of the standard confidence interval for the (het. T.E.) methods at $v = 1.5$ are a bit too low, but the coverage of the robust confidence sets is better. In the setting (Z nonlin.), the methods based on mlIV have again lower MSEs than the methods based on linearIV.
\begin{figure}[t]
\centering
\includegraphics[width=0.8\textwidth]{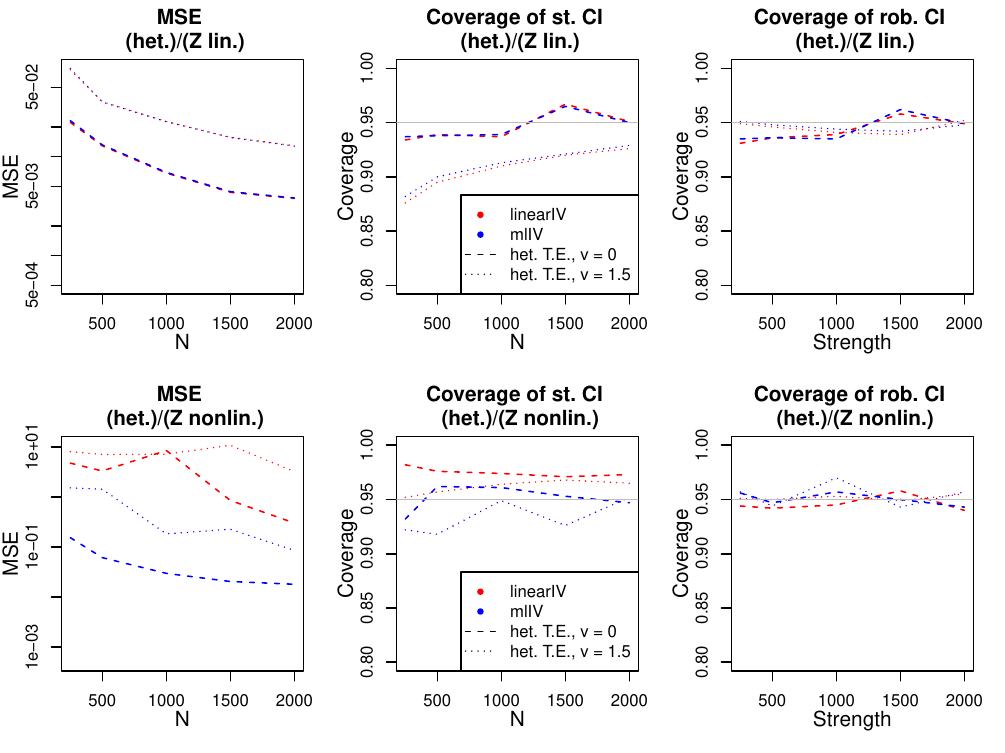}
\caption{Simulation results for setting (het.)/(Z lin.) (top) and (het.)/(Z nonlin.) (bottom). Left panel: mean squared error (MSE) of  $\hat\beta(0)$ (het. T.E. for $v = 0$) and $\hat\beta(1.5)$ (het. T.E. for $v = 1.5$). Middle panel: coverage of standard confidence intervals. Right panel: coverage of robust confidence sets. Methods based on (linearIV) are in red and methods based on (mlIV) are in blue. Dashed lines correspond to the methods (het. T.E. linearIV) and (het. T.E. mlIV) for $v = 0$. Dotted lines correspond to the methods (het. T.E. linearIV) and (het. T.E. mlIV) for $v = 1.5$.} 
\label{fig_SimHet1DStrong}
\end{figure}

{In Appendix \ref{sec_FurtherSim}, we present additional simulations: in Section \ref{sec_VaryIVStrength}, we vary the IV strength and observe that the standard confidence intervals may suffer from substantial undercoverage, whereas the robust confidence intervals still exhibit coverage close to the desired level of 0.95.
In Section \ref{sec_MultiDimSim}, we consider higher-dimensional covariates and nuisance function estimation using boosted regression trees instead of generalized additive models.}

\subsection{Real Data Experiments}\label{sec_RealData}
We apply our methods to two well-known instrumental variables datasets. Additionally to \texttt{gam} (generalized additive models using the \texttt{R} package \texttt{mgcv} \citep{WoodGLM1, WoodGLM2}), we also use boosted regression trees and random forest to estimate the nuisance functions. We use the \texttt{R} packages \texttt{xgboost} \citep{ChenXgboostPaper, ChenXgboostPackage} and \texttt{ranger} \citep{RangerPackage}. We refer to the three machine learning methods as \texttt{gam}, \texttt{xgboost} and \texttt{random forest}. In contrast to \texttt{gam}, for which we simply use the default settings, the methods \texttt{xgboost} and \texttt{random forest} need to be tuned. In view of the cross-fitting scheme, one would need to tune the machine learning methods for each of the $K$ different training datasets of size $N-N/K$ separately to avoid cross-contamination of the sample. However, recent research by \cite{BachHyperparameterTuning} showed that this is usually not an issue and it suffices to tune the machine learning methods once on the full sample. Hence, in the following, for each nuisance function that has to be estimated, we tune the machine learner once on the full sample to obtain the optimal hyperparameters (with respect to cross-validated prediction accuracy for \texttt{xgboost} and with respect to out-of-bag prediction error for \texttt{random forest}) and then keep them fixed for the $K$ subsamples in all $S$ repetitions of the aggregation procedure from Section \ref{sec_RepeatedCrossFitting}.

\subsubsection{Effect of Institutions on Economic Growth}
We revisit the question of the effect of institutions on economic performance. The dataset originates from \cite{AcemogluColonialOrigins} and is available from the \texttt{R}-package \texttt{hdm} \citep{ChernozhukovHDM}. In the following, we refer to it as the AJR dataset. It was previously analyzed in \cite{ChernozhukovDML} and \cite{EmmeneggerRegularizingDML}. The AJR dataset contains 64 country-level observations of the GDP, an index measuring protection against expropriation, settler mortality, and geographic information. In our notation, the response $Y$ is the logarithm of GDP per capita, $D$ is the expropriation protection index, the instrument $Z$ is the logarithm of settler mortality and $X$ contains the latitude and dummies for the continents (Africa, Asia, North America and South America). While the relationship between expropriation protection and GDP might be endogenous, it is argued that settler mortality is a valid instrument, since it is unlikely that settler mortality has an influence on today's GDP other than through institutions, especially when controlling for geographic information. We refer to \cite{AcemogluColonialOrigins, ChernozhukovDML} for a more detailed discussion of the dataset.

We first apply the methods (hom. T.E. linearIV) and (hom. T.E. mlIV) to the dataset. We use 5-fold cross-fitting with $S = 200$ repetitions in the procedure from Section \ref{sec_RepeatedCrossFitting} and for the nuisance function estimates, we use \texttt{gam}, \texttt{xgboost} and \texttt{random forest} as described in the previous subsection. The results can be found in Table \ref{tab_AJR}. Point estimates, standard deviations and standard confidence intervals were also considered in \cite{ChernozhukovDML, EmmeneggerRegularizingDML} and our results are consistent with their findings up to variability due to sampling and tuning of the machine learning methods (the sample size is only $N = 64$). The robust confidence set based on (linearIV) was also considered in Section 13.3 of \cite{ChernozhukovCausalML}. What is new here, are the results based on (mlIV). We see that in most of the cases, the use of (mlIV) compared to (linearIV) leads to a reduction of the standard deviation of the estimator, the length of the standard confidence interval and the size of the robust confidence sets. We also observe that the robust confidence sets can be disconnected. Recall from Section \ref{sec_RobCI} that the robust confidence set is defined as the sublevel set of a parabola, which can be disconnected if the parabola has positive curvature.
\begin{table}[t]
\begin{center}
\begin{tabular}{ r|l|l|l|l} 
  & $\hat\beta$ & $\hat\sigma/\sqrt{n}$ & standard CI & robust CI  \\ 
  \hline
 linearIV (gam) & 0.72 & 0.27 & $[0.20, 1.25]$ & $[0.29, 4.02]$ \\ 
 mlIV (gam) & 0.58 & 0.16 & $[0.27, 0.90]$ & $[0.28, 1.81]$\\ 
 \hline
 linearIV (xgboost) & 0.62 & 0.35 & $[-0.07, 1.31]$ & $(-\infty, \infty)$ \\ 
 mlIV (xgboost) & 1.05 & 0.35 & $[0.36, 1.75]$ & $(-\infty, -3.84]\cup [0.47, \infty)$\\ 
 \hline
 linearIV (random forest) & 0.84 & 0.43 & $[0.00, 1.68]$ & $(-\infty, -0.50]\cup [0.04, \infty)$\\ 
 MLIV (random forest) & 0.94 & 0.36 & $[0.24, 1.65]$ & $(-\infty, -3.90] \cup [0.25,\infty)$\\
\end{tabular}
\end{center}
\caption{Results for the methods (hom. T.E. linearIV) and (hom. T.E. mlIV) on the AJR dataset.}
\label{tab_AJR}
\end{table}

One might wonder if the homogeneous treatment effect is a realistic model assumption for the AJR dataset or if one should consider heterogeneity with respect to geography. We do not aim to give a comprehensive answer to this question, but for illustration purposes, we apply the methods (het. T.E. mlIV) and (het. T.E. linearIV), where we take the latitude as the variable $V_i$ with respect to which we want to quantify the heterogeneity.
In Figure \ref{fig_AJR}, we see the estimated heterogeneous effects with respect to latitude for the machine learning methods \texttt{gam} and \texttt{xgboost}. We omit the results for random forest as they look qualitatively similar. As for the initial example in Section \ref{sec_Simulations}, we use the Epanechnikov kernel and the bandwidth according to the normal reference rule as well as a smaller bandwidth to ensure undersmoothing. From the plots, we cannot infer that there is a significant heterogeneity with respect to latitude. In fact, the robust confidence sets are unbounded for most values of latitude for both (linearIV) and (mlIV), especially when using \texttt{xgboost} for the nuisance functions and the smaller bandwidth. Note that due to the large difference between the robust confidence sets and the standard confidence intervals, we should not trust the standard confidence intervals. Especially, the size of the standard confidence intervals seems to decrease for large latitude. It is important to note, that this is an artifact of the data. For example, the confidence interval with latitude around $0.7$ is only based on the two datapoints with latitude close to $0.7$. Hence, there is no substantive value in the size of those confidence intervals. We conclude that for the AJR dataset, the sample size of $N = 64$ is too small to make a sensible statement about a potential heterogeneous treatment effect.

\begin{figure}[t]
\centering
\includegraphics[width=0.8\textwidth]{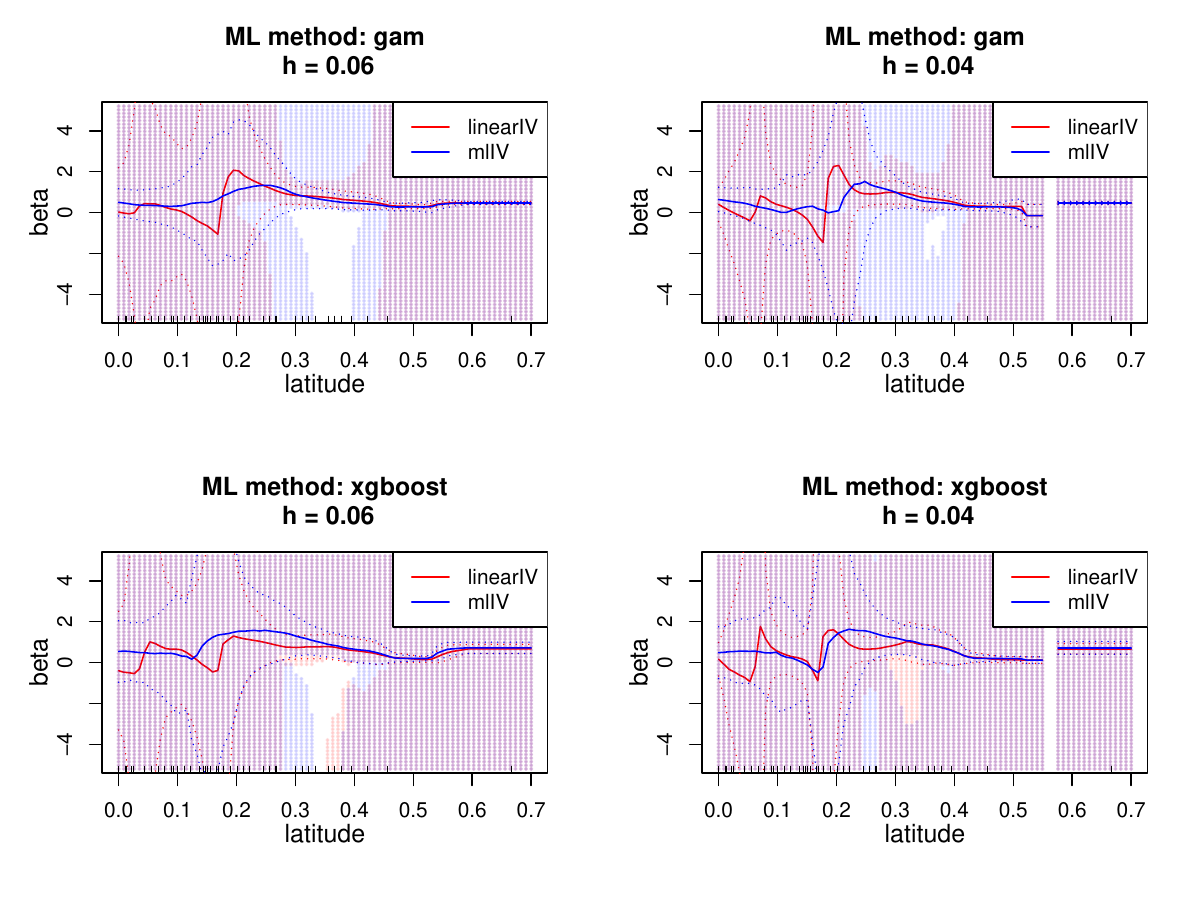}
\caption{Heterogeneous treatment effect for the AJR data. Top: nuisance functions are estimated using \texttt{gam}. Bottom: nuisance functions are estimated using \texttt{xgboost}. On the left, the bandwidth $h$ is chosen according to the normal reference rule / Silverman's rule of thumb. On the right, the bandwidth is chosen according to the normal reference rule multiplied by $N^{1/5}/N^{2/7}$ to ensure undersmoothing. Solid lines are the point estimates for the methods (het. T.E. linearIV) (red) and (het. T.E. mlIV) (blue). Dashed lines are the (pointwise) standard 95\%-confidence interval. Shaded regions are the (pointwise) robust 95\%-confidence sets. For the smaller bandwidth, there is a region around 0.57, for which no results can be estimated because no datapoints are close enough. Red-shaded regions correspond to (het. T.E. linearIV) and blue-shaded regions correspond to (het. T.E. mlIV). Purple-shaded regions are the intersection of the two.}
\label{fig_AJR}
\end{figure}

\subsubsection{Effect of Education on Wage}
We consider the famous dataset by \cite{CardCollege} on the effect of education on wage. The dataset is available from the \texttt{R}-package \texttt{ivmodel} \citep{KangIVmodel}. We refer to the dataset as the Card dataset. It contains data for $N = 3010$ individuals from the  National Longitudinal Survey of Young
Men. We follow the recent analysis by \cite{GuoTSCI} and use the same treatment, response, instrument and covariates as there. Concretely, in our notation, the response $Y_i$ is the logarithm of the wage of individual $i$, $D_i$ is the years of education of individual $i$, the instrument $Z_i$ is an indicator of whether the individual grew up near a four-year college and $X_i$ are covariates, including years of experience and binary indicators for race and geographic information. The idea for using college proximity as an instrument for education is that, while there are factors that influence both education and wage, they should be independent of whether an individual grew up near a college or not. As before, we first apply the methods (hom. T.E. linearIV) and (hom. T.E. mlIV) to the dataset. We use 5-fold cross-fitting with $S = 50$ repetitions for the procedure from Section \ref{sec_RepeatedCrossFitting}. For the nuisance function estimates, we use \texttt{gam}, \texttt{xgboost} and \texttt{random forest}. The results can be found in Table \ref{tab_Card}. In contrast to the AJR dataset, we see that the use of (mlIV) compared to (linearIV) does not lead to a reduction of the standard deviation of the estimator, the length of the standard confidence interval and the size of the robust confidence sets. It seems that on this dataset, there is nothing to gain from using (mlIV) and one loses a bit of performance (i.e. larger confidence intervals) due to the additional variability that is introduced in the procedure by having to learn the instruments instead of using them linearly.
\begin{table}[t]
\begin{center}
\begin{tabular}{ r|l|l|l|l} 
  & $\hat\beta$ & $\hat\sigma/\sqrt{n}$ & standard CI & robust CI  \\ 
  \hline
 linearIV (gam) & 0.14 & 0.05 & $[0.04, 0.24]$ & $[0.05, 0.28]$ \\ 
 mlIV (gam) & 0.14 & 0.05 & $[0.03, 0.24]$ & $[0.03, 0.28]$\\ 
 \hline
 linearIV (xgboost) & 0.12 & 0.04 & $[0.04, 0.21]$ & $[0.04, 0.23]$ \\ 
 mlIV (xgboost) & 0.11 & 0.05 & $[0.02, 0.20]$ & $[0.01, 0.23]$\\ 
 \hline
 linearIV (random forest) & 0.13 & 0.05 & $[0.04, 0.23]$ & $[0.05, 0.25]$\\ 
 MLIV (random forest) & 0.14 & 0.08 & $[-0.01, 0.3]$ & $[0.01, 0.60]$\\
\end{tabular}
\end{center}
\caption{Results for the methods (hom. T.E. linearIV) and (hom. T.E. mlIV) on the Card dataset.}
\label{tab_Card}
\end{table}

Also for the Card dataset, we apply the methods (het. T.E. mlIV) and (het. T.E. linearIV) to see if we can detect heterogeneity of the treatment effect. As the variable $V_i$, we take the years of experience of individual $i$. 
The estimated heterogeneous treatment effects with respect to experience are displayed in Figure \ref{fig_Card}. Again, we only show the results for the machine learning methods \texttt{gam} and \texttt{xgboost} and omit the results for \texttt{random forest} since they look similar. Moreover, we still use the Epanechnikov kernel and the bandwidth according to the normal reference rule as well as a smaller bandwidth to ensure undersmoothing.

Similarly to the AJR datasets, we cannot infer a significant heterogeneity from the plots. Moreover, we see that the methods based on (mlIV) have larger confidence intervals, in particular when the nuisance functions are estimated using \texttt{xgboost}. Interestingly, there are some regions for experience, where the value of the heterogeneous treatment effect seems to be quite narrowly determined. On the other hand, there are regions, where there is a lot of uncertainty in the value of the heterogeneous treatment effect (indicated by larger standard and robust confidence sets). Surprisingly, those effects do not only happen in low-density regions of experience (near the boundary) but also around 10 years of experience, where the density of experience is rather high. Unfortunately, it is hard to determine, if this is an artifact of the sample or indeed has a substantive meaning. 
\begin{figure}[t]
\centering
\includegraphics[width=0.8\textwidth]{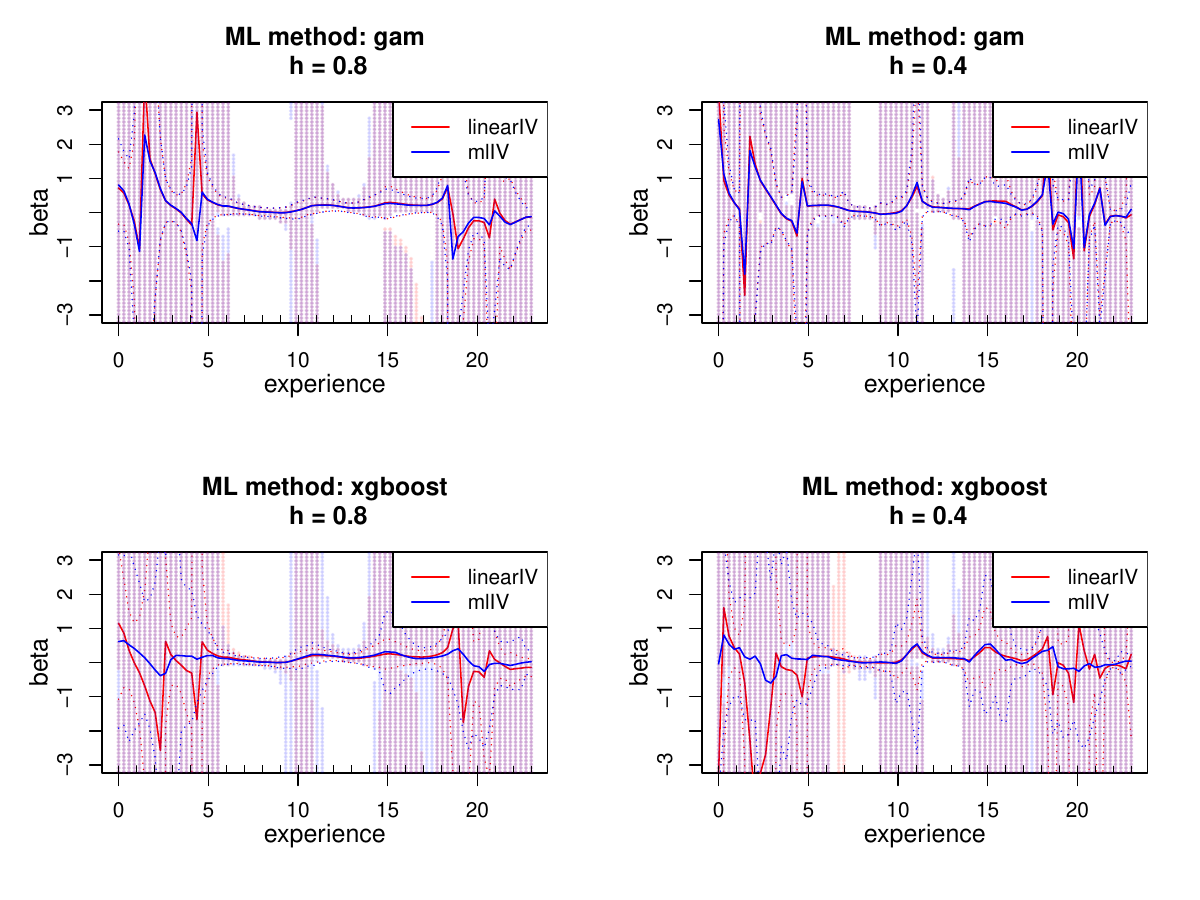}
\caption{Heterogeneous treatment effect for the Card data. Top: nuisance functions are estimated using \texttt{gam}. Bottom: nuisance functions are estimated using \texttt{xgboost}. On the left, the bandwidth $h$ is chosen according to the normal reference rule / Silverman's rule of thumb. On the right, the bandwidth is chosen according to the normal reference rule multiplied by $N^{1/5}/N^{2/7}$ to ensure undersmoothing. Solid lines are the point estimates for the methods (het. T.E. linearIV) (red) and (het. T.E. mlIV) (blue). Dashed lines are the (pointwise) standard 95\%-confidence interval. Shaded regions are the (pointwise) robust 95\%-confidence sets. Red-shaded regions correspond to (het. T.E. linearIV) and blue-shaded regions correspond to (het. T.E. mlIV). Purple-shaded regions are the intersection of the two.}
\label{fig_Card}
\end{figure}

\section{Discussion}\label{sec_Discussion}
We developed a novel method to perform inference on potentially heterogeneous treatment effects using double machine learning and efficient machine learning instruments. Our method is simple and yields results that are easy to interpret. We gave conditions under which our estimator is asymptotically Gaussian and also constructed confidence sets that are robust to a certain extent of weak instruments. Additionally, we hope that our paper is also helpful for readers who are looking for an accessible discussion of the DML estimator with machine learning instruments for the homogeneous treatment effect.

{An important avenue for further research are methods to decide from data if a constant treatment effect model is enough or if a heterogeneous treatment effect model is more appropriate. As an informal test, one can check if the homogeneous treatment effect estimate (i.e., a horizontal line) lies inside the confidence sets for the heterogeneous treatment effect. In fact, from the corresponding plots in our two real-data case studies, we observed that this is (more or less) the case which leads to the conclusion that the heterogeneous treatment effect model is probably not needed. However, the confidence sets that we provided are only pointwise. Hence,} it would also be interesting to have uniform confidence bands. We expect that this is possible using multiplier bootstrap, as for example done in \cite{FanEstimationOfCATE}.

{Moreover, it would be interesting to further investigate bandwidth selection techniques beyond the simple heuristics applied in Section \ref{sec_Applications} (although the heuristics seem to work sufficiently well). For standard nonparametric regression with kernel smoothing, the optimal bandwidth (with respect to mean squared error) can be obtained using cross-validation. In our setting, this is not so easy, although one can in principle find a criterion that is minimized by $\beta(\cdot)$, which could be used for cross-validation. However, the problem is that by construction, when keeping the sample fixed, there can be some combinations of $h$ and $v$ such that the denominator in the definition \eqref{eq_DefEstimatorHet} of $\hat\beta(v)$ is zero, so a simple cross validation based on a criterion involving $\hat\beta(\cdot)$ will run into problems. A related question is if one can choose the bandwidth in some sense adaptively to local smoothness or instrument strength.}

A limitation of our approach is that it only considers heterogeneity with respect to a univariate quantity, i.e. the dimension of $V_i$ is $1$. If the treatment effect is potentially heterogeneous with respect to the full set of covariates $X_i$, i.e. the true model is $Y_i = \beta(X_i)D_i + g(X_i) + \epsilon_i$, equation \eqref{eq_IdentificationHet} identifies $\E[\beta(X_i)w(Z_i, X_i, v)|V_i = v]$ with the weight function
$$w(Z_i, X_i, v) = \frac{(\E[D_i|Z_i, X_i]-\E[D_i|X_i])^2}{\E[(\E[D_i|Z_i, X_i]-\E[D_i|X_i])^2|V_i = v]}.$$
In particular, this is not just the projection $\E[\beta(X_i)|V_i = v]$ of the true heterogeneous treatment effect on $V_i$.

A straightforward extension of our approach would be to use multivariate kernel smoothing to deal with multivariate $V_i$ and the results should carry over. However, note that the conditions on the convergence rates of the nuisance function estimators (Assumption \ref{ass_NuisanceRatesHet}) depend on the bandwidth $h$. In higher dimensions, they will depend on $h^d$, where $h$ is the dimension of $h$ and it will quickly become unrealistic to have machine learning methods that achieve these convergence rates.

We can also take one step back. In principle, we can replace the outer conditional expectations in \eqref{eq_IdentificationHet} by arbitrary machine learning methods potentially also allowing for higher-dimensional $V_i$. A natural assumption might be that $V_i = X_i$ which is potentially high-dimensional, but that the treatment effect is only heterogeneous with respect to a small subset of the components of $X_i$, so one could use methods enforcing sparsity for the outer conditional expectation. If the machine learning method in the numerator allows for inference and the machine learning method in the denominator is consistent, one might still be able to perform inference for the heterogeneous treatment effect. 

{The aim of this work was to present a simple and ready-to-use method for inference in combination with user-chosen machine learning algorithms, but from a theoretical perspective, we did not strive for full generality. For the homogeneous case, estimators with better theoretical properties could probably be obtained using higher order influence functions \citep{RobinsHigherOrderInfluenceFunctions} or triple sample splitting \citep{NeweyCrossFittingAndFastRemainderRates}. For the heterogeneous treatment effect -- at least in the case of binary treatment and without endogeneity -- minimax optimality was discussed by \cite{KennedyTowardsOptimalDoublyRobustEstimation, KennedyMinimaxRates}. Whereas this was not the focus of our work, it would be interesting to obtain similar results for the endogenous setting and allowing for continuous treatment as considered here}.

We leave all these interesting directions to further research.

\section*{Acknowledgements}
We are grateful to Malte Londschien for helpful discussions about weak instrumental variables.
CS received funding from the Swiss National Science Foundation, grant no. 214865.
ZG acknowledges financial support for visiting the Institute of Mathematical Research (FIM) at ETH Zurich.

\begin{appendix}
\section{Technical Conditions for the Heterogeneous Treatment Effect}\label{app_TechAssHTE}
The following assumptions are needed to be able to estimate and perform inference for the heterogeneous treatment effect $\beta(v)$.
\begin{assumption}\label{ass_RegularityHet}
\mbox\newline
    \begin{enumerate}
        \item The density $p_V(\cdot)$ of $V_i$ with respect to Lebesgue measure satisfies that $p_V(v) \geq C$ for a constant $C>0$ independent of $N$. \label{ass_Density}
        \item $\sigma_Q^2(v)\asymp \E[(f(Z_i, X_i)-\phi(X_i))^2|V_i = v]$. \label{ass_CompSigmaQ}
        \item $\E\left[(D_i-\phi(X_i))^2(f(Z_i, X_i)-\phi(X_i))^2|V_i = v\right]\lesssim \sigma_Q^2(v)$. \label{ass_CompSigmaQ2}
        \item The function $\rho(x) = (\beta(x)-\beta(v))\E[(f(Z_i, X_i)-\phi(X_i))^2|V_i = x] p_V(x)$ is twice differentiable for $x$ in the support of $V_i$ and there exists $C<\infty$ independent of $N$ such that $\|\rho''\|_\infty/\sigma_Q(v) < C$.\label{ass_Reg1}
        \item The function $\rho(x) = \E[\epsilon_i^2(f(Z_i, X_i)-\phi(X_i))^2|V_i = x] p_V(x)$ is twice differentiable for $x$ in the support of $V_i$ and there exists $C<\infty$ independent of $N$ such that $\|\rho''\|_\infty/\sigma_Q^2(v) < C$.\label{ass_Reg2}
        \item The function $\rho(x) = (\beta(x)-\beta(v))\E[\epsilon_i(D_i-\phi(X_i))(f(Z_i, X_i)-\phi(X_i))^2|V_i = x] p_V(x)$ is twice differentiable for $x$ in the support of $V_i$ and there exists $C<\infty$ independent of $N$ such that $\|\rho''\|_\infty/\sigma_Q^2(v) < C$.\label{ass_Reg3}
        \item The function $\rho(x) = (\beta(x)-\beta(v))^2\E[(D_i-\phi(X_i))^2(f(Z_i, X_i)-\phi(X_i))^2|V_i = x] p_V(x)$ is twice differentiable for $x$ in the support of $V_i$ and there exists $C<\infty$ independent of $N$ such that $\|\rho''\|_\infty/\sigma_Q^2(v) < C$.\label{ass_Reg4}
        \item The function $\rho(x) = (\beta(x)-\beta(v))^2\E[(D_i-\phi(X_i))^2|V_i = x] p_V(x)$ is twice differentiable for $x$ in the support of $V_i$ and there exists $C<\infty$ independent of $N$ such that $\|\rho''\|_\infty < C$.\label{ass_Reg5}
        \item The function $\rho(x) = \E[(f(Z_i, X_i)-\phi(X_i))^2|V_i = x] p_V(x)$ is twice differentiable for $x$ in the support of $V_i$ and there exists $C<\infty$ independent of $N$ such that $\|\rho''\|_\infty/\sigma_Q^2(v) < C$.\label{ass_Reg6}
        \item The function $\rho(x) = \E[(D_i-\phi(X_i))^2(f(Z_i, X_i)-\phi(X_i))^2|V_i = x] p_V(x)$ is twice differentiable for $x$ in the support of $V_i$ and there exists $C<\infty$ independent of $N$ such that $\|\rho''\|_\infty/\sigma_Q^2(v) < C$.\label{ass_Reg7}
        \item The function $\rho(x) = \E[(D_i-\phi(X_i))^2|V_i = x] p_V(x)$ is twice differentiable for $x$ in the support of $V_i$ and there exists $C<\infty$ independent of $N$ such that $\|\rho''\|_\infty < C$.\label{ass_Reg8}
        \item There exists $\eta>0$ and $C>0$ independent of $N$ such that
        $$\textstyle \frac{\E\left[\left|\left(\epsilon_i + (\beta(V_i)-\beta(v))(D_i-\phi(X_i))\right)\left(f(Z_i, X_i)-\phi(X_i)\right) K\left(\frac{V_i - v}{h}\right) \right|^{2+\eta}\right]}{\sigma_Q(v)^{2+\eta}} \leq Ch.$$\label{ass_Lindeberg}
        \item 
        $$\textstyle\frac{\E\left[\left(\epsilon_i + (\beta(V_i)-\beta(v))(D_i-\phi(X_i))\right)^4\left(f(Z_i, X_i)-\phi(X_i)\right)^4K\left(\frac{V_i - v}{h}\right)^4\right]}{\sigma_Q^4(v) }\ll N h^2.$$\label{ass_ChebyVar}
        \item $$\textstyle \E\left[(D_i-\phi(X_i))^4K\left(\frac{V_i - v}{h}\right)^2\right]= O(h).$$\label{ass_RD4}
    \end{enumerate}
\end{assumption}

\begin{remark}
    Note that all the quantities in Assumption \ref{ass_RegularityHet} may depend on $N$ even if it is not reflected in the notation. Assertion \ref{ass_Density} intuitively means that we need to have enough samples $i$ with $V_i$ close to $v$ to be able to estimate the heterogeneous treatment effect. Assertions \ref{ass_CompSigmaQ} and \ref{ass_CompSigmaQ2} are needed in the case where the IV strength $\sigma_Q^2(v)$ is decreasing. Essentially they state that a decreasing $\sigma_Q^2(v)$ is attributed to $f(Z_i, X_i)$ being close to $\phi(X_i)$. They are for example satisfied if $\E[\epsilon_i^2|Z_i, X_i]$ and $\E[(D_i-\phi(X_i))^2|Z_i, X_i]$ are both bounded away from zero and infinity independent of $N$. Assertions \ref{ass_Reg1}-\ref{ass_Reg8} are smoothness assumptions on $\beta(x)$, $p_V(x)$ and conditional moments of $\epsilon_i$, $D_i$, $f(Z_i, X_i)$ and $\phi(X_i)$. These smoothness conditions need to hold independently of $N$. In view of Lemma \ref{lem_ExpFunKernel} in Appendix \ref{sec_SomeLemmas} (with the kernel $K$ potentially replaced by $K^{2+\eta}$ or $K^4$), assertions \ref{ass_Lindeberg}-\ref{ass_RD4} are comparably weak and hold in particular, if the quantities inside the expectations satisfy similar smoothness conditions as assertions \ref{ass_Reg1}-\ref{ass_Reg8}.
\end{remark}

\allowdisplaybreaks[1]
\section{Proof of Proposition \ref{pro_CompVar}}\label{app_ProofEfficiency}
Applying the tower property for conditional expectations to \eqref{eq_HomDefSigma}, we have that
$$\sigma^2 = \frac{\E\left[\epsilon_i^2(\E[D_i|Z_i,X_i]-\E[D_i|X_i])^2\right]}{\E\left[(\E[D_i|Z_i,X_i]-\E[D_i|X_i])^2\right]^2}.$$
Since by assumption, $\E[\epsilon_i^2|Z_i, X_i] = \E[\epsilon_i^2]$, we can simplify
\begin{align*}
    \sigma^2  &= \frac{\E[\epsilon_i^2]}{\E\left[(\E[D_i|Z_i, X_i]-\E[D_i|X_i])^2\right]},\\
    \sigma^2_{\zeta} &=\frac{\E\left[\epsilon_i^2\right]\E\left[(\zeta(Z_i, X_i)-\E[\zeta(Z_i, X_i)|X_i])^2\right]}{\E\left[(\E[D_i|Z_i, X_i]-\E[D_i|X_i])(\zeta(Z_i, X_i)-\E[\zeta(Z_i, X_i)|X_i])\right]^2}.
\end{align*}
For the denominator of $\sigma^2_{\zeta}$, we have by the Cauchy-Schwarz inequality
\begin{align*}
     &\E\left[(\E[D_i|Z_i, X_i]-\E[D_i|X_i])(\zeta(Z_i, X_i)-\E[\zeta(Z_i, X_i)|X_i])\right]^2\\
    &\leq\E\left[(\E[D_i|Z_i, X_i]-\E[D_i|X_i])^2\right]\E\left[(\zeta(Z_i, X_i)-\E[\zeta(Z_i, X_i)|X_i])^2\right].
\end{align*}
It follows that $\sigma^2\leq \sigma^2_{\zeta}$ with equality if and only if there exists $\alpha\neq 0$ such that $\E[D_i|Z_i, X_i]-\E[D_i|X_i]=\alpha (\zeta(Z_i, X_i)-\E[\zeta(Z_i, X_i)|X_i])$ a.s. 
Now, observe that if $\E[D_i|Z_i, X_i]-\E[D_i|X_i]=\alpha (\zeta(Z_i, X_i)-\E[\zeta(Z_i, X_i)|X_i])$, 
\eqref{eq_CondVarEqual} holds with $\psi(X_i) = \E[D_i|X_i]-\alpha\E[\zeta(Z_i, X_i)|X_i]$. If on the other hand \eqref{eq_CondVarEqual} holds, we have
\begin{align*}
    \E[D_i|Z_i, X_i]-\E[D_i|X_i] &= \alpha \zeta(Z_i, X_i) +\psi(X_i)-\alpha\E[\zeta(Z_i, X_i)|X_i]-\psi(X_i)\\
    &= \alpha(\zeta(Z_i, X_i)-\E[\zeta(Z_i, X_i)|X_i]).
\end{align*}
This concludes the proof.

\section{More on the Robust Confidence Set}
\subsection{Explicit Expression for the Robust Confidence Set}\label{app_ExplicitRobCI}
Using basic manipulations, we can derive explicit expressions for $\mathcal C_N^\alpha(v)$. For this, define the quantities
\begin{align*}
    A_N &= \frac{1}{Nh}\sum_{k=1}^K\sum_{i\in I_k} R_{Y, i}^k R_{f, i}^k K\left(\frac{V_i - v}{h}\right)\\
    B_N &= \frac{1}{Nh}\sum_{k=1}^K\sum_{i\in I_k} R_{D, i}^k R_{f, i}^k K\left(\frac{V_i - v}{h}\right)\\
    C_N & = \frac{1}{Nh}\sum_{k=1}^K\sum_{i\in I_k}\left(R_{Y, i}^k\right)^2\left(R_{f, i}^k\right)^2K\left(\frac{V_i - v}{h}\right)^2\\
    E_N & = \frac{2}{Nh}\sum_{k=1}^K\sum_{i\in I_k}R_{Y, i}^kR_{D, i}^k\left(R_{f, i}^k\right)^2K\left(\frac{V_i - v}{h}\right)^2\\
    F_N & = \frac{1}{Nh}\sum_{k=1}^K\sum_{i\in I_k}\left(R_{D, i}^k\right)^2\left(R_{f, i}^k\right)^2K\left(\frac{V_i - v}{h}\right)^2.
\end{align*}
Then, we have that
$$\textstyle \mathcal C_N^\alpha(v) = \left\{\gamma : (A_N - \gamma B_N)^2\leq \frac{z_{1-\alpha/2}^2}{Nh}\left(C_N -\gamma E_N + \gamma^2 F_N-h(A_N-\gamma B_N)^2\right)\right\}$$
or equivalently
$\mathcal C_N^\alpha(v) = \left\{\gamma : R_N\gamma^2+S_N\gamma +T_N\leq 0\right\}$
with
\begin{align*}
    R_N &= B_N^2 + \frac{z_{1-\alpha/2}^2}{Nh}(hB_N^2-F_N)\\
    S_N &= -2A_N B_N + \frac{z_{1-\alpha/2}^2}{Nh}(E_N-2hA_NB_N)\\
    T_N &= A_N^2 + \frac{z_{1-\alpha/2}^2}{Nh}(hA_N^2-C_N)
\end{align*}

Hence, we have the following explicit expression

\begin{equation}\label{eq_ExpressionRobCI}
    \mathcal C_N^\alpha(v) = \begin{cases}
\left[\frac{-S_N - \sqrt{S_N^2 - 4 R_N T_N}}{2R_N}, \frac{-S_N + \sqrt{S_N^2 - 4 R_N T_N}}{2R_N} \right], & R_N > 0 \\
\mathbb R \setminus \left[\frac{-S_N + \sqrt{S_N^2 - 4 R_N T_N}}{2R_N}, \frac{-S_N - \sqrt{S_N^2 - 4 R_N T_N}}{2R_N} \right], & R_N < 0.
\end{cases}
\end{equation}\
By construction, we have that $\hat\beta(v) = A_N/B_N\in \mathcal C_N^\alpha(v)$. Hence, in the case $R_N >0$, it will always hold that $S_N^2- 4R_N T_N >0$. In the case $R_N < 0$, it may happen that $S_N^2-4R_N T_N < 0$, in which case, $\mathcal C_N^\alpha(v) = \mathbb R$.

\subsection{Adaptivity of the Robust Confidence Set}\label{app_AdaRobCI}
\begin{proposition}\label{pro_AdaCI}
    Assume that there exist $A_0, B_0, C_0, E_0, F_0\in \mathbb R$ with $B_0\neq 0$ such that $A_N = A_0 + o_P(1)$, $B_N = B_0 + o_P(1)$, $C_N = C_0 + o_P(1)$, $E_N = E_0 + o_P(1)$ and $F_N = F_0 + o_P(1)$. Then, the confidence set $\mathcal C_N^\alpha(v)$ is asymptotically equivalent to the standard confidence interval, i.e. for fixed $\alpha \in (0, 1)$,
    $$\textstyle C_N^\alpha(v) = \left[\hat\beta(v) - \frac{z_{1-\alpha/2}\hat\sigma(v)}{\sqrt{Nh}} + o_P\left(\frac{1}{\sqrt{Nh}}\right),\,\hat\beta(v) + \frac{z_{1-\alpha/2}\hat\sigma(v)}{\sqrt {Nh}} + o_P\left(\frac{1}{\sqrt{Nh}}\right)\right]$$
\end{proposition}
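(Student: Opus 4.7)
The plan is to work directly from the closed form \eqref{eq_ExpressionRobCI}. Write $z = z_{1-\alpha/2}$ for brevity. Since $R_N = B_N^2 + (z^2/(Nh))(hB_N^2 - F_N) = B_0^2 + o_P(1)$ with $B_0^2 > 0$, we have $R_N > 0$ with probability tending to one, so we may restrict attention to the first case in \eqref{eq_ExpressionRobCI}. In that case $\mathcal C_N^\alpha(v)$ is a bounded interval with center $-S_N/(2R_N)$ and half-width $\sqrt{S_N^2 - 4R_NT_N}/(2R_N)$, and the goal reduces to showing that these two quantities equal $\hat\beta(v) + o_P((Nh)^{-1/2})$ and $z\hat\sigma(v)/\sqrt{Nh} + o_P((Nh)^{-1/2})$, respectively.

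The crux is an algebraic identity for $S_N^2 - 4R_NT_N$. Expanding both squares and collecting powers of $z^2/(Nh)$, the leading terms $4A_N^2 B_N^2$ cancel. A short computation shows that the $hA_N^2 B_N^2$ cross-contributions also cancel, leaving
\begin{equation*}
S_N^2 - 4R_NT_N = \frac{4z^2}{Nh}\bigl(B_N^2 C_N - A_N B_N E_N + A_N^2 F_N\bigr) + O_P\bigl((Nh)^{-2}\bigr).
\end{equation*}
Factoring $B_N^2$ and recalling $\hat\beta(v) = A_N/B_N$, the parenthesis equals $B_N^4\bigl(C_N - \hat\beta(v) E_N + \hat\beta(v)^2 F_N\bigr)/B_N^2 = B_N^4 \hat\sigma^2(v)$ by the definitions of the quantities and of $\hat\sigma^2(v)$ in \eqref{eq_DefVarianceHet}. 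This yields the clean identity $S_N^2 - 4R_NT_N = 4z^2 B_N^4 \hat\sigma^2(v)/(Nh) + O_P((Nh)^{-2})$.

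The rest is continuous-mapping bookkeeping. Using $R_N = B_N^2(1 + O_P((Nh)^{-1}))$ and $\hat\sigma(v) = O_P(1)$ (which follows from the assumed consistency of $A_N,B_N,C_N,E_N,F_N$), the half-width evaluates to $(zB_N^2\hat\sigma(v)/\sqrt{Nh})/R_N \cdot (1 + o_P(1)) = z\hat\sigma(v)/\sqrt{Nh} + o_P((Nh)^{-1/2})$. For the center, $-S_N/(2R_N) = A_NB_N/R_N + O_P((Nh)^{-1}) = \hat\beta(v)\cdot(B_N^2/R_N) + O_P((Nh)^{-1}) = \hat\beta(v) + O_P((Nh)^{-1})$, and since $Nh\to\infty$ this is $o_P((Nh)^{-1/2})$. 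Combining gives the claim. The one genuinely nontrivial step is the expansion of $S_N^2 - 4R_NT_N$: one must keep first-order terms in $z^2/(Nh)$ while relegating $(z^2/(Nh))^2$ terms to the remainder, and verify the cancellation of the $hA_N^2B_N^2$ cross-terms so that exactly the combination $B_N^2 C_N - A_N B_N E_N + A_N^2 F_N$ — which by construction is $B_N^4\hat\sigma^2(v)$ — survives; everything else is bounded arithmetic with consistent sequences.
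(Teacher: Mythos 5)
Your proposal is correct and follows essentially the same route as the paper: both restrict to the event $R_N>0$, expand the quadratic-formula endpoints $\bigl(-S_N\pm\sqrt{S_N^2-4R_NT_N}\bigr)/(2R_N)$, identify the discriminant as $\tfrac{4z_{1-\alpha/2}^2}{Nh}\bigl(B_N^2C_N-A_NB_NE_N+A_N^2F_N\bigr)+O_P\bigl((Nh)^{-2}\bigr)$, and recognize the surviving combination as $B_N^4\hat\sigma^2(v)$ via \eqref{eq_DefVarianceHet}. You merely make explicit the cancellation of the $hA_N^2B_N^2$ cross-terms that the paper leaves implicit.
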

\begin{remark}
    If the data generating process is fixed, then under the assumptions of Theorem \ref{thm_AsNormHet}, one can easily show that the conditions of Proposition \ref{pro_AdaCI} are satisfied by using the same arguments as in the proof of Theorem \ref{thm_AsNormHet}.
\end{remark}
\begin{proof}[Proof of Proposition \ref{pro_AdaCI}]
We use expression \eqref{eq_ExpressionRobCI}. Since $B_N = B_0 + o_P(1)$, $F_N = F_0 + o_P(1)$ and $B_0\neq 0$,  we have that $\Prob(R_N > 0)\to 1$ and we only need to consider the case $R_N > 0$ in \eqref{eq_ExpressionRobCI}. Hence, we can write
$$\textstyle \mathcal C_N^\alpha(v) = \left[\frac{A_N B_N \pm \frac{z_{1-\alpha/2}}{\sqrt {Nh}}\sqrt{B_N^2 C_N + A_N^2 F_N - A_N B_N E_N + O_P\left(\frac{1}{Nh}\right)}+ O_P\left(\frac{1}{Nh}\right)}{B_N^2 + O_P\left(\frac{1}{Nh}\right)}\right].$$
Because $B_N^2 = B_0^2 + o_P(1)$ and using that $|\sqrt{a + b}-\sqrt a|\leq \sqrt{|b|}$ for $a>0$ and $b \geq -a$ it follows that 
\begin{align*}
\textstyle 
\mathcal C_N^\alpha(v)&= \textstyle \left[\frac{A_N}{B_N}\pm \frac{z_{1-\alpha/2}}{\sqrt {Nh}}\sqrt{\frac{C_N}{B_N^2}+\frac{A_N^2F_N}{B_N^4} -\frac{A_NE_N}{B_N^3} + O_P\left(\frac{1}{Nh}\right)} + O_P\left(\frac{1}{Nh}\right) \right]\\
& = \textstyle \left[\hat\beta(v)\pm \frac{z_{1-\alpha/2}}{\sqrt {Nh}}\sqrt{\frac{C_N}{B_N^2}+\hat\beta(v)^2 \frac{F_N}{B_N^2} -\hat\beta(v) \frac{E_N}{B_N^2}} + O_P(1/ (Nh)) \right]
\end{align*}
where by \eqref{eq_DefEstimatorHet}, $\hat\beta(v) = A_N/B_N$. Note that
$$\textstyle \frac{C_N}{B_N^2}+\hat\beta(v)^2 \frac{F_N}{B_N^2} -\hat\beta(v) \frac{E_N}{B_N^2}=\frac{\frac{1}{Nh}\sum_{k=1}^K\sum_{i\in I_k}(R_{Y,i}^k-\hat \beta(v)R_{D, i}^k)^2(R_{f, i}^k)^2K\left(\frac{V_i - v}{h}\right)^2}{\left(\frac{1}{Nh}\sum_{k=1}^K\sum_{i \in I_k}R_{D, i}^kR_{f, i}^kK\left(\frac{V_i - v}{h}\right)\right)^2},$$
which is equal to $\hat\sigma^2(v)$ defined in \eqref{eq_DefVarianceHet}.
Hence, the robust and the non-robust confidence intervals are asymptotically equivalent.
\end{proof}

\section{Proofs for Section \ref{sec_HetTE}}
The proofs use similar ideas as \cite{ChernozhukovDML} combined with standard ideas from univariate kernel smoothing.
\subsection{Proof of Theorem \ref{thm_RobNumHet}}
The strategy is to replace the estimated nuisance functions $\hat f^k, \hat\phi_1^k, \hat\phi_2^k, \hat l^k$ by the true functions $f, \phi, l$ in $\hat Q(\beta_0(v), v)$. In fact, the theorem follows from the following two lemmas.
\begin{lemma}\label{lem_RobNumHetOracle}
    Under the conditions of Theorem \ref{thm_RobNumHet},
    \begin{align*}
        &\textstyle \frac{1}{\sqrt{Nh}\sigma_Q(v)}\sum_{i=1}^N(Y_i - l(X_i)-\beta_0(v)(D_i-\phi(X_i)))(f(Z_i, X_i)-\phi(X_i))K\left(\frac{V_i - v}{h}\right)\\
        &\to \mathcal N(0,1), \text{ in distribution.}
    \end{align*}
\end{lemma}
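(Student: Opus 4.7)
The plan is to apply Lyapunov's central limit theorem to the i.i.d.\ summands
$$W_i := \bigl(Y_i - l(X_i) - \beta_0(v)(D_i - \phi(X_i))\bigr)\bigl(f(Z_i, X_i) - \phi(X_i)\bigr)K\bigl((V_i - v)/h\bigr).$$
First I would rewrite $W_i$ in a cleaner form using the model and the null hypothesis. From \eqref{eq_ModelHet} and the $X_i$-measurability of $V_i$ (Assumption \ref{ass_Het}.3), $l(X_i) = \beta(V_i)\phi(X_i) + g(X_i)$, so $Y_i - l(X_i) = \beta(V_i)(D_i - \phi(X_i)) + \epsilon_i$. Under $H_0:\beta(v) = \beta_0(v)$, this gives
$$W_i = \bigl[\epsilon_i + (\beta(V_i) - \beta(v))(D_i - \phi(X_i))\bigr]\bigl(f(Z_i, X_i) - \phi(X_i)\bigr)K\bigl((V_i - v)/h\bigr).$$

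Next I would compute $\E[W_i]$ and $\Var(W_i)$ via the standard kernel identity: for any twice differentiable $\rho$ with bounded second derivative, $\int K((x-v)/h)\rho(x)\,dx = h\rho(v) + O(h^3\|\rho''\|_\infty)$, since symmetry of $K$ kills the linear term (an analogous identity holds with $K^2$). For $\E[W_i]$, the $\epsilon_i$-contribution vanishes by the tower property and $\E[\epsilon_i|Z_i,X_i]=0$; the remaining piece, upon conditioning on $V_i$, produces the integrand $\rho(x) = (\beta(x)-\beta(v))\E[(D_i-\phi(X_i))(f(Z_i,X_i)-\phi(X_i))|V_i=x]p_V(x)$, which vanishes at $v$, whence $\E[W_i] = O(h^3\sigma_Q(v))$ via a smoothness bound akin to item \ref{ass_Reg1} of Assumption \ref{ass_RegularityHet}. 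For $\Var(W_i)$ I would expand $\E[W_i^2]$ into the $\epsilon_i^2$-term, the cross term, and the $(\beta(V_i)-\beta(v))^2(D_i-\phi(X_i))^2$-term. The $\epsilon_i^2$-term, via the $K^2$-version of the identity and item \ref{ass_Reg2}, equals $h\sigma_Q^2(v) + O(h^3\sigma_Q^2(v))$; the cross term is $O(h^3\sigma_Q^2(v))$ by item \ref{ass_Reg3} (the integrand vanishes at $V_i=v$); the squared term is $O(h^3\sigma_Q^2(v))$ by item \ref{ass_Reg4}. Subtracting $\E[W_i]^2 = O(h^6\sigma_Q^2(v))$ yields $\Var(W_i)/(h\sigma_Q^2(v)) \to 1$.

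Finally I would assemble Lyapunov's CLT for $(\sqrt{Nh}\sigma_Q(v))^{-1}\sum_i W_i$. The deterministic mean contribution is $\sqrt{N}\E[W_i]/(\sqrt h \sigma_Q(v)) = O(\sqrt{Nh^5}) \to 0$ by the undersmoothing Assumption \ref{ass_Bandwidth}. The Lyapunov ratio, using the moment bound of item \ref{ass_Lindeberg}, is
$$\frac{N\E[|W_i - \E W_i|^{2+\eta}]}{(\sqrt{Nh}\sigma_Q(v))^{2+\eta}} \;\lesssim\; \frac{NCh\sigma_Q(v)^{2+\eta}}{(Nh)^{1+\eta/2}\sigma_Q(v)^{2+\eta}} \;=\; C(Nh)^{-\eta/2} \;\to\; 0,$$
where $Nh\to\infty$ is implicit in the other standing assumptions (e.g.\ item \ref{ass_ChebyVar} forces $Nh^2\to\infty$, combined with $h\to 0$). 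The main obstacle is the careful bookkeeping of the normalization by $\sigma_Q(v)$ inside the second-order kernel expansions: because $\sigma_Q(v)$ may tend to zero in the weak-IV regime, the smoothness bounds in items \ref{ass_Reg1}--\ref{ass_Reg7} are calibrated in units of $\sigma_Q(v)$ (or $\sigma_Q^2(v)$) precisely so that the $O(h^3)$ remainders remain negligible relative to the leading $h\sigma_Q^2(v)$ variance. Once that is tracked, the remainder of the argument is routine.
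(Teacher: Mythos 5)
Your proposal is correct and follows essentially the same route as the paper: the same rewriting of the summand as $\bigl(\epsilon_i + (\beta(V_i)-\beta(v))(D_i-\phi(X_i))\bigr)(f(Z_i,X_i)-\phi(X_i))K((V_i-v)/h)$, the same second-order kernel expansions (the paper's Lemma \ref{lem_ExpFunKernel}) with the $\sigma_Q(v)$-calibrated smoothness bounds of Assumption \ref{ass_RegularityHet} for the mean and variance, and a triangular-array CLT verified through the $(2+\eta)$-moment bound of item \ref{ass_Lindeberg} (the paper states Lindeberg--Feller but checks the condition via exactly your Lyapunov-type estimate). The only cosmetic difference is that your $\rho$ for the mean is written with $\E[(D_i-\phi(X_i))(f(Z_i,X_i)-\phi(X_i))|V_i=x]$, which coincides with the paper's $\E[(f(Z_i,X_i)-\phi(X_i))^2|V_i=x]$ by the tower property.
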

\begin{lemma}\label{lem_RobNumHetDiff}
    Under the conditions of Theorem \ref{thm_RobNumHet},
    \begin{align*}
        &\textstyle \frac{1}{\sqrt{Nh}\sigma_Q(v)}\sum_{i=1}^N(Y_i - l(X_i)-\beta_0(v)(D_i-\phi(X_i)))(f(Z_i, X_i)-\phi(X_i))K\left(\frac{V_i - v}{h}\right)\\
        &\textstyle - \frac{\sqrt{Nh}\hat Q(\beta_0(v), v)}{\sigma_Q(v)} = o_P(1). 
    \end{align*}
\end{lemma}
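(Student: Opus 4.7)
The plan is to expand the difference of the two $Q$-statistics into three types of cross-error sums, and then control each by conditioning on the out-of-fold data. The key novelty compared to the homogeneous case is that the score fails to be fully Neyman-orthogonal in $f$ because of the heterogeneity in $\beta(\cdot)$; kernel localization will then provide the extra smallness needed.

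First, I would simplify the oracle integrand. Since $V_i$ is $\sigma(X_i)$-measurable (Assumption~\ref{ass_Het}.3), $l(X_i)=\E[Y_i|X_i]=\beta(V_i)\phi(X_i)+g(X_i)$, so
\[
Y_i-l(X_i)-\beta(v)(D_i-\phi(X_i))=(\beta(V_i)-\beta(v))(D_i-\phi(X_i))+\epsilon_i=:a_i.
\]
Writing $b_i=f(Z_i,X_i)-\phi(X_i)$, $K_i=K((V_i-v)/h)$ and the nuisance errors $e_l^k=l-\hat l^k$, $e_{\phi,j}^k=\phi-\hat\phi_j^k$ ($j=1,2$), $e_f^k=f-\hat f^k$, one checks $R_{Y,i}^k-\beta(v)R_{D,i}^k=a_i+\Delta_{a,i}^k$ with $\Delta_{a,i}^k=e_l^k(X_i)-\beta(v)e_{\phi,1}^k(X_i)$ and $R_{f,i}^k=b_i+\Delta_{b,i}^k$ with $\Delta_{b,i}^k=-e_f^k(Z_i,X_i)+e_{\phi,2}^k(X_i)$. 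Consequently, the expression in the lemma equals
\[
-\frac{1}{\sqrt{Nh}\,\sigma_Q(v)}\sum_{k=1}^{K}\sum_{i\in I_k}\bigl[a_i\Delta_{b,i}^k + b_i\Delta_{a,i}^k + \Delta_{a,i}^k\Delta_{b,i}^k\bigr]K_i,
\]
and it suffices to show that each of the three resulting sums is $o_P(1)$. I would condition on $I_k^c$, which renders the $e_*^k$ deterministic while the observations in $I_k$ are i.i.d.\ draws from the original distribution, and then apply moment bounds, combined with the standard kernel-localization estimate $\E[g(X_i)K_i^j]\asymp h\,g^*(v)\int K^j$ valid whenever $g^*(x):=\E[g(X_i)|V_i=x]p_V(x)$ has enough smoothness, as provided by Assumption~\ref{ass_RegularityHet}.

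For $\Delta_{a,i}^k\Delta_{b,i}^k K_i$, Cauchy-Schwarz plus the four product-rate bounds in Assumption~\ref{ass_NuisanceRatesHet}.3 (together with the factor $h$ from the kernel localization) directly give $o_P(1)$. For $b_i\Delta_{a,i}^k K_i$, the conditional mean given $I_k^c$ vanishes because $\E[b_i|X_i]=0$ by definition of $\phi$, while $\Delta_{a,i}^k$ and $K_i$ are $X_i$-measurable; the conditional variance is bounded by Hölder with exponents $(p_1,q_1)$ and $(p_2,q_2)$ from Assumption~\ref{ass_NuisanceRatesHet}.5--6, using $\|e_l^k\|_{L_{2q_1}},\|e_{\phi,1}^k\|_{L_{2q_2}}=o_P(h^{1/2})$ and the boundedness of $\|\E[b_i^2|X_i]\|_{L_{p_j}}/\sigma_Q^2(v)$. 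For $a_i\Delta_{b,i}^k K_i$, split $a_i=\epsilon_i+(\beta(V_i)-\beta(v))(D_i-\phi(X_i))$: the $\epsilon_i$ part has zero conditional mean by $\E[\epsilon_i|Z_i,X_i]=0$, with variance controlled by Assumption~\ref{ass_NuisanceRatesHet}.4 together with kernel localization; the $e_{\phi,2}^k(X_i)$ subterm of the remaining piece also has zero conditional mean, by $\E[D_i-\phi(X_i)|X_i]=0$.

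The main obstacle is the one remaining subterm, $-(\beta(V_i)-\beta(v))(D_i-\phi(X_i))e_f^k(Z_i,X_i)K_i$. Integrating out $D_i$ given $(Z_i,X_i)$ and using $\E[D_i-\phi(X_i)|Z_i,X_i]=b_i$, its conditional mean equals $\E[(\beta(V_i)-\beta(v))b_ie_f^k(Z_i,X_i)K_i|I_k^c]$, which does \emph{not} vanish: the score is not Neyman-orthogonal in $f$ in the heterogeneous case. The rescue is the kernel localization combined with smoothness of $\beta$: on the support of $K$ one has $|V_i-v|\lesssim h$, so by Assumption~\ref{ass_RegularityHet}.4 one obtains $|\beta(V_i)-\beta(v)|K_i\lesssim h K_i$. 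Cauchy-Schwarz then bounds the bias by $\|e_f^k\|_{L_2}\bigl(\E[(\beta(V_i)-\beta(v))^2b_i^2K_i^2]\bigr)^{1/2}\lesssim h^{3/2}\sigma_Q(v)\|e_f^k\|_{L_2}$, and after summing over $k,i$ and applying the overall scaling $1/(\sqrt{Nh}\sigma_Q(v))$, this becomes $\sqrt{N}\,h\,\|e_f^k\|_{L_2}$, which is $o_P(\sigma_Q(v))=o_P(1)$ precisely by the faster rate $\|e_f^k\|_{L_2}/\sigma_Q(v)=o_P(N^{-1/2}h^{-1})$ in Assumption~\ref{ass_NuisanceRatesHet}.1 (this is why that assumption strengthens the homogeneous $o_P(N^{-1/4})$ rate by an extra factor $h^{-1/2}$). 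The variance of this subterm is handled analogously using Assumption~\ref{ass_RD4}. Combining all three contributions completes the proof.
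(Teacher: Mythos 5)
Your decomposition of $\hat Q(\beta_0(v),v)$ into the oracle term plus the three families of cross-error sums is exactly the one the paper uses, and your treatment of the $\Delta_{a}\Delta_{b}$ products (Cauchy--Schwarz plus Assumption \ref{ass_NuisanceRatesHet}.3), of the $b_i\Delta_{a,i}^k$ terms (zero conditional mean since $\E[f(Z_i,X_i)-\phi(X_i)\,|\,X_i]=0$, then H\"older with $(p_1,q_1)$ and $(p_2,q_2)$), and of the $\epsilon_i\Delta_{b,i}^k$ terms (zero conditional mean, conditional second moment bounded via $\|\E[\epsilon_i^2|Z_i,X_i]\|_\infty$ and the $o_P(h^{1/2})$ rate) matches the paper's proof of \eqref{eq_RobNumHet1}--\eqref{eq_RobNumHet2} and \eqref{eq_RobNumHet5}--\eqref{eq_RobNumHet10}. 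Your diagnosis that the genuinely delicate piece is $(\beta(V_i)-\beta(v))(D_i-\phi(X_i))\Delta f_i^k K_i$, because the score is not orthogonal in $f$ once $\beta(\cdot)$ varies, is also the right conceptual picture, and it is precisely why Assumption \ref{ass_NuisanceRatesHet}.1 imposes the faster rate $o_P(N^{-1/2}h^{-1})$.

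However, your execution of that last step has a genuine gap. You obtain the factor $h$ from the pointwise bound $|\beta(V_i)-\beta(v)|K\bigl(\tfrac{V_i-v}{h}\bigr)\lesssim h\,K\bigl(\tfrac{V_i-v}{h}\bigr)$, arguing that $|V_i-v|\lesssim h$ on the support of $K$. Assumption \ref{ass_Kernel} does not require $K$ to have compact support (a Gaussian kernel is admissible), and no Lipschitz bound on $\beta$ alone is assumed --- Assumption \ref{ass_RegularityHet} only controls second derivatives of products such as $(\beta(x)-\beta(v))\E[(f-\phi)^2|V_i=x]p_V(x)$. The correct route, which is what the paper does for \eqref{eq_RobNumHet3}, is to bound the conditional first \emph{absolute} moment: Cauchy--Schwarz per observation gives the factor $\E[(\beta(V_i)-\beta(v))^2\E[(D_i-\phi(X_i))^2|V_i]K(\tfrac{V_i-v}{h})^2]^{1/2}$, and Lemma \ref{lem_ExpFunKernel} applied to $\rho(x)=(\beta(x)-\beta(v))^2\E[(D_i-\phi(X_i))^2|V_i=x]p_V(x)$ (assertion \ref{ass_Reg5} of Assumption \ref{ass_RegularityHet}) yields $O(h^{3/2})$ because $\rho(v)=0$ kills the leading $O(h)$ term in the expansion. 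This also sidesteps your second problem: by splitting this term into conditional bias and variance you still owe a bound on $\E[(\beta(V_i)-\beta(v))^2(D_i-\phi(X_i))^2(\Delta f_i^k)^2K^2\,|\,I_k^c]$, which via Cauchy--Schwarz would require $L_4$ control of $\hat f^k-f$; that is only assumed for Theorem \ref{thm_VarHet}, not here. Bounding the first absolute moment and invoking Lemma \ref{lem_ConditionalToZero} needs only the $L_2$ rate and closes the argument; the same remark applies to the $(\beta(V_i)-\beta(v))(D_i-\phi(X_i))\Delta\phi^k_{2,i}$ subterm, whose variance you assert but do not control.
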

\subsubsection{Proof of Lemma \ref{lem_RobNumHetOracle}}
Under $H_0$, we have $\beta(v) = \beta_0(v)$ and hence $Y_i - l(X_i) -\beta_0(v)(D_i-\phi(X_i)) = \epsilon_i + (\beta(V_i)-\beta(v))(D_i-\phi(X_i))$. Hence, we need to consider the i.i.d. random variables
$$\textstyle W_i = \frac{\left(\epsilon_i + (\beta(V_i)-\beta(v))(D_i-\phi(X_i))\right)\left(f(Z_i, X_i)-\phi(X_i)\right) K\left(\frac{V_i - v}{h}\right)}{\sqrt{h}\sigma_Q(v)}.$$
By Assumption \ref{ass_Het}, $\E[\epsilon_i|Z_i, X_i] = 0$. We can use the tower property of conditional expectations together with the fact that $V_i$ is measurable with respect to $X_i$, $\E[D_i|Z_i, X_i] = f(Z_i, X_i)$ and Lemma \ref{lem_ExpFunKernel} from Section \ref{sec_SomeLemmas} below to arrive at 
\begin{align*}
\E[W_i]&=\frac{1}{\sqrt{h}\sigma_Q(v)}\E\left[(\beta(V_i)-\beta(v))\E[(f(Z_i, X_i)-\phi(X_i))^2| V_i]K\left(\frac{V_i - v}{h}\right)\right]\\
&= h^{2.5} \frac{\rho''(v)}{2\sigma_Q(v)}\int_{-\infty}^\infty K(x)x^2\mathrm{d}x + \frac{R(v, h)}{\sqrt h \sigma_Q(v)}.
\end{align*}
with $\rho(x) = (\beta(x)-\beta(v))\E[(f(Z_i, X_i)-\phi(X_i))^2| V_i = x]p_V(x)$ and $R(v, h)$ defined accordingly in Lemma \ref{lem_ExpFunKernel}. By assertion \ref{ass_Reg1} of Assumption \ref{ass_RegularityHet}, there exists a constant $C\in \mathbb R$ independent of $N$ such that
\begin{equation}\label{eq_ExpWiHet}
    E[W_i] \leq C h^{5/2}.
\end{equation}
Similarly, from the second assertion of Lemma \ref{lem_ExpFunKernel} together with assertions \ref{ass_Reg2}, \ref{ass_Reg3} and \ref{ass_Reg4} of Assumption \ref{ass_RegularityHet}
\begin{align}
    &\textstyle \E[W_i^2]= \textstyle \frac{1}{h\sigma_Q^2(v)}\E\left[\epsilon_i^2(f(Z_i, X_i)-\phi(X_i))^2K\left(\frac{V_i - v}{h}\right)^2\right]\nonumber\\
    &\quad \textstyle+\frac{2}{h\sigma_Q^2(v)}\E\left[\epsilon_i(\beta(V_i)-\beta(v))(D_i-\phi(X_i))(f(Z_i, X_i)-\phi(X_i))^2K\left(\frac{V_i - v}{h}\right)^2\right]\nonumber\\
    &\quad \textstyle +\frac{1}{h\sigma_Q^2(v)}\E\left[(\beta(V_i)-\beta(v))^2(D_i-\phi(X_i))^2(f(Z_i, X_i)-\phi(X_i))^2K\left(\frac{V_i - v}{h}\right)^2\right]\nonumber\\
    &\textstyle = \frac{1}{\sigma_Q^2(v)}\E\left[\epsilon_i^2(f(Z_i, X_i)-\phi(X_i))^2|V_i = v\right]p_V(v)\int_{-\infty}^\infty K(x)^2\mathrm{d}x + O(h^2)\nonumber\\
    & \textstyle = 1 + O(h^2)\label{eq_ExpWiSquared}
\end{align}
by the definition of $\sigma_Q^2(v)$.
The quantity from Lemma \ref{lem_RobNumHetOracle} is $\frac{1}{\sqrt N}\sum_{i=1}^N W_i$. We apply the Lindeberg-Feller Theorem (see for example \cite{DurrettProb}, Thm. 3.4.10.) to the random variables $(W_i-\E[W_i])/\sqrt{N}$. For this, note that
$$\textstyle \sum_{i=1}^N \E\left[(W_i-\E[W_i])^2\right]/N = \E[W_i^2]-\E[W_i]^2 = 1+O(h^2).$$
Moreover, for all $\delta > 0$, using the inequality $|a + b|^{\gamma}\leq 2^{\gamma}(|a|^\gamma + |b|^\gamma)$ for $\gamma > 0$ and Jensen's inequality
\begin{align*}
    &\textstyle\sum_{i = 1}^N \E\left[\frac{(W_i-\E[W_i])^2}{N}\mathbf{1}\{|W_i - \E[W_i]|> \sqrt{N}\delta\}\right]\\
    &\textstyle = \E\left[{(W_i-\E[W_i])^2}\mathbf{1}\{|W_i - \E[W_i]|> \sqrt{N}\delta\}\right]\\
    & \textstyle \leq \frac{\E\left[|W_i - \E[W_i]|^{2+\eta}\right]}{\delta^\eta N^{\eta/2}}\\
    &\textstyle \leq 2^{2+\eta}\frac{\E\left[|W_i|^{2 + \eta}\right] + |\E[W_i]|^{2+\eta}}{\delta^\eta N^{\eta/2}}\\
    &\textstyle \leq 2^{3+\eta}\frac{\E\left[|W_i|^{2 + \eta}\right]}{\delta^\eta N^{\eta/2}}\\
    &\textstyle = 2^{3+\eta}\frac{\E\left[\left|\left(\epsilon_i + (\beta(V_i)-\beta(v))(D_i-\phi(X_i))\right)\left(f(Z_i, X_i)-\phi(X_i)\right) K\left(\frac{V_i - v}{h}\right) \right|^{2+\eta}\right]}{{\delta^{\eta}N^{\eta/2} h^{\eta/2}\sigma_Q(v)^{2+\eta}}\cdot h}.
\end{align*}
This converges to $0$ for $N\to\infty$, because $Nh\to\infty$ and because of assertion \ref{ass_Lindeberg} of Assumption \ref{ass_RegularityHet}. From the Lindeberg-Feller Theorem, we can conclude that 
$$\textstyle \frac{1}{\sqrt{N}} \sum_{i=1}^N (W_i-\E[W_i]) = \frac{1}{\sqrt{N}}\sum_{i=1}^N W_i - \sqrt{N}\E[W_i]\to\mathcal N(0,1)$$
in distribution. By \eqref{eq_ExpWiHet} and  Assumption \ref{ass_Bandwidth}, it follows that also $\sum_{i=1}^N W_i/\sqrt{N}\to \mathcal N(0,1),$ which concludes to Proof of Lemma \ref{lem_RobNumHetOracle}.

\subsubsection{Proof of Lemma \ref{lem_RobNumHetDiff}}
Define
\begin{alignat*}{2}
\Delta l_i^k      &= l(X_i)-\hat l^k(X_i),       &\quad \Delta f_i^k      &= f(Z_i, X_i)-\hat f^k(Z_i, X_i),\\
\Delta \phi_{1,i}^k &= \phi(X_i)-\hat\phi_1^k(X_i), &\quad \Delta \phi_{2,i}^k &= \phi(X_i)-\hat\phi_2^k(X_i)
\end{alignat*}
Under $H_0: \beta(v) = \beta_0(v)$ we can write (using that $Y_i-l(X_i)-\beta(v)(D_i-\phi(X_i)) = \epsilon_i + (\beta(V_i)-\beta(v))(D_i-\phi(X_i))$ in the last step)
\begin{align}
    &\textstyle\frac{\sqrt{Nh}\hat Q(\beta_0(v), v)}{\sigma_Q(v)} \nonumber\\
    &= \textstyle \frac{1}{\sqrt{Nh}\sigma_Q(v)}\sum_{k = 1}^K\sum_{i\in I_k} (Y_i-\hat l^k(X_i)- \beta(v) (D_i - \hat\phi_1^k(X_i)))\cdot\nonumber\\
    &\textstyle \qquad\qquad\qquad\qquad\qquad\quad\cdot(\hat f^k(Z_i, X_i) - \hat \phi_2^k(X_i))K\left(\frac{V_i - v}{h}\right)\nonumber\\
    &=\textstyle  \frac{1}{\sqrt{Nh}\sigma_Q(v)}\sum_{k=1}^K\sum_{i\in I_k}\left(Y_i-l(X_i)-\beta(v)(D_i-\phi(X_i))+\Delta l_i^k-\beta(v)\Delta\phi_{1,i}^k\right)\cdot\nonumber\\
    &\textstyle \qquad\qquad\qquad\qquad\qquad\quad\cdot\left(f(Z_i, X_i)-\phi(X_i)-\Delta f_i^k+\Delta\phi_{2,i}^k\right)K\left(\frac{V_i - v}{h}\right)\nonumber \\
    &\textstyle = \frac{1}{\sqrt{Nh}\sigma_Q(v)}\sum_{i=1}^N\left(Y_i-l(X_i)-\beta(v)(D_i-\phi(X_i))\right)\cdot\nonumber\\
    &\textstyle \quad\qquad\qquad\qquad\quad\cdot(f(Z_i, X_i) - \phi(X_i))K\left(\frac{V_i - v}{h}\right)\nonumber\\
    &\textstyle + \frac{1}{\sqrt{Nh}\sigma_Q(v)}\sum_{k=1}^K\sum_{i\in I_k}\Bigl[\left(\epsilon_i + (\beta(V_i)-\beta(v))(D_i-\phi(X_i))\right)\left(\Delta\phi_{2,i}^k-\Delta f_i^k\right)\nonumber\\
    &\textstyle\qquad\qquad\qquad\qquad\qquad\quad+\left(\Delta l_i^k -\beta(v) \Delta \phi_{1,i}^k\right)\left(f(Z_i,X_i)-\phi(X_i)\right)\nonumber\\
    &\textstyle\qquad\qquad\qquad\qquad\qquad\quad+\left(\Delta l_i^k -\beta(v) \Delta \phi_{1,i}^k\right)\left(\Delta\phi_{2,i}^k-\Delta f_i^k\right)\Bigr]K\left(\frac{V_i - v}{h}\right).  \label{eq_DecompWi}
\end{align}
Hence, to arrive at Lemma \ref{lem_RobNumHetDiff}, we need to show that for $k = 1,\ldots, K$, 
\begin{align}
    \frac{1}{\sqrt{n h}\sigma_Q(v)}\sum_{i\in I_k}\epsilon_i \Delta f_i^k K\left(\frac{V_i - v}{h}\right) & = o_P(1)\label{eq_RobNumHet1}\\
    \frac{1}{\sqrt{n h}\sigma_Q(v)}\sum_{i\in I_k}\epsilon_i \Delta \phi_{2,i}^k K\left(\frac{V_i - v}{h}\right) & = o_P(1)\label{eq_RobNumHet2}\\
    \frac{1}{\sqrt{n h}\sigma_Q(v)}\sum_{i\in I_k}(\beta(V_i)-\beta(v))(D_i-\phi(X_i)) \Delta f_i^k K\left(\frac{V_i - v}{h}\right) & = o_P(1)\label{eq_RobNumHet3}\\
    \frac{1}{\sqrt{n h}\sigma_Q(v)}\sum_{i\in I_k}(\beta(V_i)-\beta(v))(D_i-\phi(X_i)) \Delta \phi_{i,2}^k K\left(\frac{V_i - v}{h}\right) & = o_P(1)\label{eq_RobNumHet4}\\
    \frac{1}{\sqrt{n h}\sigma_Q(v)}\sum_{i\in I_k}(f(Z_i, X_i)-\phi(X_i)) \Delta l_i^k K\left(\frac{V_i - v}{h}\right) & = o_P(1)\label{eq_RobNumHet5}\\
    \frac{1}{\sqrt{n h}\sigma_Q(v)}\sum_{i\in I_k}(f(Z_i, X_i)-\phi(X_i)) \Delta \phi_{i,1}^k K\left(\frac{V_i - v}{h}\right) & = o_P(1)\label{eq_RobNumHet6}\\
    \frac{1}{\sqrt{n h}\sigma_Q(v)}\sum_{i\in I_k} \Delta l_i^k \Delta \phi_{2,i}^k K\left(\frac{V_i - v}{h}\right) & = o_P(1)\label{eq_RobNumHet7}\\
    \frac{1}{\sqrt{n h}\sigma_Q(v)}\sum_{i\in I_k} \Delta l_i^k \Delta f_i^k K\left(\frac{V_i - v}{h}\right) & = o_P(1)\label{eq_RobNumHet8}\\
    \frac{1}{\sqrt{n h}\sigma_Q(v)}\sum_{i\in I_k} \Delta \phi_{1,i}^k\Delta \phi_{2,i}^k K\left(\frac{V_i - v}{h}\right) & = o_P(1)\label{eq_RobNumHet9}\\
    \frac{1}{\sqrt{n h}\sigma_Q(v)}\sum_{i\in I_k} \Delta \phi_{1,i}^k \Delta f_i^k K\left(\frac{V_i - v}{h}\right) & = o_P(1).\label{eq_RobNumHet10}
\end{align}
Recall from Section \ref{sec_SummaryHTE} that by conditioning on $I_k^c$, we mean conditioning on $\{X_i, Y_i, Z_i, D_i\}_{i\in I_k^c}$. For \eqref{eq_RobNumHet1}, observe that
\begin{align}
    &\textstyle\E\left[\left(\frac{1}{\sqrt{n h}\sigma_Q(v)}\sum_{i\in I_k}\epsilon_i \Delta f_i^k K\left(\frac{V_i - v}{h}\right)\right)^2| I_k^c, X_{I_k}, Z_{I_k}\right]\nonumber\\
    &\textstyle=\frac{1}{nh\sigma_Q^2(v)}\sum_{i\in I_k}\sum_{j\in I_k}\E\left[\epsilon_i\epsilon_j\Delta f_i^k\Delta f_j^k K\left(\frac{V_i - v}{h}\right)K\left(\frac{V_j - v}{h}\right) |I_k^c, X_i, X_j, Z_i, Z_j\right]\nonumber\\
    &\textstyle=\frac{1}{nh\sigma_Q^2(v)}\sum_{i\in I_k}\sum_{j\in I_k}\E\left[\epsilon_i\epsilon_j | X_i, X_j, Z_i, Z_j\right]\Delta f_i^k\Delta f_j^k K\left(\frac{V_i - v}{h}\right)K\left(\frac{V_j - v}{h}\right)\nonumber\\
    &\textstyle=\frac{1}{nh\sigma_Q^2(v)}\sum_{i\in I_k}\E[\epsilon_i^2|Z_i, X_i]\left(\Delta f_i^k\right)^2K\left(\frac{V_i - v}{h}\right)^2\label{eq_ExprEpsSq}\\
    &\textstyle\leq \|K\|_{\infty}^2\|\E[\epsilon_i^2|Z_i, X_i]\|_{L_\infty}^2\frac{1}{nh\sigma_Q^2(v)}\sum_{i\in I_k}\left(\Delta f_i^k\right)^2.\nonumber
\end{align}
It follows that
\begin{align*}
    &\E\left[\left(\frac{1}{\sqrt{n h}\sigma_Q(v)}\sum_{i\in I_k}\epsilon_i \Delta f_i^k K\left(\frac{V_i - v}{h}\right)\right)^2| I_k^c\right]\\
    &\leq \|K\|_{\infty}^2\|\E[\epsilon_i^2|Z_i, X_i]\|_{L_\infty}^2\frac{1}{nh\sigma_Q^2(v)}\sum_{i\in I_k}\E\left[\left(\Delta f_i^k\right)^2|I_k^c\right]\\
    &= \|K\|_{\infty}^2\|\E[\epsilon_i^2|Z_i, X_i]\|_{L_\infty}^2\frac{\|f-\hat f^k\|_{L_2}^2}{h\sigma_Q^2(v)}
\end{align*}
By assertions \ref{ass_Rate_F_Phi_slow} and \ref{ass_CondVariance} of Assumption \ref{ass_NuisanceRatesHet}, the last expression is $o_P(1)$. By Lemma \ref{lem_ConditionalToZero} from Section \ref{sec_SomeLemmas}, we obtain \eqref{eq_RobNumHet1}. In exactly the same way, one can also show $\eqref{eq_RobNumHet2}$.

For \eqref{eq_RobNumHet3}, we use the Cauchy-Schwarz inequality in the second step and assertion \ref{ass_Reg5} of Assumption \ref{ass_RegularityHet} with Lemma \ref{lem_ExpFunKernel} in the fourth step to obtain
\begin{align*}
    &\E\left[\left|\frac{1}{\sqrt{n h}\sigma_Q(v)}\sum_{i\in I_k}(\beta(V_i)-\beta(v))(D_i-\phi(X_i)) \Delta f_i^k K\left(\frac{V_i - v}{h}\right)\right||I_k^c\right]\\
    &\leq\frac{1}{\sqrt{n h}\sigma_Q(v)}\sum_{i\in I_k}\E\left[\left|(\beta(V_i)-\beta(v))(D_i-\phi(X_i)) \Delta f_i^k K\left(\frac{V_i - v}{h}\right)\right||I_k^c\right]\\
    &\leq \frac{1}{\sqrt{n h}\sigma_Q(v)}\sum_{i\in I_k}\E\left[\left|(\beta(V_i)-\beta(v))^2(D_i-\phi(X_i))^2  K\left(\frac{V_i - v}{h}\right)^2\right||I_k^c\right]^{1/2}\cdot\\
    &\qquad\qquad\qquad\qquad\quad\cdot\E\left[\left(\Delta f_i^k\right)^2|I_k^c\right]^{1/2}\\
    &=\frac{\sqrt{n}}{\sqrt h \sigma_Q(v)}\E\left[(\beta(V_i)-\beta(v))^2\E\left[(D_i-\phi(X_i))^2|V_i\right]K\left(\frac{V_i - v}{h}\right)^2\right]^{1/2}\cdot\\
    &\qquad\qquad\qquad\cdot\|f-\hat f^k\|_{L_2}\\
    &= \frac{\sqrt{n}}{\sqrt h \sigma_Q(v)}O(h^{3/2})\|f-\hat f^k\|_{L_2}\\
    &= \frac{\|f-\hat f^k\|_{L_2}}{\sigma_Q(v)} O(n^{1/2}h),
\end{align*}
which is $o_P(1)$ by assertion \ref{ass_Rate_F_Phi_fast} of Assumption \ref{ass_NuisanceRatesHet}. By Lemma \ref{lem_ConditionalToZero}, \eqref{eq_RobNumHet3} follows. In exactly the same way, we can also show \eqref{eq_RobNumHet4}.

For \eqref{eq_RobNumHet5}, we can argue similarly to \eqref{eq_RobNumHet1}. Concretely,
\begin{align}
    &\E\left[\left(\frac{1}{\sqrt{n h}\sigma_Q(v)}\sum_{i\in I_k}(f(Z_i, X_i)-\phi(X_i)) \Delta l_i^k K\left(\frac{V_i - v}{h}\right)\right)^2| I_k^c, X_{I_k}\right]\nonumber\\
    &=\frac{1}{nh\sigma_Q^2(v)}\sum_{i\in I_k}\E[(f(Z_i, X_i)-\phi(X_i))^2|X_i]\left(\Delta l_i^k\right)^2K\left(\frac{V_i - v}{h}\right)^2\label{eq_ExprFPhi}
\end{align}
and hence using H\"older's inequality with $1/p + 1/q = 1$
\begin{align*}
    &\E\left[\left(\frac{1}{\sqrt{n h}\sigma_Q(v)}\sum_{i\in I_k}(f(Z_i, X_i)-\phi(X_i)) \Delta l_i^k K\left(\frac{V_i - v}{h}\right)\right)^2| I_k^c\right]\\
    &=\frac{1}{h\sigma_Q^2(v)}\E\left[\E[(f(Z_i, X_i)-\phi(X_i))^2|X_i]\left(\Delta l_i^k\right)^2K\left(\frac{V_i - v}{h}\right)^2|I_k^c\right]\\
    &\leq \frac{1}{h\sigma_Q^2(v)}\E\left[\E[(f(Z_i, X_i)-\phi(X_i))^2|X_i]^pK\left(\frac{V_i - v}{h}\right)^{2p}\right]^{1/p}\E\left[|\Delta l_i^k|^{2q}| I_k^c\right]^{1/q}\\
    & \leq \frac{\|K\|_{\infty}^2}{h\sigma_Q^2(v)}\|\E[(f(Z_i, X_i)-\phi(X_i))^2|X_i]\|_{L_p}\|l-\hat l^k\|_{L_{2q}}^2
\end{align*}
If we set $p = p_1$ and $q = q_1$, this is $o_P(1)$ using assertion \ref{ass_Rate_l_pq1} of Assumption \ref{ass_NuisanceRatesHet}, and \eqref{eq_RobNumHet5} follows using Lemma \ref{lem_ConditionalToZero}. In exactly the same way, \eqref{eq_RobNumHet6} follows.

For \eqref{eq_RobNumHet7}, we use the Cauchy-Schwarz inequality to obtain
\begin{align*}
    &\E\left[\left|\frac{1}{\sqrt{nh}\sigma_Q(v)}\sum_{i\in I_k}\Delta l_i^k\Delta\phi_{2,i}^k K\left(\frac{V_i - v}{h}\right)\right||I_k^c\right]\\
    &\leq \frac{\sqrt n}{\sqrt{h}\sigma_Q(v)}\E\left[\left|\Delta l_i^k\Delta\phi_{2,i}^k\right| K\left(\frac{V_i - v}{h}\right)| I_k^c\right]\\
    &\leq \frac{\sqrt n\|K\|_{\infty}}{\sqrt{h}\sigma_Q(v)}\E\left[\left(\Delta l_i^k\right)^2|I_k\right]^{1/2}\E\left[\left(\Delta \phi_{2,i}^k\right)^2|I_k^c\right]^{1/2}\\
    & = \frac{\sqrt n\|K\|_{\infty}}{\sqrt{h}\sigma_Q(v)}\|l-\hat l^k\|_{L_2}\|\phi-\hat \phi_2^k\|_{L_2}\\
    & = o_P(1)
\end{align*}
by assertion \ref{ass_RateProducts} of Assumption \ref{ass_NuisanceRatesHet}. Hence, \eqref{eq_RobNumHet7} follows using Lemma \ref{lem_ConditionalToZero}. In exactly the same way, also \eqref{eq_RobNumHet8}, \eqref{eq_RobNumHet9} and \eqref{eq_RobNumHet10} follow. This concludes the proof of Lemma \ref{lem_RobNumHetDiff}.

\subsection{Proof or Theorem \ref{thm_RobVarHet}}
By Theorem \ref{thm_RobNumHet}, $h\hat Q(\beta_0(v), v)/\sigma_Q(v) = O_P\left(h/\sqrt{Nh}\right) = o_P(1)$. Hence it is enough to show the following two Lemmas.
\begin{lemma}\label{lem_RobVarHetOracle}
    Under the conditions of Theorem \ref{thm_RobVarHet},
    \begin{align*}
    &\textstyle\frac{1}{Nh\sigma_Q^2(v)}\sum_{i = 1}^N (Y_i- l(X_i)- \beta_0(v)(D_i - \phi(X_i)))^2(f(Z_i, X_i) - \phi(X_i))^2K\left(\frac{V_i - v}{h}\right)^2\\
    &\textstyle = 1 + o_P(1).
    \end{align*}
\end{lemma}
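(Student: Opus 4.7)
My plan is to treat this as a weak law of large numbers for the i.i.d. summands
$$W_i = \left(\epsilon_i + (\beta(V_i)-\beta(v))(D_i-\phi(X_i))\right)^2(f(Z_i,X_i)-\phi(X_i))^2 K\left(\frac{V_i-v}{h}\right)^2,$$
using the substitution under $H_0$ that $Y_i - l(X_i) - \beta_0(v)(D_i-\phi(X_i)) = \epsilon_i + (\beta(V_i)-\beta(v))(D_i-\phi(X_i))$, exactly as in the proof of Lemma \ref{lem_RobNumHetOracle}. The goal reduces to showing $\E[W_i]/(h\sigma_Q^2(v)) \to 1$ and $\Var(W_i)/(Nh^2\sigma_Q^4(v)) \to 0$, and then invoking Chebyshev's inequality.

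For the mean, I would expand the square into three pieces: $\epsilon_i^2(f-\phi)^2K^2$, the cross term $2\epsilon_i(\beta(V_i)-\beta(v))(D_i-\phi)(f-\phi)^2K^2$, and $(\beta(V_i)-\beta(v))^2(D_i-\phi)^2(f-\phi)^2K^2$. For each, I condition on $V_i$ (using that $V_i$ is measurable with respect to $X_i$) and then apply the kernel expansion from Lemma \ref{lem_ExpFunKernel}, with $K$ replaced by $K^2$. The leading term comes from the first piece: by the definition \eqref{eq_DefSigmaQHet} of $\sigma_Q^2(v)$ and assertion \ref{ass_Reg2} of Assumption \ref{ass_RegularityHet}, this contribution equals $h\sigma_Q^2(v) + O(h^3\sigma_Q^2(v))$, giving exactly $1 + O(h^2)$ after normalization. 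For the cross term, the relevant function $\rho_2(x) = (\beta(x)-\beta(v))\E[\epsilon_i(D_i-\phi(X_i))(f-\phi)^2|V_i=x]p_V(x)$ vanishes at $x = v$; combined with the smoothness provided by assertion \ref{ass_Reg3}, the Taylor expansion yields $O(h^3\sigma_Q^2(v))$, hence $O(h^2)$ after normalization. The third piece is handled identically via assertion \ref{ass_Reg4} (with $\rho_3(v) = 0$ because $(\beta(x)-\beta(v))^2$ vanishes at $v$), again contributing $O(h^2)$.

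For the variance, I bound $\Var(W_i) \leq \E[W_i^2]$, noting that $W_i^2$ is precisely of the form inside the expectation of assertion \ref{ass_ChebyVar} of Assumption \ref{ass_RegularityHet}, which directly states $\E[W_i^2]/\sigma_Q^4(v) \ll Nh^2$. Therefore $\Var\left(\frac{1}{Nh\sigma_Q^2(v)}\sum_{i=1}^N W_i\right) = \Var(W_i)/(Nh^2\sigma_Q^4(v)) = o(1)$, and Chebyshev's inequality combined with the mean expansion finishes the proof.

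The plan has no real obstacle beyond careful bookkeeping: the smoothness/moment ingredients in assertions \ref{ass_Reg2}--\ref{ass_Reg4} and \ref{ass_ChebyVar} of Assumption \ref{ass_RegularityHet} are tailored exactly so that the Taylor-expansion-plus-Chebyshev route goes through after rescaling by $\sigma_Q^2(v)$, and the oracle setting means no nuisance-function error terms appear (those are absorbed separately in the analogue of Lemma \ref{lem_RobNumHetDiff}, which would be needed to complete the proof of Theorem \ref{thm_RobVarHet} but is not required for Lemma \ref{lem_RobVarHetOracle} itself).
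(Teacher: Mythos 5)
Your proposal is correct and follows essentially the same route as the paper: the paper also reduces the claim to the mean computation $\E[W_i^2]=1+O(h^2)$ (already carried out in display \eqref{eq_ExpWiSquared} of the proof of Lemma \ref{lem_RobNumHetOracle}, via the same three-term expansion, Lemma \ref{lem_ExpFunKernel} with $K^2$, and assertions \ref{ass_Reg2}--\ref{ass_Reg4}) and then applies Chebyshev's inequality with $\Var(W_i^2)\leq\E[W_i^4]$ controlled by assertion \ref{ass_ChebyVar}. The only difference is a normalization convention in the definition of $W_i$, which does not affect the argument.
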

\begin{lemma}\label{lem_RobVarHetDiff}
    Under the conditions of Theorem \ref{thm_RobVarHet},
    \begin{align*}
    &\textstyle\frac{1}{Nh\sigma_Q^2(v)}\sum_{k = 1}^K\sum_{i\in I_k} (Y_i-\hat l^k(X_i)- \beta_0(v)(D_i - \hat\phi_1^k(X_i)))^2\cdot \\
    &\textstyle\qquad\qquad\qquad\qquad\quad\cdot(\hat f^k(Z_i, X_i) - \hat \phi_2^k(X_i))^2K\left(\frac{V_i - v}{h}\right)^2\\
    &\textstyle- \frac{1}{Nh\sigma_Q^2(v)}\sum_{i = 1}^N (Y_i- l(X_i)- \beta_0(v)(D_i - \phi(X_i)))^2\cdot\\
    &\textstyle\qquad\qquad\qquad\quad\cdot( f(Z_i, X_i) -  \phi(X_i))^2K\left(\frac{V_i - v}{h}\right)^2\\
    &\textstyle = o_P(1).
    \end{align*}
\end{lemma}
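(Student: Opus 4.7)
The plan is to write the perturbed squared residual as a baseline plus cross terms, and then bound each cross term's contribution to $\widehat{SE}_Q^2(\beta_0(v),v)/\sigma_Q^2(v)$ by $o_P(1)$. Concretely, under $H_0$, using $l(X_i) = \beta(V_i)\phi(X_i)+g(X_i)$ (from $V_i$ being $X_i$-measurable together with $\E[\epsilon_i|X_i]=0$), set $\xi_i := \epsilon_i + (\beta(V_i)-\beta(v))(D_i-\phi(X_i))$ and $\eta_i := f(Z_i,X_i)-\phi(X_i)$, as well as $A_i^k := -\Delta l_i^k + \beta_0(v)\Delta\phi_{1,i}^k$ and $B_i^k := \Delta\phi_{2,i}^k - \Delta f_i^k$, with the $\Delta$'s as in the proof of Lemma \ref{lem_RobNumHetDiff}. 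Then $Y_i-\hat l^k(X_i)-\beta_0(v)(D_i-\hat\phi_1^k(X_i)) = \xi_i - A_i^k$ and $\hat f^k(Z_i,X_i)-\hat\phi_2^k(X_i) = \eta_i + B_i^k$, so $(\xi_i-A_i^k)^2(\eta_i+B_i^k)^2 - \xi_i^2\eta_i^2$ decomposes as a sum of eight cross terms in the nuisance errors.

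Combined with Lemma \ref{lem_RobVarHetOracle}, the theorem reduces to showing that each cross-term sum, weighted by $K((V_i-v)/h)^2$ and normalized by $1/(Nh\sigma_Q^2(v))$, is $o_P(1)$. For each cross term $T_i^k$ I would bound the conditional absolute expectation $\E[|T_i^k|\,K((V_i-v)/h)^2\mid I_k^c]$ by $o_P(\sigma_Q^2(v) h)$; Markov's inequality and Lemma \ref{lem_ConditionalToZero} then finish. I group the cross terms by their total degree in $A^k,B^k$: two degree-one terms ($\xi_i^2\eta_i B_i^k$, $\xi_i A_i^k\eta_i^2$), three degree-two terms ($\xi_i^2(B_i^k)^2$, $\xi_i A_i^k\eta_i B_i^k$, $(A_i^k)^2\eta_i^2$), two degree-three terms ($\xi_i A_i^k(B_i^k)^2$, $(A_i^k)^2\eta_i B_i^k$), and one degree-four term $(A_i^k)^2(B_i^k)^2$.

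The degree-one terms are treated exactly as the corresponding pieces in the proof of Lemma \ref{lem_RobNumHetDiff}, except that $K((V_i-v)/h)$ is replaced by $K((V_i-v)/h)^2$; Lemma \ref{lem_ExpFunKernel} applied to $K^2$ still produces the correct $O(h)$ kernel factor, so the same combinations of smoothness (Assumption \ref{ass_RegularityHet}) and nuisance rates (Assumption \ref{ass_NuisanceRatesHet}) close the bound. For the degree-two term $\xi_i^2(B_i^k)^2 K^2$, I would absorb $K^2$ into $\|K\|_\infty^2$, bound $\E[\xi_i^2\mid X_i,Z_i]$ uniformly via $\E[\epsilon_i^2\mid Z_i,X_i]\leq C$ (Assumption \ref{ass_NuisanceRatesHet}(\ref{ass_CondVariance})) together with boundedness of the $(\beta(V_i)-\beta(v))^2(D_i-\phi(X_i))^2$ contribution, and conclude via $\|B^k\|_{L_2}^2/\sigma_Q^2(v)=o_P(h)$ from assertion \ref{ass_Rate_F_Phi_slow}. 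The term $(A_i^k)^2\eta_i^2 K^2$ is bounded by H\"older with conjugates $(p_1,q_1)$ from assertion \ref{ass_Rate_l_pq1} (and symmetrically $(p_2,q_2)$ for the $\Delta\phi_1^k$ piece of $A^k$), reducing the task to $L_{2q_j}$-rates of order $o_P(h^{1/2})$. The mixed degree-two term $\xi_i A_i^k\eta_i B_i^k K^2$ is handled by Cauchy--Schwarz and the product rate of assertion \ref{ass_RateProducts}.

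The main obstacle will be the degree-four term $(A_i^k)^2(B_i^k)^2 K^2$ (and, analogously, the two degree-three terms), because $L_2$ rates alone cannot control products of nuisance errors of order four. By Cauchy--Schwarz, $\E[(A^k)^2(B^k)^2\mid I_k^c]\leq \|A^k\|_{L_4}^2\|B^k\|_{L_4}^2$; I would therefore invoke the additional $L_4$-type rates $\|f-\hat f^k\|_{L_4}/\sigma_Q(v) = O_P(h^{1/4})$ and $\|\phi-\hat\phi_2^k\|_{L_4}/\sigma_Q(v) = O_P(h^{1/4})$ already used in Theorem \ref{thm_VarHet} to bound $\|B^k\|_{L_4}^2 = O_P(\sigma_Q^2(v) h^{1/2})$, together with the $L_{2q_j}$-rates of assertion \ref{ass_Rate_l_pq1} and its $\hat\phi_1^k$-analogue (under $q_j\geq 2$, or via a matching H\"older pair otherwise) to bound $\|A^k\|_{L_4}^2 = o_P(h)$. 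Then $\|K\|_\infty^2\|A^k\|_{L_4}^2\|B^k\|_{L_4}^2 = o_P(\sigma_Q^2(v) h^{3/2})$, which divided by $h\sigma_Q^2(v)$ is $o_P(h^{1/2}) = o_P(1)$. The two degree-three cross terms are handled by the same Cauchy--Schwarz splitting, combining one factor treated as in the degree-one step with one factor treated as in the degree-two step, which completes the plan.
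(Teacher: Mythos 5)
There is a genuine gap, and it stems from the choice of decomposition. You expand the difference of squared residual products into eight cross terms and then try to control each one by moment bounds; for the degree-three and degree-four terms you invoke the rates $\|f-\hat f^k\|_{L_4}/\sigma_Q(v)=O_P(h^{1/4})$ and $\|\phi-\hat\phi_2^k\|_{L_4}/\sigma_Q(v)=O_P(h^{1/4})$ ``already used in Theorem \ref{thm_VarHet}''. But Lemma \ref{lem_RobVarHetDiff} is stated under the conditions of Theorem \ref{thm_RobVarHet}, which are exactly those of Theorem \ref{thm_RobNumHet}: no $L_4$ rates are available here at all. Moreover, even if you imported them, they cover only $\hat f^k$ and $\hat\phi_2^k$; your bound $\|A^k\|_{L_4}^2=o_P(h)$ requires $L_4$-type control of $\Delta l^k$ and $\Delta\phi_{1}^k$, which assertion \ref{ass_Rate_l_pq1} does not supply when $q_1<2$, and the ``matching H\"older pair'' fallback has nowhere to put the extra integrability since no $L_\infty$ bounds on the nuisance errors are assumed. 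A smaller inaccuracy: the degree-one cross terms such as $\xi_i^2\eta_i B_i^k K^2$ are not ``exactly the corresponding pieces of Lemma \ref{lem_RobNumHetDiff} with $K$ replaced by $K^2$'' --- they carry an extra factor $\xi_i\eta_i$ relative to the terms $\xi_iB_i^kK$, $A_i^k\eta_iK$, $A_i^kB_i^kK$ treated there, so they need a separate argument.

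The paper avoids all of this with a two-step reduction you are missing. Writing the perturbed product of residuals as $W_i+\Gamma_i^k$ with $W_i$ the oracle quantity, it uses $(W_i+\Gamma_i^k)^2-W_i^2=(\Gamma_i^k)^2+2\Gamma_i^kW_i$ and Cauchy--Schwarz \emph{across the sum}, so that only $\frac1N\sum_i W_i^2=1+o_P(1)$ (Lemma \ref{lem_RobVarHetOracle}) and $\frac1N\sum_i(\Gamma_i^k)^2=o_P(1)$ need to be established; no mixed $W$--$\Gamma$ cross term is ever analyzed individually. Then $(\Gamma_i^k)^2$ splits into the squares \eqref{eq_RobVarHet1}--\eqref{eq_RobVarHet10}, and the pure nuisance-product squares are handled by the elementary inequality
$\frac{1}{nh\sigma_Q^2(v)}\sum_{i\in I_k}a_i^2\le\bigl(\frac{1}{\sqrt{nh}\,\sigma_Q(v)}\sum_{i\in I_k}|a_i|\bigr)^2$
(valid since all summands are nonnegative), which reduces, e.g., \eqref{eq_RobVarHet8} and \eqref{eq_RobVarHet10} to the squares of the $L_1$-type sums already shown to be $o_P(1)$ in the proof of Lemma \ref{lem_RobNumHetDiff} using only the $L_2$ product rates of assertion \ref{ass_RateProducts}. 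This is why the lemma holds without any $L_4$ assumptions. To repair your argument you should either adopt this $W/\Gamma$ reduction or at least replace your $L_4$-based treatment of the high-degree terms by the square-of-the-absolute-sum bound.
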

\subsubsection{Proof of Lemma \ref{lem_RobVarHetOracle}}
As in the proof of Lemma \ref{lem_RobNumHetOracle}, define
\begin{equation}\label{eq_DefWi}
W_i = \frac{1}{\sqrt h \sigma_Q(v)}(\epsilon_i + (\beta(V_i)-\beta(v))(D_i-\phi(X_i)))(f(Z_i, X_i)-\phi(X_i)) K\left(\frac{V_i - v}{h}\right).
\end{equation}
We need to show $\frac{1}{N}\sum_{i=1}^N W_i^2 -1 = o_P(1)$. By \eqref{eq_ExpWiSquared}, $\E[W_i^2] = 1 + O(h^2) = 1 + o_P(1)$. Hence, it suffices to show that $\frac{1}{N}\sum_{i=1}^N \left(W_i^2-\E[W_i^2]\right) = o_P(1).$
By Chebyshev's Inequality, for all $\delta>0$
$$\Prob\left(\left|\frac{1}{N}\sum_{i=1}^N \left(W_i^2-\E[W_i^2]\right)\right| >\delta\right)\leq \frac{\Var(W_i^2)}{N\delta}\leq \frac{\E[W_i^4]}{N\delta} = o(1)$$
by assertion \ref{ass_ChebyVar} of Assumption \ref{ass_RegularityHet}. This concludes the proof of Lemma \ref{lem_RobVarHetOracle}.
\subsubsection{Proof of Lemma \ref{lem_RobVarHetDiff}}
Define $W_i$ as in \eqref{eq_DefWi} and for $i \in I_k$, $k = 1,\ldots, K$
\begin{align*}
    \Gamma_i^k & =- W_i +  \frac{1}{\sqrt h \sigma_Q(v)}(Y_i-\hat l^k(X_i) -\beta(v)(D_i-\hat\phi_1^k(X_i)))\cdot\\
    &\qquad\qquad\qquad\qquad\quad\cdot(\hat f^k(Z_i, X_i)-\hat \phi_2^k(X_i)) K\left(\frac{V_i - v}{h}\right).
\end{align*}
Using this notation, we need to show that
$$\frac{1}{N}\sum_{k = 1}^K\sum_{i\in I_k}\left(\left(\Gamma_i^k + W_i\right)^2 - W_i^2\right)= o_P(1).$$
By the Cauchy-Schwarz inequality,
\begin{align*}
    &\left|\frac{1}{N}\sum_{k = 1}^K\sum_{i\in I_k}\left(\left(\Gamma_i^k + W_i\right)^2 - W_i^2\right)\right| = \frac{1}{N}\sum_{k = 1}^K\sum_{i\in I_k}\left|\left(\Gamma_i^k\right)^2 + 2 \Gamma_i^k W_i\right|\\
    &\leq \frac{1}{N}\sum_{k=1}^K\sum_{i\in I_k}\left(\Gamma_i^k\right)^2 + 2\sqrt{\frac{1}{N}\sum_{k=1}^K\sum_{i\in I_k}\left(\Gamma_i^k\right)^2}\sqrt{\frac{1}{N}\sum_{i=1}^N W_i^2}.
\end{align*}
From Lemma \ref{lem_RobVarHetOracle}, we know that $\frac{1}{N}\sum_{i=1}^N W_i^2= 1+o_P(1)$ and hence it remains to show that $\frac{1}{N}\sum_{k=1}^K\sum_{i\in I_k}\left(\Gamma_i^k\right)^2=o_P(1)$. We use the same decomposition as in \eqref{eq_DecompWi} to see that 
\begin{align*}
    \Gamma_i^k&=\frac{1}{\sqrt h\sigma_Q(v)}\Bigl[\left(\epsilon_i + (\beta(V_i)-\beta(v))(D_i-\phi(X_i))\right)\left(\Delta\phi_{2,i}^k-\Delta f_i^k\right)\\
    &\qquad\qquad\qquad\quad+\left(\Delta l_i^k -\beta(v) \Delta \phi_{1,i}^k\right)\left(f(Z_i,X_i)-\phi(X_i)\right)\\
    &\qquad\qquad\qquad\quad+\left(\Delta l_i^k -\beta(v) \Delta \phi_{1,i}^k\right)\left(\Delta\phi_{2,i}^k-\Delta f_i^k\right)\Bigr]K\left(\frac{V_i - v}{h}\right).
\end{align*}
Hence, to arrive at Lemma \ref{lem_RobVarHetDiff}, it suffices to show that for $k = 1,\ldots, K$, 
\begin{align}
    \frac{1}{{n h}\sigma_Q^2(v)}\sum_{i\in I_k}\epsilon_i^2 (\Delta f_i^k)^2 K\left(\frac{V_i - v}{h}\right)^2 & = o_P(1)\label{eq_RobVarHet1}\\
    \frac{1}{{n h}\sigma_Q^2(v)}\sum_{i\in I_k}\epsilon_i^2 (\Delta \phi_{2,i}^k)^2 K\left(\frac{V_i - v}{h}\right)^2 & = o_P(1)\label{eq_RobVarHet2}\\
     \frac{1}{{n h}\sigma_Q^2(v)}\sum_{i\in I_k}(\beta(V_i)-\beta(v))^2(D_i-\phi(X_i))^2 (\Delta f_i^k)^2 K\left(\frac{V_i - v}{h}\right)^2 & = o_P(1)\label{eq_RobVarHet3}\\
     \frac{1}{{n h}\sigma_Q^2(v)}\sum_{i\in I_k}(\beta(V_i)-\beta(v))^2(D_i-\phi(X_i))^2 (\Delta \phi_{2,i}^k)^2 K\left(\frac{V_i - v}{h}\right)^2 & = o_P(1)\label{eq_RobVarHet4}\\
     \frac{1}{{n h}\sigma_Q^2(v)}\sum_{i\in I_k}(f(Z_i, X_i)-\phi(X_i))^2 (\Delta l_i^k)^2 K\left(\frac{V_i - v}{h}\right)^2 & = o_P(1)\label{eq_RobVarHet5}\\
     \frac{1}{{n h}\sigma_Q^2(v)}\sum_{i\in I_k}(f(Z_i, X_i)-\phi(X_i))^2 (\Delta \phi_{1,i}^k)^2 K\left(\frac{V_i - v}{h}\right)^2 & = o_P(1)\label{eq_RobVarHet6}\\
     \frac{1}{{n h}\sigma_Q^2(v)}\sum_{i\in I_k} (\Delta l_i^k)^2 (\Delta \phi_{1,i}^k)^2 K\left(\frac{V_i - v}{h}\right)^2 & = o_P(1)\label{eq_RobVarHet7}\\
     \frac{1}{{n h}\sigma_Q^2(v)}\sum_{i\in I_k} (\Delta l_i^k)^2 (\Delta f_i^k)^2 K\left(\frac{V_i - v}{h}\right)^2 & = o_P(1)\label{eq_RobVarHet8}\\
     \frac{1}{{n h}\sigma_Q^2(v)}\sum_{i\in I_k} (\Delta \phi_{1,i}^k)^2 (\Delta \phi_{2,i}^k)^2K\left(\frac{V_i - v}{h}\right)^2 & = o_P(1)\label{eq_RobVarHet9}\\
     \frac{1}{{n h}\sigma_Q^2(v)}\sum_{i\in I_k} (\Delta \phi_{1,i}^k)^2 (\Delta f_i^k)^2 K\left(\frac{V_i - v}{h}\right)^2 & = o_P(1).\label{eq_RobVarHet10}
\end{align}
For \eqref{eq_RobVarHet1}, observe that
\begin{align*}
    &\E\left[\frac{1}{{n h}\sigma_Q^2(v)}\sum_{i\in I_k}\epsilon_i^2 (\Delta f_i^k)^2 K\left(\frac{V_i - v}{h}\right)^2|I_k^c, Z_{I_k}, X_{I_k}\right]\\
    &=\frac{1}{{n h}\sigma_Q^2(v)}\sum_{i\in I_k}\E\left[\epsilon_i^2|Z_i, X_i\right] (\Delta f_i^k)^2 K\left(\frac{V_i - v}{h}\right)^2,
\end{align*}
which is the same expression as \eqref{eq_ExprEpsSq}. Hence, \eqref{eq_RobVarHet1} follows as before. The same argument can be applied for \eqref{eq_RobVarHet2}. From the proof of \eqref{eq_RobNumHet3}, we know that 
$$\frac{1}{\sqrt{n h}\sigma_Q(v)}\sum_{i\in I_k}\left|(\beta(V_i)-\beta(v))(D_i-\phi(X_i)) \Delta f_i^k K\left(\frac{V_i - v}{h}\right)\right|  = o_P(1).$$
Hence, \eqref{eq_RobVarHet3} follows, since
\begin{align*}
    &\frac{1}{{n h}\sigma_Q^2(v)}\sum_{i\in I_k}(\beta(V_i)-\beta(v))^2(D_i-\phi(X_i))^2 (\Delta f_i^k)^2 K\left(\frac{V_i - v}{h}\right)^2\\
    &\leq \left(\frac{1}{\sqrt{n h}\sigma_Q(v)}\sum_{i\in I_k}\left|(\beta(V_i)-\beta(v))(D_i-\phi(X_i)) \Delta f_i^k K\left(\frac{V_i - v}{h}\right)\right|\right)^2.
\end{align*}
In the same way, also \eqref{eq_RobVarHet4} follows. For \eqref{eq_RobVarHet5}, we argue as for \eqref{eq_RobVarHet1} by noting that the conditional expectation of the expression given $(I_k^c, X_{I_k})$ is equal to \eqref{eq_ExprFPhi} that has been analyzed before. The same reasoning applies to \eqref{eq_RobVarHet6}. For \eqref{eq_RobVarHet7}, we have established in the proof of \eqref{eq_RobNumHet7} that 
$$\frac{1}{\sqrt{nh}\sigma_Q(v)}\sum_{i\in I_k}\left|\Delta l_i^k\Delta \phi_{2,i}^k K\left(\frac{V_i - v}{h}\right)\right| = o_P(1).$$
Hence, \eqref{eq_RobVarHet7} follows by noting that
\begin{align*}
&\frac{1}{{n h}\sigma_Q^2(v)}\sum_{i\in I_k}(\Delta l_i^k)^2 (\Delta \phi_{2,i}^k)^2 K\left(\frac{V_i - v}{h}\right)^2\\
&\leq \left(\frac{1}{\sqrt{nh}\sigma_Q(v)}\sum_{i\in I_k}\left|\Delta l_i^k\Delta \phi_{2,i}^k K\left(\frac{V_i - v}{h}\right)\right|\right)^2.
\end{align*}
In the same way, also \eqref{eq_RobVarHet8}, \eqref{eq_RobVarHet9} and \eqref{eq_RobVarHet10} follow.

\subsection{Proof of Theorem \ref{thm_AsNormHet}}
We can write
\begin{align*}
    &\textstyle\frac{\sqrt{Nh}(\hat\beta(v)-\beta(v))}{\sigma(v)}\\
    &\textstyle=\frac{\frac{1}{\sqrt{Nh}}\sum_{k = 1}^K\sum_{i\in I_k} (Y_i-\hat l^k(X_i)-\beta(v)(D_i-\hat\phi_1^k(X_i)))(\hat f^k(Z_i, X_i) - \hat \phi_2^k(X_i))K\left(\frac{V_i - v}{h}\right)}{\frac{\sigma(v)}{Nh}\sum_{k = 1}^K\sum_{i\in I_k} (D_i-\hat \phi_1^k(X_i))(\hat f^k(Z_i, X_i) - \hat \phi_2^k(X_i))K\left(\frac{V_i - v}{h}\right)}\\
    &\textstyle=\frac{\sqrt{Nh}\hat Q(\beta(v), v)/\sigma_Q(v)}{\frac{\sigma(v)/\sigma_Q(v)}{Nh}\sum_{k = 1}^K\sum_{i\in I_k} (D_i-\hat \phi_1^k(X_i))(\hat f^k(Z_i, X_i) - \hat \phi_2^k(X_i))K\left(\frac{V_i - v}{h}\right)}
\end{align*}
From Theorem \ref{thm_RobNumHet}, we know that the numerator converges in distribution to $\mathcal N(0,1)$. By Slutsky's Theorem, it is enough to show that the denominator converges to $1$ in probability. Hence, it is enough to prove the following two lemmas.
\begin{lemma}\label{lem_DenomHetOracle}
    Under the conditions of Theorem \ref{thm_AsNormHet},
    $$\frac{\sigma(v)/\sigma_Q(v)}{Nh}\sum_{i = 1}^N (D_i- \phi(X_i))(f(Z_i, X_i) - \phi(X_i))K\left(\frac{V_i - v}{h}\right)- 1 = o_P(1).$$
\end{lemma}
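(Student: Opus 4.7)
The strategy is to apply Chebyshev's inequality to the i.i.d.\ sum, reducing the claim to (a) showing that the expectation of the scaled sum converges to $1$, and (b) showing that the variance of the scaled sum is $o(1)$. The key preliminary observation is that, from the definitions of $\sigma^2(v)$ in Theorem~\ref{thm_AsNormHet} and $\sigma_Q^2(v)$ in \eqref{eq_DefSigmaQHet}, combined with the tower property (using $\E[D_i|Z_i,X_i]=f(Z_i,X_i)$ to rewrite $\E[(D_i-\phi(X_i))(f(Z_i,X_i)-\phi(X_i))|V_i=v]=\E[(f(Z_i,X_i)-\phi(X_i))^2|V_i=v]$), one obtains
$$\frac{\sigma(v)}{\sigma_Q(v)} = \frac{1}{\E[(f(Z_i,X_i)-\phi(X_i))^2\mid V_i=v]\,p_V(v)}.$$

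For the expectation, the same tower-property identity gives
$$\E\!\left[(D_i-\phi(X_i))(f(Z_i, X_i)-\phi(X_i))\,K\!\left(\tfrac{V_i-v}{h}\right)\right] = \E\!\left[(f(Z_i,X_i)-\phi(X_i))^2\,K\!\left(\tfrac{V_i-v}{h}\right)\right].$$
Then I would invoke Lemma~\ref{lem_ExpFunKernel} with $\rho(x) = \E[(f(Z_i,X_i)-\phi(X_i))^2|V_i=x]\,p_V(x)$, which yields an expansion of the form $h\rho(v) + (h^3/2)\rho''(v)\int K(u)u^2 du + $ remainder. After dividing by $h$ and multiplying by $\sigma(v)/\sigma_Q(v)$, the leading term becomes exactly $1$, while the bias contribution is $O(h^2\,\|\rho''\|_\infty/\sigma_Q^2(v)) = O(h^2)$ by assertion~\ref{ass_Reg6} of Assumption~\ref{ass_RegularityHet}, hence $o(1)$.

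For the variance, I would bound the variance of a single summand by its second moment
$$\E\!\left[(D_i-\phi(X_i))^2(f(Z_i,X_i)-\phi(X_i))^2\,K\!\left(\tfrac{V_i-v}{h}\right)^{\!2}\right],$$
and apply Lemma~\ref{lem_ExpFunKernel} with kernel $K^2$ (which is also bounded and integrable) together with assertion~\ref{ass_CompSigmaQ2} of Assumption~\ref{ass_RegularityHet}, producing a bound of order $h\,\sigma_Q^2(v)$. Scaling by $(\sigma(v)/\sigma_Q(v))^2 / (Nh^2)$ and using $(\sigma(v)/\sigma_Q(v))^2 \asymp 1/\sigma_Q^4(v)$ (which follows from assertion~\ref{ass_CompSigmaQ}), the variance of the scaled sum is at most $O(1/(Nh\,\sigma_Q^2(v)))$. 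This is $o(1)$ by the assumption $\sigma_Q^2(v) \gg (Nh)^{-1}$ in Theorem~\ref{thm_AsNormHet}, and Chebyshev's inequality completes the argument.

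The main obstacle is not conceptual but bookkeeping: every scaling involving $\sigma_Q(v)$ (which may vanish in the weak-IV regime) must be tracked carefully, since both the numerator $\sigma(v)/\sigma_Q(v)$ and the smoothness remainders depend on $\sigma_Q(v)$. The assertions \ref{ass_CompSigmaQ}, \ref{ass_CompSigmaQ2}, and \ref{ass_Reg6} of Assumption~\ref{ass_RegularityHet} are precisely calibrated so that the bias scales as $h^2$ and the variance as $1/(Nh\sigma_Q^2(v))$; beyond verifying these cancellations, the proof is a standard kernel-smoothing-plus-Chebyshev argument.
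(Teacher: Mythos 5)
Your proposal is correct and follows essentially the same route as the paper: the tower-property identity for $\sigma(v)/\sigma_Q(v)$, the bias expansion via Lemma \ref{lem_ExpFunKernel} with assertion \ref{ass_Reg6}, and a Chebyshev bound on the second moment using Lemma \ref{lem_ExpFunKernel} with $K^2$ together with assertions \ref{ass_CompSigmaQ}, \ref{ass_CompSigmaQ2} and \ref{ass_Density}, concluding from $\sigma_Q^2(v)\gg (Nh)^{-1}$. The only detail worth adding is that the second-moment expansion also invokes the smoothness assertion \ref{ass_Reg7} of Assumption \ref{ass_RegularityHet} to justify applying Lemma \ref{lem_ExpFunKernel} to the function $\E[(D_i-\phi(X_i))^2(f(Z_i,X_i)-\phi(X_i))^2|V_i=\cdot]\,p_V(\cdot)$.
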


\begin{lemma}\label{lem_DenomHetDiff}
    Under the conditions of Theorem \ref{thm_AsNormHet},
    \begin{align*}
        &\textstyle \frac{\sigma(v)/\sigma_Q(v)}{Nh}\sum_{k = 1}^K\sum_{i\in I_k} (D_i-\hat \phi_1^k(X_i))(\hat f^k(Z_i, X_i) - \hat \phi_2^k(X_i))K\left(\frac{V_i - v}{h}\right)\\
        &\textstyle -\frac{\sigma(v)/\sigma_Q(v)}{Nh}\sum_{i = 1}^N (D_i- \phi(X_i))(f(Z_i, X_i) - \phi(X_i))K\left(\frac{V_i - v}{h}\right)
        &\textstyle = o_P(1).
    \end{align*}
\end{lemma}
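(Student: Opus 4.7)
\textbf{Proof proposal for Lemma~\ref{lem_DenomHetDiff}.} The plan is to mirror the template of the proof of Lemma~\ref{lem_RobNumHetDiff}. Writing $D_i - \hat\phi_1^k(X_i) = (D_i-\phi(X_i)) + \Delta\phi_{1,i}^k$ and $\hat f^k(Z_i,X_i) - \hat\phi_2^k(X_i) = (f-\phi) + (\Delta\phi_{2,i}^k - \Delta f_i^k)$, expanding the estimated product and subtracting the oracle product $(D_i-\phi)(f-\phi)$ yields (up to signs) a sum of five kernel-weighted cross-term families:
\begin{equation*}
(D_i-\phi)\Delta\phi_{2,i}^k,\ (D_i-\phi)\Delta f_i^k,\ \Delta\phi_{1,i}^k(f-\phi),\ \Delta\phi_{1,i}^k\Delta\phi_{2,i}^k,\ \Delta\phi_{1,i}^k\Delta f_i^k.
\end{equation*}
I need to show that each corresponding sum, times $\sigma(v)/(Nh\sigma_Q(v))$, is $o_P(1)$.

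A useful preliminary observation: by the tower property and $\E[D_i-\phi(X_i)\mid Z_i,X_i] = f(Z_i,X_i)-\phi(X_i)$, one has $\E[(D-\phi)(f-\phi)\mid V=v] = \E[(f-\phi)^2\mid V=v]$. Combined with assertions~\ref{ass_Density} and \ref{ass_CompSigmaQ} of Assumption~\ref{ass_RegularityHet}, this gives the scaling $\sigma(v)/\sigma_Q(v)\asymp 1/\sigma_Q^2(v)$, so it suffices to bound each cross-term sum, divided by $Nh\sigma_Q^2(v)$, by $o_P(1)$. This matches the shape of the bounds already established for Lemma~\ref{lem_RobNumHetDiff}, only with $1/(Nh\sigma_Q^2(v))$ in place of $1/(\sqrt{Nh}\sigma_Q(v))$.

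Three of the five families can be handled directly by adapting arguments from Lemma~\ref{lem_RobNumHetDiff}. For $(D_i-\phi)\Delta\phi_{2,i}^k$ and $\Delta\phi_{1,i}^k(f-\phi)$, conditioning on $X_i$ and $I_k^c$ gives mean zero (since $\E[D-\phi\mid X]=0$ and $\E[f-\phi\mid X]=0$); Chebyshev's inequality together with a H\"older bound (as in \eqref{eq_RobNumHet5}--\eqref{eq_RobNumHet6}) reduces matters to $\|\phi-\hat\phi_2^k\|_{L_2}/\sigma_Q^2(v)=o_P(h^{1/2})$ (the extra hypothesis of Theorem~\ref{thm_AsNormHet}) and to the $L_{p_2}$/$L_{2q_2}$ pair of Assumption~\ref{ass_NuisanceRatesHet} applied to $\hat\phi_1^k$; assertion~\ref{ass_CondVariance} supplies the moment control needed for the $(D-\phi)$-factor. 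For the product-of-errors families $\Delta\phi_{1,i}^k\Delta\phi_{2,i}^k$ and $\Delta\phi_{1,i}^k\Delta f_i^k$, direct Cauchy-Schwarz (as in \eqref{eq_RobNumHet7}--\eqref{eq_RobNumHet10}) combined with assertion~\ref{ass_RateProducts} of Assumption~\ref{ass_NuisanceRatesHet} and the IV-strength lower bound $\sigma_Q^2(v)\gg (Nh)^{-1}$ gives the required $o_P(1)$. Lemma~\ref{lem_ConditionalToZero} converts the conditional-on-$I_k^c$ bounds into unconditional ones.

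The main obstacle, and the qualitatively new step relative to Lemma~\ref{lem_RobNumHetDiff}, is the cross term $(D_i-\phi)\Delta f_i^k$, which enjoys no orthogonality: conditioning on $X_i$ does not remove the $Z_i$-dependence in $\Delta f_i^k$, and conditioning on $(Z_i,X_i)$ gives $\E[D-\phi\mid Z,X]=f-\phi\neq 0$. My plan is to split
\begin{equation*}
(D_i-\phi)\Delta f_i^k = (D_i-f)\Delta f_i^k + (f-\phi)\Delta f_i^k.
\end{equation*}
The first summand has conditional mean zero given $(Z_i,X_i,I_k^c)$, so Chebyshev -- using boundedness of $\E[(D-f)^2\mid Z,X]$ (inherited from assertion~\ref{ass_CondVariance} and the decomposition $D-\phi = (D-f)+(f-\phi)$ together with assertion~\ref{ass_CompSigmaQ2}) and the extra rate $\|f-\hat f^k\|_{L_2}/\sigma_Q^2(v)=o_P(h^{1/2})$ -- closes it. For the second summand, which has no such structure, I use a kernel-localised Cauchy-Schwarz,
\begin{equation*}
\E\bigl[|f-\phi|\,|\Delta f^k|\,K(\cdot)\mid I_k^c\bigr]\leq \sqrt{\E[(f-\phi)^2 K(\cdot)]}\,\sqrt{\E[(\Delta f^k)^2 K(\cdot)\mid I_k^c]},
\end{equation*}
bounding the first factor by $O(\sqrt{h}\,\sigma_Q(v))$ via Lemma~\ref{lem_ExpFunKernel} together with assertion~\ref{ass_CompSigmaQ}, and the second by $\sqrt{\|K\|_\infty}\,\|f-\hat f^k\|_{L_2}$; combining with assertion~\ref{ass_Rate_F_Phi_slow} of Assumption~\ref{ass_NuisanceRatesHet} yields the $o_P(1)$ bound. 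The delicate point is that for this non-orthogonal piece one has to extract the $h$-gain from the kernel weight through Cauchy-Schwarz rather than through a conditional-mean-zero argument; the rest is a re-run of techniques already deployed in Lemma~\ref{lem_RobNumHetDiff}.
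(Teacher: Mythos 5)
Your decomposition into the five cross-term families, the reduction via $\sigma(v)/\sigma_Q(v)\asymp 1/\sigma_Q^2(v)$, and your treatment of $\Delta\phi_{1,i}^k(f-\phi)$, $\Delta\phi_{1,i}^k\Delta\phi_{2,i}^k$ and $\Delta\phi_{1,i}^k\Delta f_i^k$ (recycling the bounds from Lemma~\ref{lem_RobNumHetDiff} and paying the factor $1/(\sqrt{nh}\,\sigma_Q(v)) = o(1)$) all match the paper's proof. The gap is in the two families containing the raw treatment residual $D_i-\phi(X_i)$. For the piece $(D_i-f(Z_i,X_i))\Delta f_i^k$ you invoke ``boundedness of $\E[(D-f)^2\mid Z,X]$ inherited from assertion \ref{ass_CondVariance}'', but assertion \ref{ass_CondVariance} of Assumption~\ref{ass_NuisanceRatesHet} bounds $\E[\epsilon_i^2\mid Z_i,X_i]$ for the \emph{outcome-equation} error $\epsilon_i=Y_i-\beta(V_i)D_i-g(X_i)$; it says nothing about the first-stage residual $D_i-\E[D_i\mid Z_i,X_i]$, and neither does assertion \ref{ass_CompSigmaQ2} of Assumption~\ref{ass_RegularityHet} (which controls a conditional moment given $V_i=v$ only). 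Likewise, your conditional-mean-zero-plus-H\"older route for $(D_i-\phi(X_i))\Delta\phi_{2,i}^k$ needs an $L_p$ bound on $\E[(D_i-\phi(X_i))^2\mid X_i]$, whereas the assumptions only provide such bounds for $\E[(f(Z_i,X_i)-\phi(X_i))^2\mid X_i]$. As written, these steps do not close under the stated hypotheses.

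The paper avoids needing any such conditional-variance control by not exploiting orthogonality for these terms at all: it applies the Cauchy--Schwarz inequality directly to the empirical sums, bounding \eqref{eq_EstDenHet1} by
$\frac{1}{\sigma_Q^2(v)}\bigl(\frac{1}{nh}\sum_{i\in I_k}(D_i-\phi(X_i))^2K(\frac{V_i-v}{h})\bigr)^{1/2}\bigl(\frac{1}{nh}\sum_{i\in I_k}(\Delta f_i^k)^2K(\frac{V_i-v}{h})\bigr)^{1/2}$.
The first factor is $O_P(1)$ by assertions \ref{ass_Reg8} and \ref{ass_RD4} of Assumption~\ref{ass_RegularityHet} together with Lemmas~\ref{lem_ExpFunKernel} and \ref{lem_ConditionalBounded} (these assertions exist precisely to control the $D_i-\phi(X_i)$ factor through $V_i$-conditional moments, not $(Z_i,X_i)$-conditional ones), and the second factor, divided by $\sigma_Q^4(v)$, is $o_P(1)$ because the extra hypothesis of Theorem~\ref{thm_AsNormHet} gives $\|K\|_\infty\|f-\hat f^k\|_{L_2}^2/(h\,\sigma_Q^4(v))=o_P(1)$; the same argument with $\hat\phi_2^k$ in place of $\hat f^k$ handles \eqref{eq_EstDenHet2}. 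So the correct repair of your proof is to replace your split $(D-\phi)\Delta f=(D-f)\Delta f+(f-\phi)\Delta f$ and the $X$-conditional orthogonality argument by this plain Cauchy--Schwarz step; your identification of where the extra rate conditions $\|f-\hat f^k\|_{L_2}/\sigma_Q^2(v),\,\|\phi-\hat\phi_2^k\|_{L_2}/\sigma_Q^2(v)=o_P(h^{1/2})$ must enter is otherwise correct.
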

\subsubsection{Proof of Lemma \ref{lem_DenomHetOracle}}
Note that by the tower property of conditional expectations,
\begin{align}
   \frac{\sigma(v)}{\sigma_Q(v)} &= \left(\E[(D_i-\phi(X_i))(f(Z_i, X_i)-\phi(X_i))|V_i = v] p_V(v)\right)^{-1}\nonumber\\
   &= \left(\E[(f(Z_i, X_i)-\phi(X_i))^2|V_i = v] p_V(v)\right)^{-1}.\label{eq_SigDivSigQ}
\end{align}
By Lemma \ref{lem_ExpFunKernel} and assertion \ref{ass_Reg6} of Assumption \ref{ass_RegularityHet}
\begin{align*}
    &\E\left[(D_i- \phi(X_i))(f(Z_i, X_i) - \phi(X_i))K\left(\frac{V_i - v}{h}\right)\right]\\
    &=\E\left[\E[(f(Z_i, X_i)-\phi(X_i))^2|V_i]K\left(\frac{V_i - v}{h}\right)\right]\\
    & =  h \E[(f(Z_i, X_i)-\phi(X_i))^2|V_i = v]p_V(v) + \sigma_Q^2(v)O(h^3).
\end{align*}
From assertions \ref{ass_Density} and \ref{ass_CompSigmaQ} of Assumption \ref{ass_RegularityHet}, it follows that
$$U_i = \frac{\sigma(v)/\sigma_Q(v)}{h} (D_i- \phi(X_i))(f(Z_i, X_i) - \phi(X_i))K\left(\frac{V_i - v}{h}\right)$$
satisfy $\E[U_i] = 1 + O(h^2) = 1 + o(1)$. To prove Lemma \ref{lem_DenomHetOracle}, it therefore remains to prove $\frac{1}{N}\sum_{i=1}^N (U_i-\E[U_i]) = o_P(1)$. For this, using Chebyshev's inequality, it is enough to show
\begin{equation}\label{eq_ExpViSq}
    \frac{\E[U_i^2]}{N}=o(1).
\end{equation}
We calculate $\frac{\E[U_i^2]}{N}$ using \eqref{eq_SigDivSigQ}, apply Lemma \ref{lem_ExpFunKernel} and assertions \ref{ass_Reg7}, \ref{ass_CompSigmaQ2}, \ref{ass_CompSigmaQ} and \ref{ass_Density} of Assumption \ref{ass_RegularityHet} to get
\begin{align*}
    \textstyle\frac{\E[U_i^2]}{N}& = \textstyle\frac{\E\left[(D_i-\phi(X_i))^2(f(Z_i, X_i)-\phi(X_i))^2K\left(\frac{V_i - v}{h}\right)^2\right]}{Nh^2 \E[(f(Z_i, X_i)-\phi(X_i))^2|V_i = v]^2 p_V(v)^2}\\
    &\textstyle=\frac{h\E\left[(D_i-\phi(X_i))^2(f(Z_i, X_i)-\phi(X_i))^2|V_i = v\right]p_V(v)\int_{-\infty}^{\infty} K(x)^2\mathrm d x + \sigma_Q^2(v)O(h^3)}{Nh^2 \E[(f(Z_i, X_i)-\phi(X_i))^2|V_i = v]^2 p_V(v)^2}\\
    &\textstyle\lesssim \frac{h\sigma_Q^2(v)+h^3\sigma_Q^2(v)}{N h^2\sigma_Q^4(v)}\\
    &\textstyle \lesssim \frac{1}{Nh\sigma_Q^2(v)}.
\end{align*}
Since by the assumption of Theorem \ref{thm_AsNormHet}, $\sigma_Q^2(v) \gg (Nh)^{-1}$, \eqref{eq_ExpViSq} follows. This concludes the proof of Lemma \ref{lem_DenomHetOracle}. 
\subsubsection{Proof of Lemma \ref{lem_DenomHetDiff}}
Let $\Delta \phi_{1,i}^k = \phi(X_i)-\hat\phi_1^k(X_i)$, $\Delta \phi_{2,i}^k = \phi(X_i)-\hat\phi_2^k(X_i)$ and $\Delta f_i^k = f(X_i)-\hat f^k(X_i)$ as before. Then, the quantity given in Lemma \ref{lem_DenomHetDiff} is equal to
\begin{align*}
    &\frac{\sigma(v)/\sigma_Q(v)}{Nh}\sum_{k=1}^K\sum_{i\in I_k}\bigl[(D_i-\phi(X_i))(\Delta \phi_{2,i}^k -\Delta f_i^k)\\
    &\quad\quad\quad + \Delta \phi_{1,i}^k(f(Z_i, X_i)-\phi(X_i))\\
    &\quad\quad\quad + \Delta \phi_{1,i}^k(\Delta \phi_{2,i}^k- \Delta f_i^i)\bigr]K\left(\frac{V_i - v}{h}\right).
\end{align*}
Moreover, we know from \eqref{eq_SigDivSigQ} and assertions \ref{ass_Density} and \ref{ass_CompSigmaQ} of Assumption \ref{ass_RegularityHet} that $\sigma(v)/\sigma_Q(v)\asymp \frac{1}{\sigma_Q^2(v)}$
Hence, it remains to show that for $k = 1, \ldots, K$,
\begin{align}
    \frac{1}{nh\sigma_Q^2(v)}\sum_{i\in I_k}(D_i-\phi(X_i))\Delta f_i^kK\left(\frac{V_i - v}{h}\right) = o_P(1)\label{eq_EstDenHet1}\\
    \frac{1}{nh\sigma_Q^2(v)}\sum_{i\in I_k}(D_i-\phi(X_i))\Delta \phi_{2,i}^kK\left(\frac{V_i - v}{h}\right) = o_P(1)\label{eq_EstDenHet2}\\
    \frac{1}{nh\sigma_Q^2(v)}\sum_{i\in I_k}(f(Z_i, X_i)-\phi(X_i))\Delta \phi_{1,i}^k K\left(\frac{V_i - v}{h}\right)= o_P(1)\label{eq_EstDenHet3}\\
    \frac{1}{nh\sigma_Q^2(v)}\sum_{i\in I_k}\Delta f_i^k\Delta \phi_{1,i}^kK\left(\frac{V_i - v}{h}\right) = o_P(1)\label{eq_EstDenHet4}\\
    \frac{1}{nh\sigma_Q^2(v)}\sum_{i\in I_k}\Delta\phi_{1,i}^k\Delta\phi_{2,i}^kK\left(\frac{V_i - v}{h}\right) = o_P(1).\label{eq_EstDenHet5}
\end{align}
For \eqref{eq_EstDenHet1}, we use the Cauchy-Schwarz inequality to get
\begin{align*}
    &\left|\frac{1}{nh\sigma_Q^2(v)}\sum_{i\in I_k}(D_i-\phi(X_i))\Delta f_i^kK\left(\frac{V_i - v}{h}\right)\right|\\
    &\leq \frac{1}{\sigma_Q^2(v)}{\sqrt{\frac{1}{nh}\sum_{i\in I_k}(D_i-\phi(X_i))^2K\left(\frac{V_i - v}{h}\right)}}{\sqrt{\frac{1}{nh}\sum_{i\in I_k}(\Delta f_i^k)^2K\left(\frac{V_i - v}{h}\right)}}.
\end{align*}
Hence, to show \eqref{eq_EstDenHet1}, it is enough to prove that
\begin{align}
    \frac{1}{nh}\sum_{i\in I_k}(D_i-\phi(X_i))^2K\left(\frac{V_i - v}{h}\right) & = O_P(1)\label{eq_EstDenHet6}\\
    \frac{1}{\sigma_Q^4(v) nh}\sum_{i\in I_k}(\Delta f_i^k)^2K\left(\frac{V_i - v}{h}\right) & = o_P(1).\label{eq_EstDenHet7}
\end{align}
\eqref{eq_EstDenHet6} can be obtained as in the proof of Lemma \ref{lem_RobVarHetOracle} using Assertions \ref{ass_Reg8} and \ref{ass_RD4} of Assumption \ref{ass_RegularityHet} and Lemma \ref{lem_ExpFunKernel}. \eqref{eq_EstDenHet7} follows similarly to the proof of \eqref{eq_RobNumHet7} by noticing that $\frac{\|K\|_{\infty}}{\sigma_Q^4(v) h}\|f-\hat f^k\|_{L_2}^2 = o_P(1)$ by the assumption in Theorem \ref{thm_AsNormHet}.
Hence, we obtain \eqref{eq_EstDenHet1}. In the same way, one can obtain \eqref{eq_EstDenHet2}. For \eqref{eq_EstDenHet3}, it follows from \eqref{eq_RobNumHet6} that
$$\frac{1}{nh\sigma_Q^2(v)}\sum_{i\in I_k}(f(Z_i, X_i)-\phi(X_i))\Delta \phi_{1,i}^k K\left(\frac{V_i - v}{h}\right)=\frac{1}{\sigma_Q(v)\sqrt{nh}} o_P(1).$$
By the assumption of Theorem \ref{thm_AsNormHet}, $\sigma_Q^2(v)\gg (Nh)^{-1}$ and hence, \eqref{eq_EstDenHet3} follows. In the same way, \eqref{eq_EstDenHet4} and \eqref{eq_EstDenHet5} follow from \eqref{eq_RobNumHet10} and \eqref{eq_RobNumHet9}.

\subsection{Proof of Theorem \ref{thm_VarHet}}
We know from Lemma \ref{lem_DenomHetOracle} and Lemma \ref{lem_DenomHetDiff} that for the denominator of $\hat\sigma^2(v)$ defined in \eqref{eq_DefVarianceHet},
\begin{align*}
    &\left(\frac{\sigma(v)/\sigma_Q(v)}{Nh}\sum_{k = 1}^K\sum_{i\in I_k} (D_i-\hat \phi_1^k(X_i))(\hat f^k(Z_i, X_i) - \hat \phi_2^k(X_i))K\left(\frac{V_i - v}{h}\right)\right)^2\\
    &= 1 + o_P(1).
\end{align*}
Therefore, it remains to show that for the numerator we have
\begin{align*}
    &\frac{1}{Nh\sigma_Q^2(v)}\sum_{k=1}^K\sum_{i\in I_k}\left(Y_i-\hat l^k(X_i)-\hat \beta(v)(D_i-\hat\phi_1^k(X_i))\right)^2\cdot\\
    &\qquad\qquad\qquad\qquad\cdot\left(\hat f^k(Z_i, X_i)-\hat\phi_2^k(X_i))\right)^2K\left(\frac{V_i - v}{h}\right)^2\\
    & = 1 + o_P(1).
\end{align*}
This equality was shown to hold if one replaces $\hat\beta(v)$ by $\beta(v)$ in Lemma \ref{lem_RobVarHetOracle} and Lemma \ref{lem_RobVarHetDiff}. Hence, it is enough to show
\begin{align}
 &\frac{1}{Nh\sigma_Q^2(v)}\sum_{k=1}^K\sum_{i\in I_k}\left(Y_i-\hat l^k(X_i)-\hat \beta(v)(D_i-\hat\phi_1^k(X_i))\right)^2\cdot\\
 &\qquad\qquad\qquad\qquad\cdot\left(\hat f^k(Z_i, X_i)-\hat\phi_2^k(X_i))\right)^2K\left(\frac{V_i - v}{h}\right)^2\nonumber\\
 &-\frac{1}{Nh\sigma_Q^2(v)}\sum_{k=1}^K\sum_{i\in I_k}\left(Y_i-\hat l^k(X_i)-\beta(v)(D_i-\hat\phi_1^k(X_i))\right)^2\cdot\\
 &\qquad\qquad\qquad\qquad\quad\cdot\left(\hat f^k(Z_i, X_i)-\hat\phi_2^k(X_i))\right)^2K\left(\frac{V_i - v}{h}\right)^2\nonumber\\
 &= o_P(1).
    \label{eq_VarDiffBetaHat}
\end{align}
Define
\begin{align*}
    \Xi_i^k & = \frac{1}{\sqrt{h}\sigma_Q(v)}(\beta(v)-\hat\beta(v))(D_i-\hat\phi_1^k(X_i))\cdot\\
    &\qquad\qquad\qquad\cdot\left(\hat f^k(Z_i, X_i)-\hat\phi_2^k(X_i)\right)K\left(\frac{V_i - v}{h}\right)\\
    \Theta_i^k &= \frac{1}{\sqrt{h}\sigma_Q(v)}\left(Y_i-\hat l^k(X_i)-\beta(v)(D_i-\hat\phi_1^k(X_i))\right)\cdot\\
    &\qquad\qquad\qquad\cdot\left(\hat f^k(Z_i, X_i)-\hat\phi_2^k(X_i)\right)K\left(\frac{V_i - v}{h}\right).
\end{align*}
The expression in \eqref{eq_VarDiffBetaHat} is then equal to
$$\frac{1}{N}\sum_{k=1}^K\sum_{i\in I_k}\left((\Xi_i^k+\Theta_i^k)^2-(\Theta_i^k)^2\right).$$
By the same argument as in the proof of Lemma \ref{lem_RobVarHetDiff}, it remains to show
\begin{align}
    \frac{1}{N}\sum_{k=1}^K\sum_{i\in I_k}(\Theta_i^k)^2 &= O_P(1) \label{eq_SumThetaSq}\\
    \frac{1}{N}\sum_{k=1}^K\sum_{i\in I_k}(\Xi_i^k)^2 & = o_P(1) \label{eq_SumXiSq}
\end{align}
\eqref{eq_SumThetaSq} holds by Lemma \ref{lem_RobVarHetOracle} and Lemma \ref{lem_RobVarHetDiff}. For \eqref{eq_SumXiSq}, observe that
\begin{align*}
    \frac{1}{N}\sum_{i=1}^N(\Xi_i^k)^2
    &=\frac{(\beta(v)-\hat\beta(v))^2}{Nh\sigma_Q^2(v)}\sum_{k=1}^K\sum_{i\in I_k}(D_i-\hat\phi_1^k(X_i))^2\cdot\\
    &\qquad\qquad\qquad\qquad\qquad\quad\cdot\left(\hat f^k(Z_i, X_i)-\hat\phi_2^k(X_i)\right)^2K\left(\frac{V_i - v}{h}\right)^2.
\end{align*}
By Theorem \ref{thm_AsNormHet}, we know that $\sqrt{Nh}(\hat\beta(v)-\beta(v))/\sigma(v) = O_P(1)$. 
From \eqref{eq_SigDivSigQ} and assertions \ref{ass_Density} and \ref{ass_CompSigmaQ} of Assumption \ref{ass_RegularityHet}, we know that $\sigma(v)\asymp 1/\sigma_Q(v)$. Since by assumption $\sigma_Q(v)\gg (Nh)^{-1/2}$, it follows that $(\hat\beta(v)-\beta(v))^2 = o_P(1)$. It remains to prove that
\begin{align}
&\frac{1}{Nh\sigma_Q^2(v)}\sum_{k=1}^K\sum_{i\in I_k}(D_i-\hat\phi_1^k(X_i))^2\left(\hat f^k(Z_i, X_i)-\hat\phi_2^k(X_i)\right)^2K\left(\frac{V_i - v}{h}\right)^2\nonumber\\
&= O_P(1).\label{eq_VarHetBounded}
\end{align}
As before, we first consider the ``oracle'' version of \eqref{eq_VarHetBounded} with $\hat \phi_{1,2}^k$ and $\hat f^k$ replaced by $\phi$ and $f$. For this, note that by assertion \ref{ass_Reg7} of Assumption \ref{ass_RegularityHet} and Lemma \ref{lem_ExpFunKernel},
\begin{align*}
    \E\left[\frac{1}{Nh\sigma_Q^2(v)}\sum_{i= 1}^N(D_i-\phi(X_i))^2\left(f(Z_i, X_i)-\phi(X_i)\right)^2K\left(\frac{V_i - v}{h}\right)^2\right]&=O(1)
\end{align*}
and by Lemma \ref{lem_ConditionalBounded}, we conclude that the oracle version of \eqref{eq_VarHetBounded} is $O_P(1)$. It remains to show that the difference between \eqref{eq_VarHetBounded} and its ``oracle'' version is $O_P(1)$ (note that we only need $O_P(1)$ here and not $o_P(1)$). Similarly to before, this reduces to show for $k = 1,\ldots, K$
\begin{align}
    \frac{1}{nh\sigma_Q^2(v)}\sum_{i\in I_k}(D_i-\phi(X_i))^2(\Delta f_i^k)^2K\left(\frac{V_i - v}{h}\right)^2&=O_P(1)\label{eq_EstVarHet1}\\
    \frac{1}{nh\sigma_Q^2(v)}\sum_{i\in I_k}(D_i-\phi(X_i))^2(\Delta \phi_{2,i}^k)^2K\left(\frac{V_i - v}{h}\right)^2&=O_P(1)\label{eq_EstVarHet2}\\
    \frac{1}{nh\sigma_Q^2(v)}\sum_{i\in I_k}(f(Z_i, X_i)-\phi(X_i))^2(\Delta \phi_{1,i}^k)^2K\left(\frac{V_i - v}{h}\right)^2&=O_P(1)\label{eq_EstVarHet3}\\
    \frac{1}{nh\sigma_Q^2(v)}\sum_{i\in I_k}(\Delta f_i^k)^2(\Delta \phi_{1,i}^k)^2K\left(\frac{V_i - v}{h}\right)^2&=O_P(1)\label{eq_EstVarHet4}\\
    \frac{1}{nh\sigma_Q^2(v)}\sum_{i\in I_k}(\Delta \phi_{1,i}^k)^2(\Delta \phi_{2,i}^k)^2K\left(\frac{V_i - v}{h}\right)^2&=O_P(1).\label{eq_EstVarHet5}
\end{align}
For \eqref{eq_EstVarHet1}, using the Cauchy-Schwarz inequality, it suffices to show that $\frac{1}{nh}\sum_{i\in I_k}(D_i-\phi(X_i))^4 K\left(\frac{V_i - v}{h}\right)^2 = O_P(1)$ and $\frac{1}{nh\sigma_Q^4(v)}\sum_{i\in I_k}(\Delta f_i^k)^4 K\left(\frac{V_i - v}{h}\right)^2 = O_P(1)$. The expectation of the first term is bounded by assertion \ref{ass_RD4} of Assumption \ref{ass_RegularityHet} and hence the term is $O_P(1)$ by Lemma \ref{lem_ConditionalBounded}. The second follows by taking the expectation conditional on $I_k^c$ since $\|f-\hat f\|_{L_4}/\sigma_Q(v) = O_P\left(h^{1/4}\right)$ by the assumption of Theorem \ref{thm_VarHet}. Hence, \eqref{eq_EstVarHet1} follows and \eqref{eq_EstVarHet2} follows in exactly the same way. Finally, \eqref{eq_EstVarHet3}, \eqref{eq_EstVarHet4} and \eqref{eq_EstVarHet5} follow from \eqref{eq_RobVarHet6}, \eqref{eq_RobVarHet10} and \eqref{eq_RobVarHet9}. This concludes the proof of \eqref{eq_VarHetBounded} and hence also the proof of Theorem \ref{thm_VarHet}.

\subsection{Some Lemmas}\label{sec_SomeLemmas}
\begin{lemma}\label{lem_ExpFunKernel}
     Assume that the kernel $K$ satisfies Assumption \ref{ass_Kernel}. Moreover, assume that the function $\rho:\mathbb R\to\mathbb R$ defined by $\rho(x) = v(x) \cdot p_V(x)$ for some function $v(\cdot)$ is two times differentiable for $x$ in the support of $V_i$. Then,
    \begin{enumerate}
        \item $$\E\left[K\left(\frac{V_i - v}{h}\right)v(V_i)\right] = h\rho(v) + h^3\frac{\rho''(v)}{2}\int_{-\infty}^\infty K(x) x^2 \mathrm{d}x + R(v, h)$$
    with
    \begin{align*}
        &|R(v, h)|\leq h^3\Biggl(\|\rho ''\|_\infty \int_{|x|>h^{-1/2}}K(x)x^2\mathrm{d}x \\
    &\qquad\qquad\qquad\quad+ \frac{1}{2}\int_{-\infty}^\infty K(x) x^2 \mathrm{d}x \cdot \sup_{|s-v|\leq h^{1/2}}|\rho''(s)-\rho''(v)|\Biggr).
    \end{align*}
    In particular, if $\rho''$ is continuous at $v$, then $R(v, h) = o(h^3)$. If $|\rho''|$ bounded but not continuous at $v$, we still have $R(v, h) = O(h^3)$.
    \item 
    \begin{align*}
    &\E\left[K\left(\frac{V_i - v}{h}\right)^2v(V_i)\right] \\
    &= h\rho(v)\int_{-\infty}^\infty K(x)^2\mathrm{d}x + h^3\frac{\rho''(v)}{2}\int_{-\infty}^\infty K(x)^2 x^2 \mathrm{d}x + \bar R(v, h)
    \end{align*}
    with
    \begin{align*}
    &|\bar R (a, h)|\leq h^3\Biggl(\|\rho ''\|_\infty \int_{|x|>h^{-1/2}}K(x)^2x^2\mathrm{d}x\\
    &\qquad\qquad\qquad\quad+ \frac{1}{2}\int_{-\infty}^\infty K(x)^2 x^2 \mathrm{d}x \cdot \sup_{|s-v|\leq h^{1/2}}|\rho''(s)-\rho''(v)|\Biggr).
    \end{align*}
    In particular, if $\rho''$ is continuous at $v$, then $\bar R(v, h) = o(h^3)$. If $|\rho''|$ bounded but not continuous at $v$, we still have $\bar R(v, h) = O(h^3)$.
    \end{enumerate}
    
\end{lemma}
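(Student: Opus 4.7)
The approach is a standard kernel smoothing calculation via change of variables and second-order Taylor expansion. First I would rewrite the expectation as an integral against the density,
$$\E\left[K\left(\frac{V_i - v}{h}\right)v(V_i)\right] = \int K\!\left(\frac{u-v}{h}\right)\rho(u)\,\mathrm{d}u,$$
and substitute $x = (u-v)/h$, $\mathrm{d}u = h\,\mathrm{d}x$, to obtain $h\int K(x)\rho(v + hx)\,\mathrm{d}x$. Since $K$ is bounded and integrable and $\rho = v\cdot p_V$ vanishes off the support of $V_i$, there is no issue with integrability, and for each relevant $x$ the point $v + hx$ lies in the support (or the integrand is zero).

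Next, apply Taylor's theorem: for each $x$, there exists $\xi_x$ between $v$ and $v + hx$ with
$$\rho(v + hx) = \rho(v) + hx\,\rho'(v) + \tfrac{(hx)^2}{2}\rho''(v) + \tfrac{(hx)^2}{2}\bigl(\rho''(\xi_x) - \rho''(v)\bigr).$$
Inserting this expansion and integrating term by term, the zeroth-order term gives $\rho(v)$ using $\int K = 1$, the first-order term vanishes by symmetry of $K$ (assertion 2 of Assumption \ref{ass_Kernel}), and the second-order term contributes $\tfrac{h^2}{2}\rho''(v)\int K(x)x^2\,\mathrm{d}x$. Multiplying by the outer factor $h$ produces exactly the two leading terms in the statement, and leaves the remainder
$$R(v, h) = \tfrac{h^3}{2}\int K(x)\,x^2\bigl(\rho''(\xi_x)-\rho''(v)\bigr)\mathrm{d}x.$$

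Then I would bound $|R(v,h)|$ by splitting the domain of integration into $\{|x|\leq h^{-1/2}\}$ and its complement. On the inner region, $|\xi_x - v|\leq h|x|\leq h^{1/2}$, so $|\rho''(\xi_x)-\rho''(v)|\leq \sup_{|s-v|\leq h^{1/2}}|\rho''(s)-\rho''(v)|$, which can be pulled out of the integral and bounded against $\int_{-\infty}^{\infty}K(x)x^2\,\mathrm{d}x$; this yields the second piece of the stated bound. On the outer region, use the trivial bound $|\rho''(\xi_x)-\rho''(v)|\leq 2\|\rho''\|_\infty$; together with the factor $1/2$ in front of the integral, this gives $\|\rho''\|_\infty\int_{|x|>h^{-1/2}}K(x)x^2\,\mathrm{d}x$, matching the first piece. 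The asymptotic consequences then follow: if $\rho''$ is continuous at $v$, the local modulus vanishes as $h\to 0$ and $\int_{|x|>h^{-1/2}}K(x)x^2\,\mathrm{d}x \to 0$ by dominated convergence (using $\int x^2 K<\infty$), so $R(v,h) = o(h^3)$; if $\|\rho''\|_\infty<\infty$ without continuity at $v$, both pieces are $O(1)$ and $R(v,h) = O(h^3)$.

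The second assertion is proved by exactly the same argument with $K$ replaced by $K^2$ throughout; note that $K^2$ is still integrable and symmetric, and $\int x^2 K(x)^2\,\mathrm{d}x<\infty$ since $K$ is bounded and $\int x^2 K<\infty$, so every step goes through verbatim, only with the constant $\int K(x)^2\mathrm{d}x$ replacing the $1$ coming from $\int K = 1$ in the leading term. There is no real obstacle in this proof; the only item requiring care is the splitting threshold $h^{-1/2}$, whose role is to make both the tail mass $\int_{|x|>h^{-1/2}}K(x)x^2\,\mathrm{d}x$ and the oscillation window $[v-h^{1/2}, v+h^{1/2}]$ simultaneously shrink, which is what drives the $o(h^3)$ remainder under continuity of $\rho''$.
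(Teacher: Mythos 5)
Your proof is correct and follows essentially the same route as the paper's: a second-order Taylor expansion of $\rho$ with Lagrange remainder, the symmetry of $K$ to kill the linear term, and a split of the remainder integral at the threshold $|x| = h^{-1/2}$ so that both the kernel tail and the oscillation of $\rho''$ shrink simultaneously. The only cosmetic difference is that you perform the change of variables $x = (u-v)/h$ before Taylor expanding while the paper expands first and rescales afterwards; the bounds and conclusions are identical.
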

\begin{proof}
    We only prove the first statement. The second statement follows by replacing $K$ with $K^2$ in the proof. From Taylor's theorem, we can write for some $\xi_x\in [v, x]$ (or in $[x, v]$)
    $$\rho(x) = \rho(v) + \rho'(v)(x-v) +\rho''(v)\frac{(x-v)^2}{2} + (\rho''(\xi_x)-\rho''(v))\frac{(x-v)^2}{2}.$$
    Hence, using that $K(\cdot)$ is symmetric,
    \begin{align*}
        \E\left[K\left(\frac{V_i - v}{h}\right)v(V_i)\right]&=\int K\left(\frac{x-v}{h}\right)v(x) p_V(x)\mathrm{d}x\\
        &=\int K\left(\frac{x-v}{h}\right)\rho(x)\mathrm{d}x\\
        &=\rho(v)\int K\left(\frac{x-v}{h}\right)\mathrm{d}x\\
        &+\frac{\rho''(v)}{2}\int K\left(\frac{x-v}{h}\right)(x-v)^2\mathrm{d}x + R(v, h)\\
        & = h\rho(v) + h^3\frac{\rho''(v)}{2}\int K(x) x^2 \mathrm{d}x + R(v, h)
    \end{align*}
    with
    $$R(v, h) = \int K\left(\frac{x-v}{h}\right)\left(\rho''(\xi_x)-\rho''(v)\right)\frac{(x-v)^2}{2}\mathrm{d}x.$$
    To conclude, note that for any $t>0$, 
    \begin{align*}
        &|R(v, h)|\\
        &\leq |\int_{|x-v|>th}K\left(\frac{x-v}{h}\right)\left(\rho''(\xi_x)-\rho''(v)\right)\frac{(x-v)^2}{2}\mathrm{d}x|\\
        & + |\int_{|x-v|\leq th}K\left(\frac{x-v}{h}\right)\left(\rho''(\xi_x)-\rho''(v)\right)\frac{(x-v)^2}{2}\mathrm{d}x|\\
        &\leq \int_{|x-v|>th}K\left(\frac{x-v}{h}\right)2\|\rho''\|_{\infty}\frac{(x-v)^2}{2}\mathrm{d}x\\
        &+ \int_{|x-v|\leq th}K\left(\frac{x-v}{h}\right)\sup_{|s-a|\leq ht}|\rho''(s)-\rho''(v)|\frac{(x-v)^2}{2}\mathrm{d}x\\
        &\leq h^3\left(\|\rho ''\|_\infty \int_{|x|>t}K(x)x^2\mathrm{d}x + \frac{1}{2}\int_{-\infty}^\infty K(x) x^2 \mathrm{d}x \cdot \sup_{|s-v|\leq th}|\rho''(s)-\rho''(v)|\right).
    \end{align*}
    Choosing $t = h^{-1/2}$ gives the result.
\end{proof}

\begin{lemma}\label{lem_ConditionalToZero}
    Let $(X_n)_{n\in \mathbb N}$ be a sequence of integrable random variables and $(\mathcal F_n)_{n\in \mathbb N}$ be a sequence of sigma-algebras. If $\E[|X_n||\mathcal F_n] = o_P(1)$, then we have that $X_n = o_P(1)$.
\end{lemma}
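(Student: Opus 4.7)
The plan is to combine the conditional Markov inequality with a bounded convergence argument. Write $Y_n = \E[|X_n|\mid \mathcal{F}_n]$, so the hypothesis is $Y_n = o_P(1)$, and the goal is $P(|X_n|>\epsilon) \to 0$ for every $\epsilon > 0$.

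First I would apply Markov's inequality conditionally on $\mathcal{F}_n$: for any fixed $\epsilon>0$,
\[
P(|X_n| > \epsilon \mid \mathcal{F}_n) \;\leq\; \min\!\left(1,\; \frac{Y_n}{\epsilon}\right).
\]
The trivial bound by $1$ is what makes the argument work, because the right-hand side is now uniformly bounded. Taking unconditional expectations on both sides (using the tower property on the left) yields
\[
P(|X_n| > \epsilon) \;\leq\; \E\!\left[\min\!\left(1,\; \frac{Y_n}{\epsilon}\right)\right].
\]

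To conclude, observe that $Y_n \to 0$ in probability implies $\min(1, Y_n/\epsilon) \to 0$ in probability (composition with a continuous function preserves convergence in probability). Since $\min(1, Y_n/\epsilon)$ is dominated by the constant $1$, the bounded convergence theorem for convergence in probability (equivalently: every subsequence has a further a.s.-convergent subsequence, then dominated convergence) gives $\E[\min(1, Y_n/\epsilon)] \to 0$. Hence $P(|X_n|>\epsilon) \to 0$, which is $X_n = o_P(1)$.

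There is no real obstacle here; this is a standard measure-theoretic passage. The only subtle point worth being explicit about is the truncation by $1$, which is what allows convergence in probability of $Y_n$ to be upgraded to convergence of an expectation without any integrability assumption on the $Y_n$ themselves beyond what is already implicit.
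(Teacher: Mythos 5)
Your proof is correct and follows essentially the same route as the paper: both reduce to the bound $\Prob(|X_n|>\epsilon)\leq \E\left[\min\left(1,\E[|X_n|\,|\,\mathcal F_n]/\epsilon\right)\right]$ (the paper obtains it by truncating $|X_n|$ at $\epsilon$ before applying Markov and the tower property, you by conditional Markov plus the trivial bound by $1$), and then pass from convergence in probability of the bounded integrand to convergence of its expectation, which the paper simply writes out as an explicit $\epsilon$--$\delta$ splitting instead of invoking bounded convergence.
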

\begin{proof}
    Let $\epsilon>0$. By Markov's inequality,
    $$\Prob(|X_n|\geq \epsilon) = \Prob(|X_n|\land \epsilon \geq \epsilon)\leq \E\left[\frac{\E[|X_n|\land \epsilon|\mathcal F_n]}{\epsilon}\right]\leq\E\left[\frac{\E[|X_n||\mathcal F_n]}{\epsilon}\land 1\right]$$
    
    Since $\E[|X_n||\mathcal F_n] = o_P(1)$, for each $\delta \in (0,1)$, there exists $n_0\in \mathbb N$ such that for all $n\geq n_0$, we have $\Prob(\E[|X_n||\mathcal F_n]\geq \epsilon\delta/2]\leq \delta/2$. It follows that for all $n\geq n_0$, 
    $$\Prob(|X_n|\geq \epsilon)\leq \frac{\epsilon\delta/2}{\epsilon}\Prob\left(\E[|X_n||\mathcal F_n]\leq \epsilon\delta/2\right)+\Prob\left(\E[|X_n||\mathcal F_n]\geq \epsilon\delta/2\right)\leq \delta.$$
    Since $\epsilon$ and $\delta$ were arbitrary, it follows that $X_n = o_P(1)$.
\end{proof}

\begin{lemma}\label{lem_ConditionalBounded}
    Let $(X_n)_{n\in \mathbb N}$ be a sequence of integrable random variables and $(\mathcal F_n)_{n\in \mathbb N}$ be a sequence of sigma-algebras. If $\E[|X_n||\mathcal F_n] = O_P(1)$, then we have that $X_n = O_P(1)$.
\end{lemma}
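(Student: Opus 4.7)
The plan is to mimic the strategy used for Lemma \ref{lem_ConditionalToZero} but adapted to the tightness formulation of $O_P$ given in the notation section, namely that $X_n = O_P(1)$ iff $\limsup_{n\to\infty}\Prob(|X_n|\ge M)\to 0$ as $M\to\infty$. The key tool is the conditional Markov inequality together with the law of total probability, which allows us to transfer a tightness statement about the conditional expectation $\E[|X_n|\mid \mathcal F_n]$ into one about $|X_n|$ itself.

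Fix $\delta\in(0,1)$. Since by hypothesis $\E[|X_n|\mid \mathcal F_n]=O_P(1)$, there exists $M_0=M_0(\delta)<\infty$ and $n_0\in\mathbb N$ such that for all $n\ge n_0$,
\[
\Prob\bigl(\E[|X_n|\mid\mathcal F_n]>M_0\bigr)<\delta/2.
\]
For arbitrary $M>0$, the conditional Markov inequality gives $\Prob(|X_n|\ge M\mid\mathcal F_n)\le \E[|X_n|\mid\mathcal F_n]/M\wedge 1$. Taking expectations and splitting on the event $\{\E[|X_n|\mid\mathcal F_n]\le M_0\}$ and its complement yields, for all $n\ge n_0$,
\[
\Prob(|X_n|\ge M)\le \frac{M_0}{M}+\frac{\delta}{2}.
\]
Choosing $M\ge 2M_0/\delta$ makes the right hand side at most $\delta$. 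Hence for every $\delta\in(0,1)$ there exists $M=M(\delta)$ and $n_0$ such that $\Prob(|X_n|\ge M)\le \delta$ for all $n\ge n_0$, which is exactly $X_n=O_P(1)$ in the paper's formulation.

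I do not anticipate any real obstacle: the only subtlety is handling the finitely many initial terms $n<n_0$, which is trivial since each $X_n$ is integrable and hence tight, so one can enlarge $M$ if necessary to cover $n<n_0$. The argument is a direct conditional analogue of the standard fact that an $L^1$-bounded sequence is tight, replacing unconditional expectations by conditional ones and using Fubini/tower property in the final averaging step.
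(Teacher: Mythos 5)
Your argument is correct and is essentially the paper's own proof: both apply the (conditional) Markov inequality to the truncated variable $|X_n|\land M$ and then split on the event that $\E[|X_n|\mid\mathcal F_n]$ exceeds a threshold chosen from the $O_P(1)$ hypothesis, yielding a bound of the form $M_0/M+\delta/2$. The only cosmetic difference is the bookkeeping of the constants ($\epsilon M/2$ in the paper versus $M_0$ and $M\ge 2M_0/\delta$ here), and your remark about the initial terms $n<n_0$ is unnecessary under the paper's $\limsup$ formulation of $O_P$ but harmless.
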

\begin{proof}
    Let $\epsilon\in (0,1)$. Since $\E[|X_n||\mathcal F_n] = O_P(1)$, there exists $M > 0$ and $n_0\in\mathbb N$, such that for all $n\geq n_0$
    $$\Prob(\E[|X_n||\mathcal F_n]> \epsilon M/2)\leq \epsilon/2.$$
    By Markov's inequality, for all $n\geq n_0$, 
    \begin{align*}
        \Prob(|X_n|\geq M) &= \Prob(|X_n|\land M\geq M)\\
        &\leq \E\left[\frac{\E[|X_n|\land M|\mathcal F_n]}{M}\right]\\
        &\leq \E\left[\frac{\E[|X_n||\mathcal F_n]}{M}\land 1\right]\\
        &\leq \Prob\left(\frac{\E[|X_n||\mathcal F_n]}{M}> \epsilon/2\right)+\epsilon/2\leq \epsilon.
    \end{align*}
    Since $\epsilon\in (0,1)$ was arbitrary, $X_n = O_P(1)$.
\end{proof}

\section{Further Simulations}\label{sec_FurtherSim}
{
\subsection{Varying IV Strength}\label{sec_VaryIVStrength}
We consider the same simulation setup \eqref{eq_SimSetup} as in Section \ref{sec_Simulations} with two differences:
\begin{enumerate}
    \item We introduce an additional parameter $s$ (for the IV \textit{strength}) to the function $f(z,x)$, namely
    \begin{description}
        \item[(Z lin.)] $f(z, x) = -\sin(x) + s\cdot z$,
        \item[(Z nonlin.)] $f(z, x) = -\sin(x) + s\cdot(\cos(z) + 0.2z)$
    \end{description}
    If the strength $s = 0$, the treatment $D_i$ is not related to the instrument $Z_i$. If $s$ increases, the strength of the instrument $Z_i$ increases.
    \item We replace the definitions of $\delta_i$ and $\epsilon_i$ by 
    \begin{align*}
        \delta_i &\gets 0.7H_i + 0.1 E_{\delta, i},\\
        \epsilon_i &\gets 0.7H_i + 0.1 E_{\epsilon,i}
    \end{align*}
    to make them more correlated, i.e., increase the amount of endogeneity.
\end{enumerate}
We fix $N = 500$ and vary the strength $s\in \{0,0.1,\ldots, 1\}$. Again, we simulate $n_\text{sim} = 1000$ datasets and use generalized additive models (\texttt{gam}) for the nuisance functions, the Epanechnikov kernel, and choose the bandwidth as the normal reference rule times $N^{1/5}/N^{1/7}$. The corresponding plots for the settings (hom.)/(Z lin.) and (hom.)/(Z nonlin.) can be found in Figure \ref{fig_SimHom1DWeak} and the plots for the settings (het.)/(Z lin.) and (het.)/(Z nonlin.) can be found in Figure \ref{fig_SimHet1DWeak}.
\begin{figure}[t]
\centering
\includegraphics[width=0.8\textwidth]{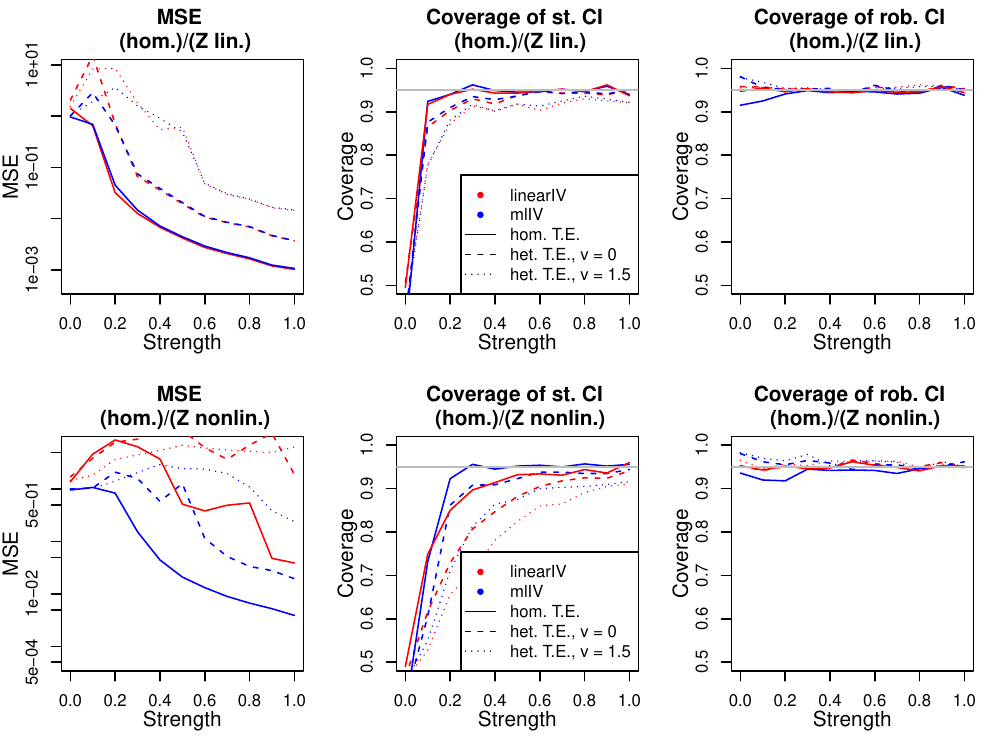}
\caption{Simulation results for setting (hom.)/(Z lin.) (top) and (hom.)/(Z nonlin.) (bottom) {with varying IV strength $s$}. Left panel: mean squared error (MSE) of $\hat \beta$ (hom. T.E.), $\hat\beta(0)$ (het. T.E. for $v = 0$) and $\hat\beta(1.5)$ (het. T.E. for $v = 1.5$). Middle panel: coverage of standard confidence intervals. Right panel: coverage of robust confidence sets. Methods based on (linearIV) are in red and methods based on (mlIV) are in blue. Solid lines correspond to the methods (hom. T.E. linearIV) and (hom. T.E. mlIV). Dashed lines correspond to the methods (het. T.E. linearIV) and (het. T.E. mlIV) for $v = 0$. Dotted lines correspond to the methods (het. T.E. linearIV) and (het. T.E. mlIV) for $v = 1.5$.} 
\label{fig_SimHom1DWeak}
\end{figure}
\begin{figure}[t]
\centering
\includegraphics[width=0.8\textwidth]{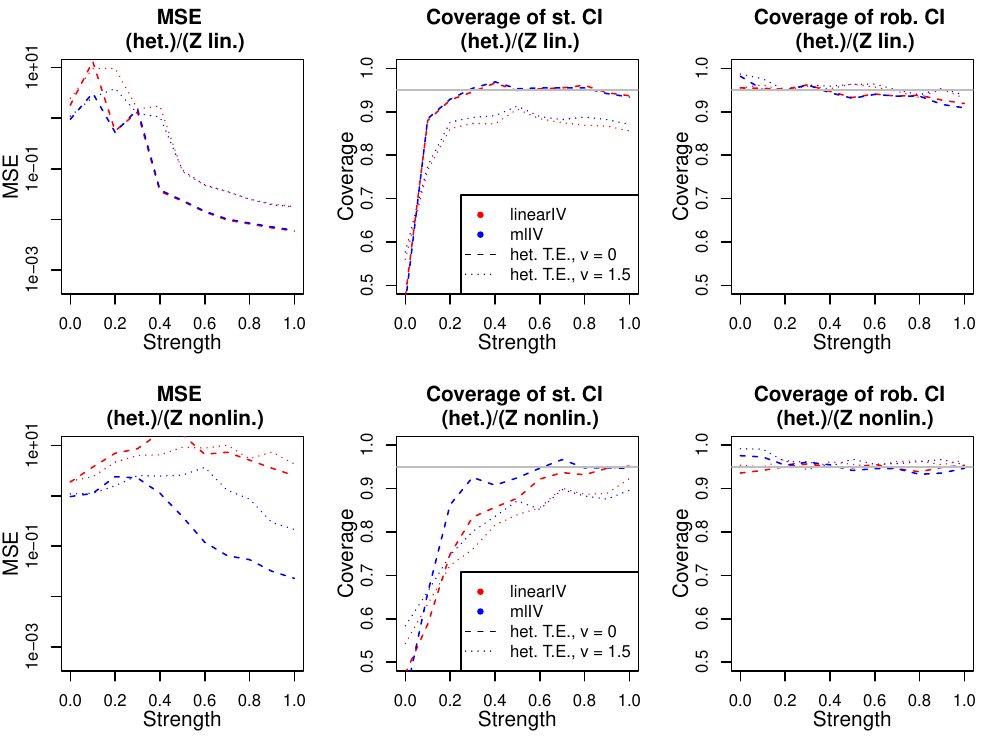}
\caption{Simulation results for setting (het.)/(Z lin.) (top) and (het.)/(Z nonlin.) (bottom) {with varying IV strength $s$}. Left panel: mean squared error (MSE) of  $\hat\beta(0)$ (het. T.E. for $v = 0$) and $\hat\beta(1.5)$ (het. T.E. for $v = 1.5$). Middle panel: coverage of standard confidence intervals. Right panel: coverage of robust confidence sets. Methods based on (linearIV) are in red and methods based on (mlIV) are in blue. Dashed lines correspond to the methods (het. T.E. linearIV) and (het. T.E. mlIV) for $v = 0$. Dotted lines correspond to the methods (het. T.E. linearIV) and (het. T.E. mlIV) for $v = 1.5$.} 
\label{fig_SimHet1DWeak}
\end{figure}
In both figures, we see that the mean squared error tends to decrease with increasing IV strength. Moreover, for small to moderately small IV strengths, the standard confidence intervals suffer from severe undercoverage and for the (Z nonlin.) settings, the undercoverage is worse for the methods based on (linearIV). In contrast, the robust confidence sets exhibit empirical coverage close to the nominal level of 0.95 regardless of the IV strength.
}

\subsection{Multidimensional $X$}\label{sec_MultiDimSim}
{We keep the framework of Section \ref{sec_Simulations} but use 5-dimensional covariates $X_i$ and use boosted regression trees using the \texttt{R} package \texttt{xgboost} \citep{ChenXgboostPaper, ChenXgboostPackage} for the nuisance function estimations. We slightly alter the simulation setup \eqref{eq_SimSetup} and generate data according to}
\begin{align}
    X_i&\sim_{i.i.d.}\mathcal N_5(0, \Sigma),\, \Sigma_{i,j} = 0.5 + 0.5\cdot1\{i = j\},\nonumber\\
    H_i, E_{Z,i}, E_{\delta, i}, E_{\epsilon, i} &\sim_{i.i.d.} \mathcal N(0, 1),\nonumber\\
    Z_i&\gets 0.5 X_{i,1}- 0.5 X_{i,2} + E_{Z,i},\nonumber\\
    \delta_i &\gets 0.7 H_i + 0.7 E_{\delta, i},\label{eq_SimSetup5D}\\
    \epsilon_i &\gets \mathrm{sign}(H_i) - 0.5 + 0.5 E_{\epsilon, i},\nonumber\\
    D_i&\gets f(Z_i, X_i) + \delta_i,\nonumber\\
    V_i&\gets X_{i,1},\nonumber\\
    Y_i&\gets \beta(V_i) D_i + g(X_i) + \epsilon_i.\nonumber
\end{align}
Note that the components of $X_i$ all have variance $1$ and are correlated. Moreover, we change the function $g$ to $g(x) = \tanh(x_1)-x_3$ and change the settings (Z lin.) and (Z nonlin.) to
\begin{description}
    \item[(Z lin.)] $f(z,x) = -\sin(x_1) + x_2 + z$,
    \item[(Z nonlin.)] $f(z,x) = -\sin(x_1) + x_2 + \cos(z) + 0.2 z$.
\end{description}
The remaining simulation setup stays the same as in Section \ref{sec_Simulations}.
Again, we first consider the settings (hom.)/(Z lin.) and (hom.)/(Z nonlin.) and use the methods (hom. T.E. mlIV) and (hom. T.E. linearIV) as well as the methods (het. T.E. mlIV) and (het. T.E. linearIV) for $\hat\beta(0)$ and $\hat\beta(1.5)$. The results are shown in Figure \ref{fig_SimHom5DStrong}.
\begin{figure}[t]
\centering
\includegraphics[width=0.8\textwidth]{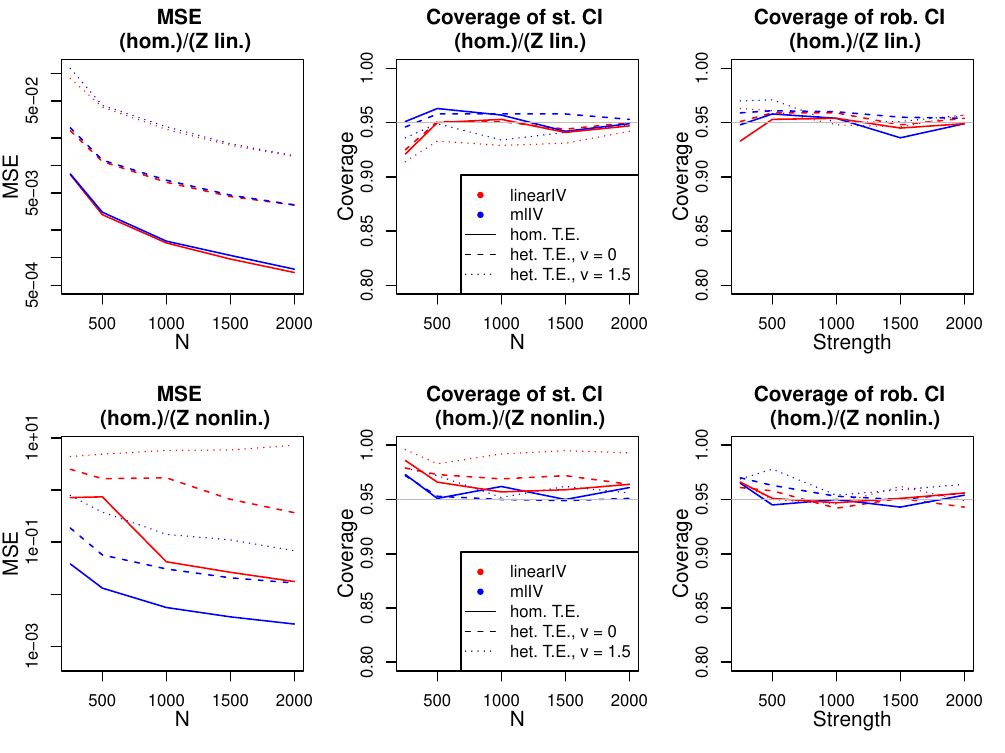}
\caption{Simulation results for setting (hom.)/(Z lin.) (top) and (hom.)/(Z nonlin.) (bottom) for 5-dimensional $X_i$ {and nuisance function estimation using \texttt{xgboost}}. Left panel: mean squared error (MSE) of $\hat \beta$ (hom. T.E.), $\hat\beta(0)$ (het. T.E. for $v = 0$)and $\hat\beta(1.5)$ (het. T.E. for $v = 1.5$). Middle panel: coverage of standard confidence intervals. Right panel: coverage of robust confidence sets. Methods based on (linearIV) are in red and methods based on (mlIV) are in blue. Solid lines correspond to the methods (hom. T.E. linearIV) and (hom. T.E. mlIV). Dashed lines correspond to the methods (het. T.E. linearIV) and (het. T.E. mlIV) for $v = 0$. Dotted lines correspond to the methods (het. T.E. linearIV) and (het. T.E. mlIV) for $v = 1.5$.} 
\label{fig_SimHom5DStrong}
\end{figure}
The results are very similar to Section \ref{sec_Simulations}: in the (Z lin.) settings, the estimators based on (linearIV) exhibit almost identical results to the estimators based on (mlIV), whereas in the (Z nonlin.) settings, the estimators based on (mlIV) have significantly smaller MSEs, while still exhibiting good empirical coverage of both standard and robust confidence intervals.

Secondly, we consider the settings (het.)/(Z lin.) and (het.)/(Z nonlin.) and use the methods (het. T.E. mlIV) and (het. T.E. linearIV) for $\hat\beta(0)$ and $\hat\beta(1.5)$. The results are shown in Figure \ref{fig_SimHet5DStrong}. Again, the results are comparable to those from Section \ref{sec_Simulations}.

\begin{figure}[t]
\centering
\includegraphics[width=0.8\textwidth]{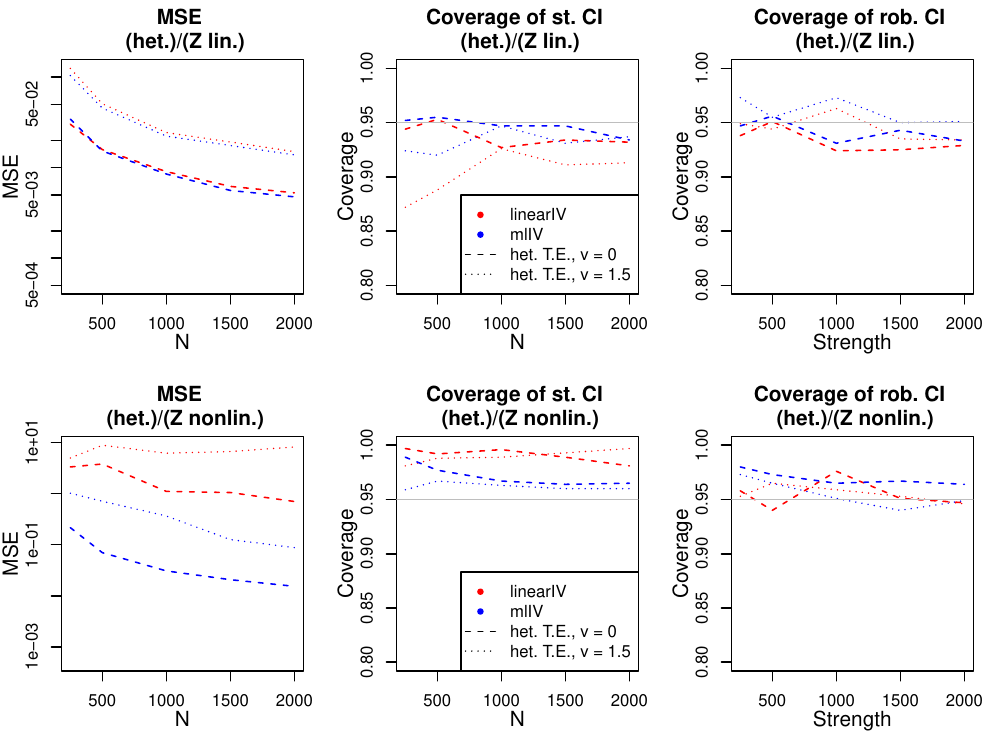}
\caption{Simulation results for setting (het.)/(Z lin.) (top) and (het.)/(Z nonlin.) (bottom) for 5-dimensional $X_i$ {and nuisance function estimation using \texttt{xgboost}}. Left panel: mean squared error (MSE) of  $\hat\beta(0)$ (het. T.E. for $v = 0$) and $\hat\beta(1.5)$ (het. T.E. for $v = 1.5$). Middle panel: coverage of standard confidence intervals. Right panel: coverage of robust confidence sets. Methods based on (linearIV) are in red and methods based on (mlIV) are in blue. Dashed lines correspond to the methods (het. T.E. linearIV) and (het. T.E. mlIV) for $v = 0$. Dotted lines correspond to the methods (het. T.E. linearIV) and (het. T.E. mlIV) for $v = 1.5$.} 
\label{fig_SimHet5DStrong}
\end{figure}

{
Finally, let us consider the influence of the IV strength in the setting with 5-dimensional $X_i$. As in Section \ref{sec_VaryIVStrength}, we include a parameter $s$ corresponding to different IV strengths and modify $\delta_i$ and $\epsilon_i$ to increase their correlation. Otherwise, we keep
the setup from \eqref{eq_SimSetup5D}. The resulting plots for the settings (hom.)/(Z lin.) and (hom.)/(Z nonlin.) can be found in Figure \ref{fig_SimHom5DWeak} and the plots for the settings (het.)/(Z lin.) and (het.)/(Z nonlin.) can be found in Figure \ref{fig_SimHet5DWeak}.
\begin{figure}[t]
\centering
\includegraphics[width=0.8\textwidth]{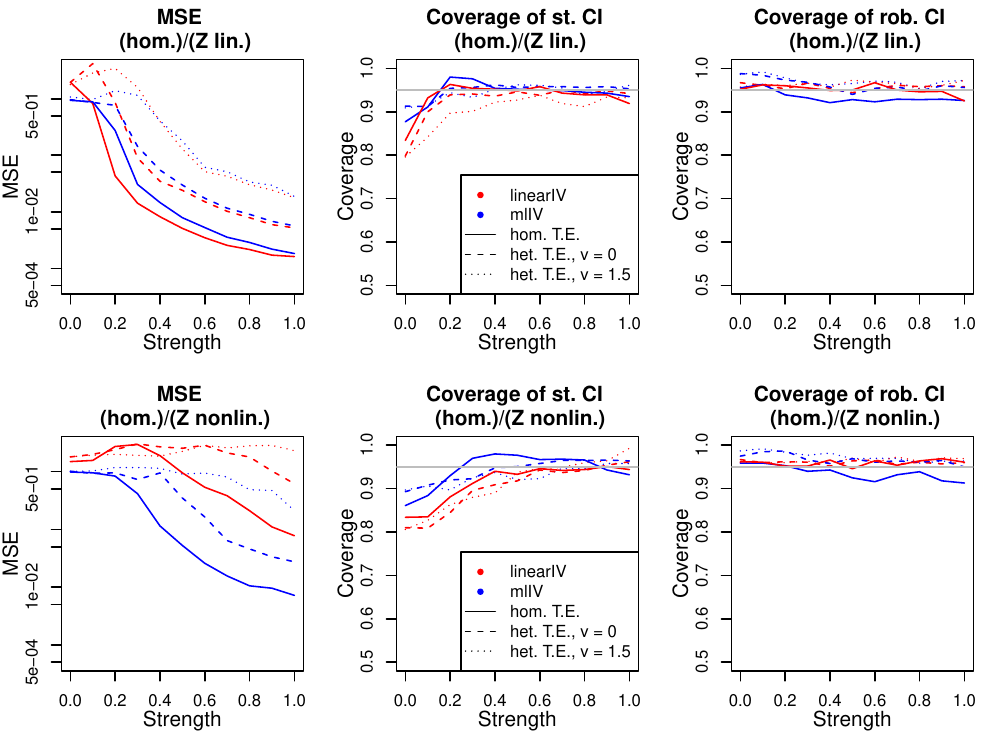}
\caption{{Simulation results for setting (hom.)/(Z lin.) (top) and (hom.)/(Z nonlin.) (bottom) with varying IV strength $s$, 5-dimensional $X_i$ and nuisance function estimation using \texttt{xgboost}. Left panel: mean squared error (MSE) of $\hat \beta$ (hom. T.E.), $\hat\beta(0)$ (het. T.E. for $v = 0$) and $\hat\beta(1.5)$ (het. T.E. for $v = 1.5$). Middle panel: coverage of standard confidence intervals. Right panel: coverage of robust confidence sets. Methods based on (linearIV) are in red and methods based on (mlIV) are in blue. Solid lines correspond to the methods (hom. T.E. linearIV) and (hom. T.E. mlIV). Dashed lines correspond to the methods (het. T.E. linearIV) and (het. T.E. mlIV) for $v = 0$. Dotted lines correspond to the methods (het. T.E. linearIV) and (het. T.E. mlIV) for $v = 1.5$.}} 
\label{fig_SimHom5DWeak}
\end{figure}
\begin{figure}[t]
\centering
\includegraphics[width=0.8\textwidth]{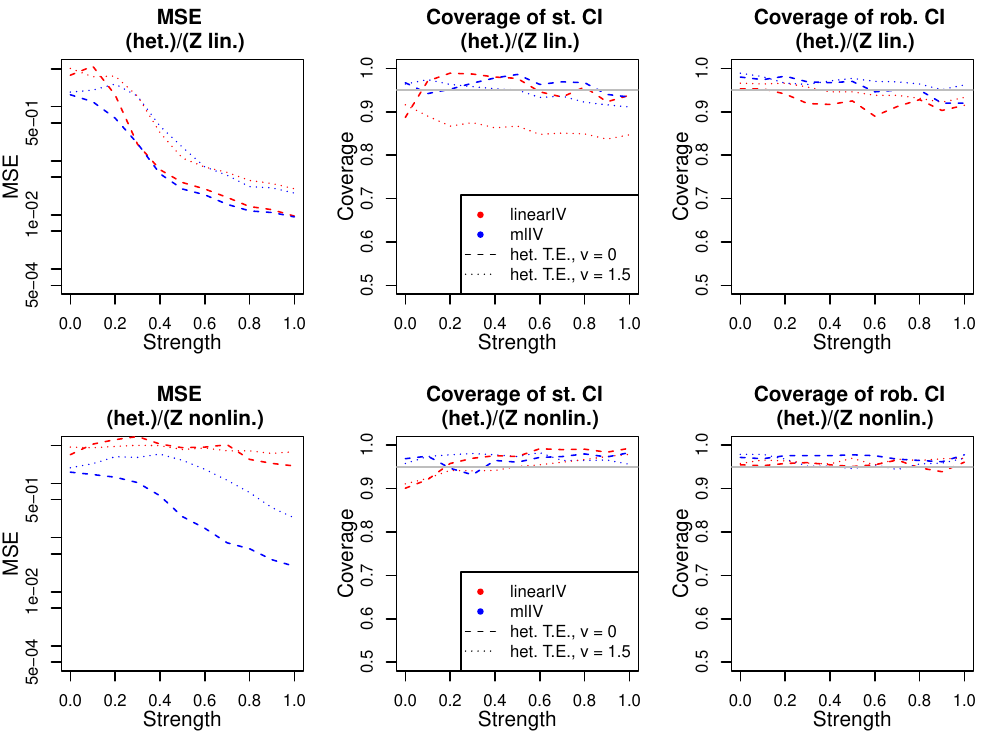}
\caption{{Simulation results for setting (het.)/(Z lin.) (top) and (het.)/(Z nonlin.) (bottom) with varying IV strength $s$, 5-dimensional $X_i$ and nuisance function estimation using \texttt{xgboost}. Left panel: mean squared error (MSE) of  $\hat\beta(0)$ (het. T.E. for $v = 0$) and $\hat\beta(1.5)$ (het. T.E. for $v = 1.5$). Middle panel: coverage of standard confidence intervals. Right panel: coverage of robust confidence sets. Methods based on (linearIV) are in red and methods based on (mlIV) are in blue. Dashed lines correspond to the methods (het. T.E. linearIV) and (het. T.E. mlIV) for $v = 0$. Dotted lines correspond to the methods (het. T.E. linearIV) and (het. T.E. mlIV) for $v = 1.5$.}} 
\label{fig_SimHet5DWeak}
\end{figure}
Whereas for the (hom.) settings, there is still a considerable (though less pronounced) undercoverage of the standard confidence intervals for small IV strength, this effect seems to be almost gone for the (het.) settings.}

\end{appendix}

\bibliography{Literature.bib}

\end{document}